\newtheorem{theorem}{Theorem}[section]
\newtheorem{corollary}[theorem]{Corollary}
\newtheorem{lemma}[theorem]{Lemma}
\newtheorem{proposition}[theorem]{Proposition}
\newtheorem{condition}[theorem]{Condition}
\theoremstyle{definition}
\theoremstyle{remark}
\newtheorem{remark}[theorem]{Remark}
\numberwithin{equation}{section}
\title{Information geometry for approximate Bayesian computation}
\author{Konstantinos Spiliopoulos}
 \address{Department of  Mathematics and Statistics\\
 Boston University, Boston, MA, 02215}
 \email{kspiliop@math.bu.edu}
\thanks{The author would like to thank the Department of Mathematics and Statistics of UMASS Amherst for the hospitality and excellent working environment during the fall of 2016, when most of this work was initiated. In particular, the author would like to thank Markos Katsoulakis and Luc Rey-Bellet for a number of useful and fruitful discussions and suggestions on this project. This work was partially supported by  the National Science Foundation (NSF)
  CAREER award DMS 1550918.}
\date{\today}
\begin{document}

\begin{abstract}
The goal of this paper is to explore the basic Approximate Bayesian Computation (ABC) algorithm via the lens of information theory. ABC is a widely used algorithm in cases where the likelihood of the data is hard to work with or intractable, but one can simulate from it. We use relative entropy ideas to analyze the behavior of the algorithm as a function of the threshold parameter and of the size of the data. Relative entropy here is data driven as it depends on the values of the observed statistics. Relative entropy also allows us to explore the effect of the distance metric and sets up a mathematical framework for  sensitivity analysis allowing to find important directions which could lead to lower computational cost of the algorithm for the same level of accuracy. In addition, we also investigate the bias of the estimators for generic observables as a function of both the threshold parameters and the size of the data. Our analysis provides error bounds on performance for positive tolerances and finite sample sizes. Simulation studies complement and illustrate the theoretical results.
\end{abstract}

\keywords{ABC, relative entropy, information theory, bayesian, uncertainty quantification}
\maketitle

%\end{frontmatter}

%--------------------------------------------------------------%
%--------------------------------------------------------------%

\section{Introduction}

The goal of this paper is to explore Approximate Bayesian Computation (ABC) methods via the lens of information theoretic criteria offering a more geometric point of view on such methods. ABC is a very popular likelihood-free Bayesian inference method in many fields of science ranging from population genetics to genetic evolutions and material science, see for example \cite{TavareGriffits1997,TanakaSisson2006,BlumTran2010,Bortot2007,Beaumont2002,WilkinsonSteiperEtAl2011} to name just a few.

In ABC, the goal is to do Bayesian inference on a parameter $\theta\in \mathbb{R}^{m}$ that has some prior distribution, say $f(\theta)$, given that we observe data $X\in\mathbb{R}^{n}$. The density $f(x|\theta)$ is usually either not available or not in a tractable form.  The data is usually summarized in form of appropriate statistics, $T(X):\mathbb{R}^{n}\mapsto\mathbb{R}^{q}$. The basic idea of ABC is to draw samples of $\theta$ from the prior distribution $f(\theta)$, then propagate the information forward producing simulated values of the sufficient statistic, say $T_{s}$ based on the distribution of $X|\theta$. Then if the simulated $T_{s}$ is close to the observed value $\tau^{*}=T_{o}=T(X)$ of the observed statistic, the sample $\theta$ is accepted, otherwise it is rejected. The procedure is then repeated a sufficiently large number of times allowing estimation of $\theta$ or of observables of interest.

ABC methods have been very popular, mainly due to their generality and simplicity in application. However, the generality and likelihood-free advantage that ABC has, comes with known issues on sensitivity in performance with respect not only to the distance metric $\|\cdot\|$ being chosen to compare simulated versus observed values, $T_{s}$ and $T_{o}$ respectively, but also on the choice of the threshold parameter $\varepsilon$ measuring whether the distance is large or not. In addition, the practical performance of ABC also depends on the choice of the observed statistics. One would like the statistics to be sufficient statistics, but that is usually hard to come up with in practice. It is also known that the basic ABC algorithm typically scales badly with the size of the data, say $n$, even though the use of sufficient statistics (when available) helps reduce the impact of big values of $n$. For these reasons, many clever modifications to the original algorithm have been proposed in the literature, including replacing the rejection mechanism by a kernel weighting the samples and doing localized regression \cite{Beaumont2002},  placing ABC in MCMC context \cite{MarjoramTavare2003}, Gibbs sampling in ABC \cite{WilkinsonSteiperEtAl2011}, using non-linear regression models in ABC \cite{BlumFrancois2010} and others. Excellent tutorials on the different variants of ABC methods are \cite{MarinPudloRobertRyder2012,TurnerVanZandt2012}.

Despite significant developments in methodological approaches, theoretical developments are still in their infancy. Notable exceptions to this are the recent works \cite{BarberVossWebster,BiauGuyader2015} where the behavior of the bias as a function of the threshold parameter for the Euclidean metric is explored and the expected cost of the basic version of the algorithm and  limiting properties are studied.

The focus of our paper is different. We take a geometric point of view based on studying the relative entropy between the underlying and the simulated posterior distribution of $\theta|T(X)$.  The relative entropy between a measure $P$ with respect to a measure $Q$ on a Euclidean space $\mathcal{X}$ is defined to be
\[
H(P|Q)=\int_{\mathcal{X}}\log\frac{dP(x)}{dQ(x)}dP(x)
\]
 when the integral exists. The relative entropy is not a measure but it is a divergence in that it satisfies  $H(P|Q)\geq 0$ and $H(P|Q)=0$ if and only if $P=Q$ almost everywhere. Information theoretic criteria have been successfully used in uncertainty quantification and sensitivity analysis, see for example \cite{ArampatzisKatsReyBellet2016,ArampatzisKats2014,DupuisKatsPantazisPlechac2016,KatsReyBelletWang2016,KatsPantazis2013}.

This geometric point of view focuses on the entire distribution rather than on specific statistics. At the same time, it gives us the flexibility to quantify the effect of the distance metric and of the threshold parameter. For example, we can consider the threshold parameter to be a vector, i.e., $\varepsilon\in\mathbb{R}^{q}_{+}$ instead of a scalar. As we show, the latter allows us to study sensitivity of the relative entropy with respect to the different components of the vector of observed statistics. In practice this means that if certain directions are insensitive then one can afford having larger values of threshold parameters $\varepsilon_{i}$ in such directions. The latter can increase the acceptance region, allowing ABC to run more efficiently.

In this paper we study how the relative entropy of the underlying measure, say $P$, with respect to the simulated measure, say $P^{\varepsilon}$, scales as a function of $\varepsilon$ for $|\varepsilon|$ small enough. Then we can choose $\varepsilon$ such that the acceptance region is maximized for a fixed level of tolerance for the relative entropy.

The contribution of the paper is fourfold. Firstly, in Section \ref{S:REgeneral} we compute the leading order term  of the relative entropy, characterizing explicitly its dependence on the size of the vector $T$, $q$, and the threshold vector $\varepsilon\in\mathbb{R}^{q}_{+}$. Secondly, as discussed, we allow $\varepsilon\in\mathbb{R}^{q}_{+}$ to be multidimensional. We compute the weighted effect in each direction $i=1,\cdots, q$ expressed in terms of the posterior distribution $f(\theta|T(x))$ and make the computation explicit for the exponential family in Section \ref{S:ExponentialFamily}.  We demonstrate in  specific examples that by exploiting the asymmetric effect of each direction, one is able to enlarge the acceptance region of ABC. In Section \ref{S:MeanRejectionRate} we compute the theoretical mean rejection rate of ABC in both cases of  $\varepsilon\in\mathbb{R}^{q}_{+}$ and of  $\varepsilon\in\mathbb{R}^{1}_{+}$ and we compare the two. We find that the mean rejection rate is typically smaller in the case of $\varepsilon\in\mathbb{R}^{q}_{+}$.

As we shall see in Sections \ref{S:Example} and \ref{S:Simulations} allowing a threshold parameter to be different for each statistic $\tau_{i}$ can have measurable effects in the performance of the algorithm. For example, in Figure \ref{F:EllipseBall}, we visually compare the acceptance regions for the ABC algorithm in the asymmetric case (i.e. acceptance region is an ellipse) and the symmetric case (i.e acceptance region is a ball) for the simulated example of Section \ref{S:Simulations}. The improvement of performance is because the first direction (in this example, the mean) has less significance as compared to the second direction (in this example, the variance) and thus by taking advantage of this we can allow for a larger acceptance region. The overall tolerance error is the same in both cases.

\begin{figure}[h!]
\begin{center}
\includegraphics[scale=1, width=8.0 cm, height=6.0 cm, angle=0]{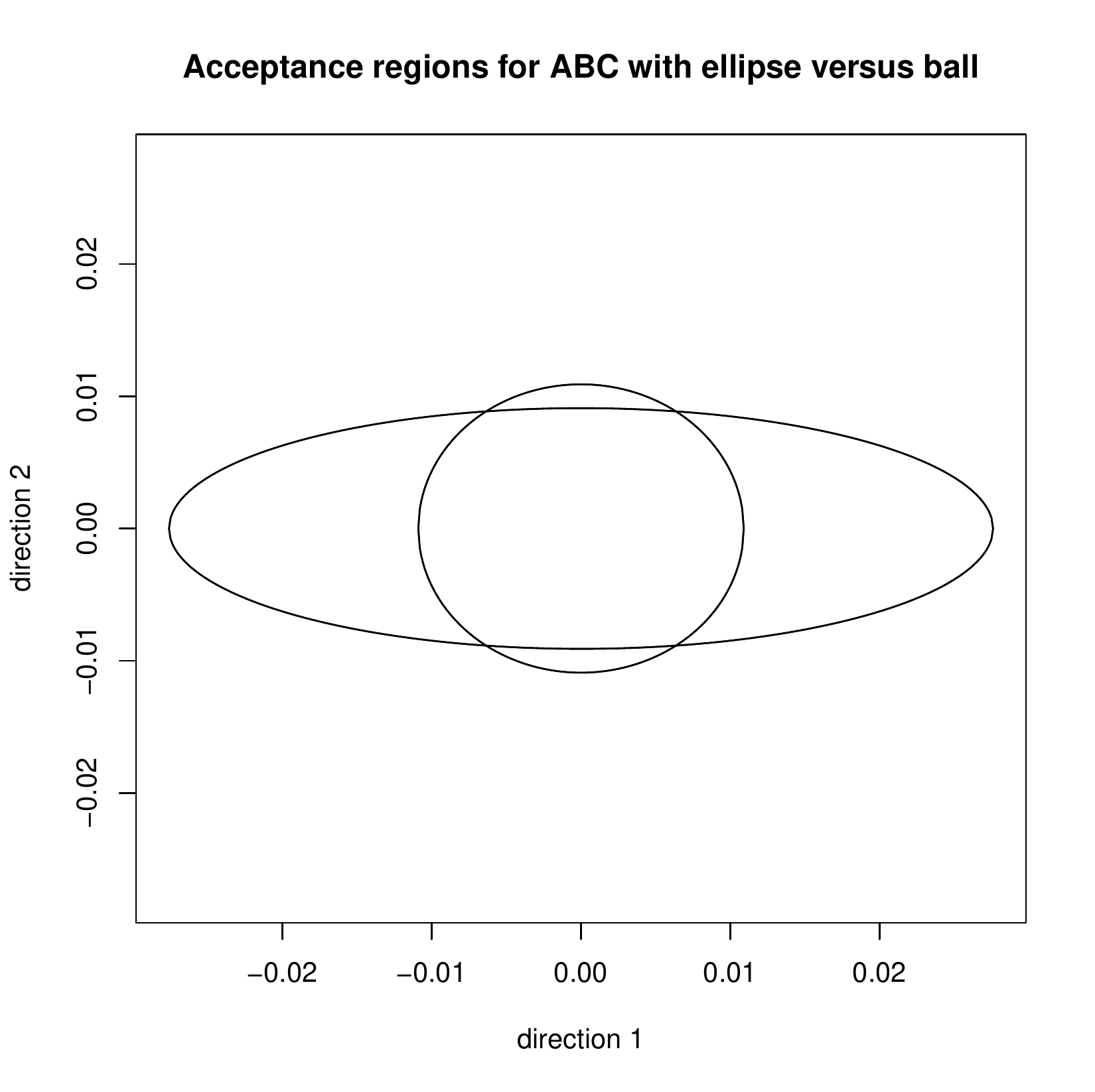}\hspace{2cm}
\caption{Indicative comparison of acceptance regions for ellipse and ball. Plots correspond to the data of Tables \ref{T:Ball} and \ref{T:Ellipse} of Section \ref{S:Simulations} for $\text{tolerance}=0.05$ and $n=1000$.}\label{F:EllipseBall}
\end{center}
\end{figure}

%In addition, considering the mean rejection rate as a measure of performance for each version of the algorithm, Figure \ref{F:ConvergencePlot}  of Section \ref{S:Simulations}, demonstrates that the advantages of using an ellipse as acceptance regions for ABC versus a ball are more evident as the sample size $n$ increases.

Of course, allowing the threshold parameter to be potentially different for each different statistic $\tau_i$, as opposed to forcing them to be the same, will naturally lead to larger acceptance regions. As we discuss in more detail in Section \ref{S:REgeneral} such choices (ellipse vs ball etc) correspond to special cases of distance metrics. Other choices of distance metrics are of course possible. In this paper we focus on explicitly characterizing the performance of ABC under the choice of the Euclidean metric being the distance metric via the lens of relative entropy. The relative entropy formulation gives us a mathematical framework to quantify the effect of the distance metric.

Third contribution of the paper is that we investigate in Section \ref{S:BiasCalculations} the behavior of the bias of estimators as a function of both $\varepsilon\in\mathbb{R}^{q}_{+}$  and of the size of the data $n$. We compute the weighted effect from each direction $i=1,\cdots,q$ on the bias.  We show that the weighted effects on both relative entropy and bias are based on the same weight functions.

Fourthly, in the case of exponential family and for sufficient statistics, we show that the typical order of the relative entropy is $n^{4}\sum_{i=1}^{q}\varepsilon_{i}^{4}/q$, whereas the typical order of the bias is $n^{2}\sum_{i=1}^{q}\varepsilon_{i}^{2}/q$. The latter naturally means that as $n$ increases, one should choose smaller values for $\varepsilon_{i}$. However, as we show in detail, this is a crude upper bound coming from the dominant term of the relative entropy and improvements are usually possible if one takes into account the specific effects of the weight functions. In particular, as we show in Section \ref{S:Example}, the relative entropy in many cases can be of much smaller order in $n$ and the bias of certain observables of interest may even be independent of $n$.

We remark here  that this inverse relation between $n$ and $|\varepsilon|$  is to be expected. After all, as $n\rightarrow\infty$ law of large numbers guarantees that sufficient statistics will converge to the true values of the corresponding parameters that they estimate. Hence, as $n$ gets larger, only samples that result in values for the simulated statistics that are closer to the values of the related observed statistics should be accepted. Essentially, this means that as $n$ increases, $|\varepsilon|$ should decrease as the variability in the estimation gets smaller. In this paper we quantify this inverse relation precisely.

Mathematically speaking, if we pick a level of tolerance, say $\text{tol}$, we study the following problem
\begin{align}
&\textrm{maximize }\varepsilon-\textrm{dependent acceptance region subject to } H(P|P^{\varepsilon})(\tau^{\ast})\leq\text{tol},\label{Eq:OptimizationProblem0}
\end{align}

where $P$ is the posterior distribution with density $f(\theta|T(X)=\tau^{*})$ and $P^{\varepsilon}$ is the simulated posterior distribution with density $f^{\varepsilon}(\theta|T(X)=\tau^{*})$.

It is easy to see though that (\ref{Eq:OptimizationProblem0}) is in general intractable. However, since in ABC the interest is in small values of $|\varepsilon|$ it makes sense to go to the limit as $|\varepsilon|\rightarrow0$. We prove that for $|\varepsilon|$ small enough one has the expansion
\begin{align}
H(P|P^{\varepsilon})(\tau^{\ast})&=\frac{1}{8}\frac{1}{(q+2)^{2}f^{2}(\tau^{\ast})}\mathbb{E}_{f(\theta|\tau^{\ast})}
\left[\left(\sum_{i=1}^{q}\varepsilon_{i}^{2} w_{i}(\tau^{\ast},\theta)\right)^{2}\right]+ \mathcal{O}(|\varepsilon|^{6}),\nonumber
%&\leq\frac{1}{8}\frac{q}{(q+2)^{2}f^{2}(\tau^{\ast})}
%\sum_{i=1}^{q}\varepsilon_{i}^{4} \mathbb{E}_{f(\theta|\tau^{\ast})}\left(w^{2}_{i}(\tau^{\ast},\theta)\right)+ \mathcal{O}(|\varepsilon|^{6}),\nonumber
\end{align}
where the weight $w_{i}(\tau^{\ast},\theta)$ can be computed explicitly and in the case of exponential family one can also see the exact dependence on the number of data points $n$. Notice that the relative entropy now depends on the observed value of $\tau^{\ast}$, i.e., it is data driven.

The latter expansion for the constraint means that if one sets a tolerance level, say tol, and requires that $H(P|P^{\varepsilon})(\tau^{\ast})=\text{tol}$ and that $|\varepsilon|$ is sufficiently small, then one should choose $\varepsilon_{i}$ such that
\begin{align}
&\textrm{maximize }\varepsilon-\textrm{dependent acceptance region subject to }\nonumber\\
&\qquad\frac{1}{8}\frac{1}{(q+2)^{2}f^{2}(\tau^{\ast})}\mathbb{E}_{f(\theta|\tau^{\ast})}
\left[\left(\sum_{i=1}^{q}\varepsilon_{i}^{2} w_{i}(\tau^{\ast},\theta)\right)^{2}\right]\leq\text{tol}.\label{Eq:OptimizationProblem}
\end{align}

The latter expression makes the problem tractable and it also makes it clear that the value of $\varepsilon_{i}$ depends on the value of the weight  factor $\mathbb{E}_{f(\theta|\tau^{\ast})}
 \left(w^{2}_{i}(\tau^{\ast},\theta)\right)$ in the $i^{\text{th}}$ direction and on the cross products $\mathbb{E}_{f(\theta|\tau^{\ast})}
 \left(w_{i}(\tau^{\ast},\theta)w_{j}(\tau^{\ast},\theta)\right)$ for $i\neq j$. This formulation allows us to exploit the asymmetric effect of different directions leading to potentially larger acceptance region for the basic version of the ABC algorithm. %We remark here that the weight factors $\mathbb{E}_{f(\theta|\tau^{\ast})}
% \left(w_{i}(\tau^{\ast},\theta)w_{j}(\tau^{\ast},\theta)\right)$ are quantities that can be monitored online allowing for proper adjustments of the threshold parameters on the fly \textcolor{red}{[K: I am not sure of this is true as the derivatives are also involved. How do we compute those? Maybe using finite differences?]}. We shall pursue the latter point in a follow up work.
Notice also that (\ref{Eq:OptimizationProblem}) is a constrained optimization problem and in general is non-convex. We refer the interested reader to classical sources, such as \cite{Bertsekas2016, Boyd,Luenberger} for more on numerical solution to constrained optimization as well as non-convex problems.

The  paper is organized as follows. In Section \ref{S:REgeneral} we present the general mathematical framework that we consider and the relative entropy computations. We explore the dependence of the relative entropy on the magnitude of $|\varepsilon|$ and characterize the leading order of the expansion as $|\varepsilon|\rightarrow 0$. In Section \ref{S:ExponentialFamily} we make the computations specific for exponential family. This allows to see the dependence on the size of the data $n$, but it also gives a very intuitive interpretation of the weight factors as the variance of the square of the natural parameters in the representation of the density. In Section \ref{S:BiasCalculations} we calculate the asymptotics as $|\varepsilon|\rightarrow 0$ for the bias of the estimator for given observables characterizing the leading order term. We notice that the leading order term depends on the same weight functions $w_{i}(\tau^{*},\theta)$. In Section \ref{S:MeanRejectionRate} we compare analytically the conditional (on the observed statistics) mean rejections rates of the algorithm that uses the ellipse as acceptance region versus the algorithm that uses the ball.  Section \ref{S:Example} contains some examples to demonstrate that advantages can come from monitoring the weight factors and from considering different tolerance levels for different sufficient statistics. Section \ref{S:Simulations} contains the simulations studies that complement and illustrate the theoretical results. Conclusions and potential future work directions are presented in Section \ref{S:Conclusions}.

\section{Information-theoretic criteria for ABC}\label{S:REgeneral}
Let us assume that the parameter of interest is $\theta\in \Theta \subseteq \mathbb{R}^{m}$ and that we observe a  statistic of the data $X\in\mathbb{R}^{d}$, say $T(X)=\tau^{*}: \mathbb{R}^{d}\mapsto\mathbb{R}^{q}$.  In this paper and in order to be able to carry over the mathematics, we assume that we are working with sufficient statistics.

Consider a sampling distribution $f_{X|\Theta}(x|\theta)$ and assume a prior density $f_{\Theta}(\theta)$. In the basic ABC algorithm we decide to accept or reject a sample of $\theta$ from the prior density $f_{\Theta}(\theta)$ based on whether the statistic formed by the sampled data based on $f_{X|\Theta}(x|\theta)$, say $T_{s}=T(X)$, is within a level of tolerance, say $\varepsilon$, from the observed value of the statistic, $\tau^{\ast}=T_{o}$.

Apart from the choice of observed statistics,  the two main free parameters in ABC are (a): the distance metric $\|\cdot\|$  that we use  to measure the distance of $T_{s}$ from $T_{o}$, and (b): the tolerance parameter $\varepsilon$.

The ultimate goal is to compute, for a given $h(\theta)$, quantities of the form $\pi_{X}[h]=\mathbb{E}\left[h(\Theta)|X\right]$. Notice that if $T(\cdot)$ is a sufficient statistic, then we actually have  $\pi_{X}[h]=\mathbb{E}\left[h(\Theta)|T(X)\right]$.

Using the accept-reject mechanism just described, the estimator of the posterior distribution is
\begin{align}
\widehat{\pi_{X}[h]}_{\varepsilon,K}=\frac{1}{K}\sum_{i=1}^{K}h(\theta_{i}^{\varepsilon}),\label{Eq:Estimator}
\end{align}
where $K$ is the number of accepted samples.

Let us assume now that $T(X)$ and $\theta$ have  joint distribution with a well defined everywhere positive density which we denote by $f_{T,\Theta}(\tau,\theta)$. Likewise, let $f_{T}(\tau)$ be the density of the sufficient statistics $T$. We allow for $\varepsilon$ to be a vector in $\mathbb{R}^{q}_{+}$.

Let $P$ be the  distribution of $\Theta|T=\tau^{\ast}$ with density
\begin{equation}
f_{P}(\theta)=f_{\Theta|T}(\theta|\tau^{\ast})=\frac{f_{T,\Theta}(\tau^{\ast},\theta)}{f_{T}(\tau^{\ast})}.\nonumber
\end{equation}

Let $D_{\varepsilon}(\tau^{\ast})$ be the appropriate acceptance set and let $|D_{\varepsilon}(\tau^{\ast})|$ be its volume. The simulated data is distributed according to the conditional distribution of $\theta$ given $\tau\in D_{\varepsilon}(\tau^{\ast})$, say  $P^{\varepsilon}$. This means that in our case $P^{\varepsilon}$ will have density given by
\begin{equation}
f_{P^{\varepsilon}}(\theta)=f^{\varepsilon}(\theta|\tau^{\ast})=\frac{\frac{1}{|D_{\varepsilon}(\tau^{\ast})|}\int_{D_{\varepsilon}(\tau^{\ast})}f_{T,\Theta}(\tau,\theta)d\tau}{Z},\label{Eq:PerturbedDensity}
\end{equation}
where
\begin{equation}
Z=\frac{1}{|D_{\varepsilon}(\tau^{\ast})|}\int_{D_{\varepsilon}(\tau^{\ast})}f_{T}(\tau)d\tau,\nonumber
\end{equation}
is the necessary normalizing constant.  We remark here that $f_{P^{\varepsilon}}(\theta)$ also reflects the accept/reject mechanism that is in the heart of the ABC algorithm. Namely, if the simulated $\theta$ is such that the corresponding statistic $\tau$ is in $D_{\varepsilon}(\tau^{\ast})$, then the specific simulated value of $\theta$ is accepted. Otherwise, it is rejected.  We assume that  $\lim_{\varepsilon\rightarrow 0}D_{\varepsilon}(\tau^{\ast})=\{\tau^{\ast}\}$ for all $\tau^{\ast}\in\mathbb{R}^{q}$. In this paper we focus on Euclidean distances and for $\varepsilon\in \mathbb{R}^{q}_{+}$ we set
\begin{align}
D_{\varepsilon}(\tau^{\ast})&=\{\tau\in\mathbb{R}^{q}:\|\tau-\tau^{\ast}\|^{2}_{A(\varepsilon)}\leq 1\}, \label{Eq:DomainG}
\end{align}
where  $A(\varepsilon)$ is an $\varepsilon-$dependent positive definite matrix such that $\|\tau\|^{2}_{A(\varepsilon)}=\tau^{T}A^{-1}(\varepsilon)\tau$. We assume that for any value of the observed statistics $\tau^{\ast}$, $A(\varepsilon)$ is such that   $\lim_{|\varepsilon|\rightarrow 0}|D_{\varepsilon}(\tau^{\ast})|=0$.
%\textcolor{red}{[K: I think that this is the right assumption, we want the area of the acceptance region to go to zero as $\varepsilon$ goes to zero. Perhaps we shall just assume that $\lim_{|\varepsilon|\rightarrow 0}|A(\varepsilon)|=0$ ?]}

There are of course many ways to choose the matrix $A(\varepsilon)$. One possibility, which is also the usual practice, is to set $A(\varepsilon)=\varepsilon^{2} I$ with $\varepsilon\in\mathbb{R}^{1}_{+}$ in which case we get a ball for acceptance region
\begin{align}
D_{\varepsilon}(\tau^{\ast})&=B_{\varepsilon}(\tau^{\ast})=\{\tau\in\mathbb{R}^{q}:\|\tau-\tau^{\ast}\|^{2}_{I}\leq\varepsilon^{2}\}. \label{Eq:Domain1}
\end{align}
%where in general $A$ is a positive definite matrix such that $\|\tau\|^{2}_{A}=\tau^{T}A^{-1}\tau$. A simple calculation shows that if we define $\hat{\tau}=A^{-1/2}\tau$ and $\hat{\tau}^{*}=A^{-1/2}\tau^{*}$ then
%$\|\tau-\tau^{\ast}\|_{A}=\|\hat{\tau}-\hat{\tau}^{\ast}\|_{I}$, which implies that for analysis purposes one may assume without loss of generality that $A=I$.

Alternatively, we could pick $\varepsilon\in \mathbb{R}^{q}_{+}$, set $A(\varepsilon)=\text{diag}\left(\varepsilon_{1}^{2},\cdots,\varepsilon_{q}^{2}\right)$ and thus consider the ellipse
\begin{align}
D_{\varepsilon}(\tau^{\ast})=\left\{\tau\in\mathbb{R}^{q}:\sum_{i=1}^{q}\left|\frac{\tau_{i}-\tau^{\ast}_{i}}{\varepsilon_{i}}\right|^{2}\leq 1\right\}.\label{Eq:Domain2}
\end{align}

Other choices for the acceptance region $D_{\varepsilon}(\tau^{\ast})$ and distance metrics are of course possible. In this paper we will focus on analyzing the standard choice of a ball (\ref{Eq:Domain1}) and comparing it to the case of the ellipse (\ref{Eq:Domain2}).

For notational convenience we will be writing $f^{\varepsilon}(\theta|\tau)$ for the perturbed model and $f(\theta|\tau)$ for the true model given $\tau$. We compare the distributions $P$ and $P^{\varepsilon}$ via relative entropy
\begin{align}
H(P|P^{\varepsilon})(\tau^{\ast})&=\int\log\frac{dP}{dP^{\varepsilon}}dP=-\int\log\frac{f^{\varepsilon}(\theta|\tau^{\ast})}{f(\theta|\tau^{\ast})}f(\theta|\tau^{\ast})d\theta,\nonumber
\end{align}
which here is naturally data-driven and depends on the observed value of the sufficient statistic $\tau^{\ast}$.

As we discussed in the introduction, we choose the threshold parameters $\varepsilon_{i}$ based on the optimization problem (\ref{Eq:OptimizationProblem0}). Of course, such an optimization problem is difficult to solve as the relative entropy $H(P|P^{\varepsilon})(\tau^{\ast})$ is in general intractable. One can however consider the problem in the limit as $|\varepsilon|\rightarrow 0$. So our first task is to find a more specific expression for the first order approximation to $H(P|P^{\varepsilon})(\tau^{\ast})$ as $|\varepsilon|\rightarrow 0$.

Throughout the paper we will be assuming smoothness of the involved densities. In particular, we have the following assumption.
\begin{condition}\label{A:Regularity}
We assume that $f^{\varepsilon}(\theta|\tau), f(\theta|\tau), f_{T}(\tau)$ are smooth, bounded away from zero and that $f^{\varepsilon}(\theta|\tau^{\ast})$ is at least two times continuously differentiable with respect to $\varepsilon$.
\end{condition}

As $|\varepsilon|\downarrow 0$, it is easy to see that $f^{\varepsilon}(\theta|\tau^{\ast})$ converges to $f(\theta|\tau^{\ast})$.  In particular we have the following proposition.

\begin{proposition}\label{L:RelEntropy_GeneralExpansion}
Assume that Condition \ref{A:Regularity} holds and that the acceptance region $D_{\varepsilon}(\tau^{\ast})$ is given by (\ref{Eq:Domain2}). As $|\varepsilon|\rightarrow 0$ we have that
\begin{align}
H(P|P^{\varepsilon})(\tau^{\ast})
&=\frac{1}{8}\frac{1}{(q+2)^{2}f^{2}(\tau^{\ast})}\mathbb{E}_{f(\theta|\tau^{\ast})}
\left[\left(\sum_{i=1}^{q}\varepsilon_{i}^{2} w_{i}(\tau^{\ast},\theta)\right)^{2}\right]+ \mathcal{O}(|\varepsilon|^{6}),
 \label{Eq:ExpressionForTheEllipse}
\end{align}
where the weight function $w_{i}(\tau^{\ast},\theta)$ is
\begin{align}
w_{i}(\tau^{\ast},\theta)&=\frac{\partial_{\tau_{i}}^{2}f(\tau^{\ast},\theta)}{f(\theta|\tau^{\ast})}-
 \partial_{\tau_{i}}^{2}f(\tau^{\ast})\nonumber\\
 &=\frac{\partial_{\tau_{i}}^{2}f(\theta|\tau^{\ast})}{f(\theta|\tau^{\ast})}f(\tau^{\ast})+2
 \frac{\partial_{\tau_{i}}f(\theta|\tau^{\ast})}{f(\theta|\tau^{\ast})}
 \partial_{\tau_{i}}f(\tau^{\ast}).\label{Eq:weights}
\end{align}
\end{proposition}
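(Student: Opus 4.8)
The plan is to expand the volume-averaged densities entering $f^{\varepsilon}(\theta|\tau^{\ast})$ in powers of $|\varepsilon|$, substitute into the logarithm, and integrate against $f(\theta|\tau^{\ast})$, exploiting two cancellations to isolate the leading $\mathcal{O}(|\varepsilon|^{4})$ term. First I would establish an averaging expansion. For smooth $g$, the change of variables $\tau_{i}=\tau_{i}^{\ast}+\varepsilon_{i}u_{i}$ maps the ellipse (\ref{Eq:Domain2}) onto the unit ball $\{\|u\|\le 1\}$ with $d\tau=(\prod_{i}\varepsilon_{i})\,du$, so that $|D_{\varepsilon}(\tau^{\ast})|=(\prod_{i}\varepsilon_{i})V_{q}$ with $V_{q}$ the volume of the unit ball, and
\[
\frac{1}{|D_{\varepsilon}(\tau^{\ast})|}\int_{D_{\varepsilon}(\tau^{\ast})}g(\tau)\,d\tau=\frac{1}{V_{q}}\int_{\|u\|\le 1}g\bigl(\tau^{\ast}_{1}+\varepsilon_{1}u_{1},\dots,\tau^{\ast}_{q}+\varepsilon_{q}u_{q}\bigr)\,du.
\]
Taylor-expanding $g$ and using that odd moments of the uniform law on the ball vanish together with $\tfrac{1}{V_{q}}\int_{\|u\|\le 1}u_{i}u_{j}\,du=\delta_{ij}/(q+2)$ gives
\[
\frac{1}{|D_{\varepsilon}(\tau^{\ast})|}\int_{D_{\varepsilon}(\tau^{\ast})}g(\tau)\,d\tau=g(\tau^{\ast})+\frac{1}{2(q+2)}\sum_{i=1}^{q}\varepsilon_{i}^{2}\,\partial_{\tau_{i}}^{2}g(\tau^{\ast})+\mathcal{O}(|\varepsilon|^{4}).
\]
I would apply this with $g(\tau)=f(\tau,\theta)$ for the numerator of (\ref{Eq:PerturbedDensity}) and with $g(\tau)=f(\tau)$ for the normalizer $Z$.

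Next I would form the ratio $f^{\varepsilon}(\theta|\tau^{\ast})/f(\theta|\tau^{\ast})$. Setting
\[
a=\frac{1}{2(q+2)}\sum_{i=1}^{q}\varepsilon_{i}^{2}\,\frac{\partial_{\tau_{i}}^{2}f(\tau^{\ast},\theta)}{f(\tau^{\ast},\theta)},\qquad b=\frac{1}{2(q+2)}\sum_{i=1}^{q}\varepsilon_{i}^{2}\,\frac{\partial_{\tau_{i}}^{2}f(\tau^{\ast})}{f(\tau^{\ast})},
\]
for the second-order pieces (both $\mathcal{O}(|\varepsilon|^{2})$), and $a_{4},b_{4}$ for the corresponding fourth-order pieces, division together with $\log(1+x)=x-\tfrac12 x^{2}+\cdots$ yields
\[
\log\frac{f^{\varepsilon}(\theta|\tau^{\ast})}{f(\theta|\tau^{\ast})}=(a-b)+(a_{4}-b_{4})-\tfrac12(a^{2}-b^{2})+\mathcal{O}(|\varepsilon|^{6}),
\]
where, since only even powers of $|\varepsilon|$ survive the ball averaging, the omitted terms are genuinely $\mathcal{O}(|\varepsilon|^{6})$, and $H(P|P^{\varepsilon})(\tau^{\ast})=-\mathbb{E}_{f(\theta|\tau^{\ast})}[\log(f^{\varepsilon}/f)]$.

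The crux is two cancellations, both coming from the identity $\int\partial_{\tau_{i}}^{k}f(\tau,\theta)\,d\theta=\partial_{\tau_{i}}^{k}f(\tau)$ (differentiation under the integral sign, justified by the smoothness and boundedness in Condition \ref{A:Regularity}). Because $f(\theta|\tau^{\ast})=f(\tau^{\ast},\theta)/f(\tau^{\ast})$, integrating $a$ against $f(\theta|\tau^{\ast})$ turns $\partial_{\tau_{i}}^{2}f(\tau^{\ast},\theta)$ into $\partial_{\tau_{i}}^{2}f(\tau^{\ast})$, giving $\mathbb{E}_{f(\theta|\tau^{\ast})}[a]=b$ and hence $\mathbb{E}[a-b]=0$; the same argument applied to the fourth-order coefficients gives $\mathbb{E}[a_{4}-b_{4}]=0$. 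Thus both the $\mathcal{O}(|\varepsilon|^{2})$ and the ``linear'' $\mathcal{O}(|\varepsilon|^{4})$ contributions drop out, leaving $H=\tfrac12\mathbb{E}[a^{2}-b^{2}]+\mathcal{O}(|\varepsilon|^{6})$. Writing $a=(a-b)+b$ with $b$ deterministic and $\mathbb{E}[a-b]=0$ collapses this to $\tfrac12\mathbb{E}[(a-b)^{2}]$. Finally I would recognize, using $f(\tau^{\ast},\theta)=f(\theta|\tau^{\ast})f(\tau^{\ast})$, that $a-b=\tfrac{1}{2(q+2)f(\tau^{\ast})}\sum_{i}\varepsilon_{i}^{2}w_{i}(\tau^{\ast},\theta)$ with $w_{i}$ exactly as in (\ref{Eq:weights}) (the product-rule rewriting follows from expanding $\partial_{\tau_{i}}^{2}[f(\theta|\tau)f(\tau)]$), and substitute to obtain (\ref{Eq:ExpressionForTheEllipse}).

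The main obstacle is the bookkeeping that secures the $\mathcal{O}(|\varepsilon|^{6})$ remainder rather than a merely $\mathcal{O}(|\varepsilon|^{4})$ one: this hinges on the second, less obvious cancellation $\mathbb{E}[a_{4}-b_{4}]=0$ of the fourth-order linear term, which must be shown to vanish before concluding that the entire $\mathcal{O}(|\varepsilon|^{4})$ content is the quadratic term $\tfrac12\mathbb{E}[(a-b)^{2}]$. Alongside this, I would need to justify the interchange of $\tau$-differentiation and $\theta$-integration and the uniform-in-$\theta$ control of the Taylor remainders, both of which rely on Condition \ref{A:Regularity}.
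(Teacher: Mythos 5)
Your proposal is correct and follows essentially the same route as the paper: the paper's Lemma \ref{L:ZeroDerivative} is exactly your averaging expansion (change of variables to the unit ball, vanishing odd moments, and $\frac{1}{|B_{1}(0)|}\int_{B_{1}(0)}u_{i}u_{j}\,du=\delta_{ij}/(q+2)$), followed by the same second-order expansion of the logarithm. The only organizational difference is that the paper disposes of the entire linear term in one stroke via the normalization $\int(f^{\varepsilon}-f)\,d\theta=0$, whereas you verify the two termwise cancellations $\mathbb{E}[a-b]=0$ and $\mathbb{E}[a_{4}-b_{4}]=0$ separately; these are equivalent.
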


\begin{proof}
Using Taylor series expansion on $f^{\varepsilon}(\theta|\tau^{\ast})$ with respect to $\varepsilon$, we have up to leading order for $|\varepsilon|$ sufficiently small
\begin{align}
H(P|P^{\varepsilon})(\tau^{\ast})&=\int\log\frac{dP}{dQ}dP=-\int\log\frac{f^{\varepsilon}(\theta|\tau^{\ast})}{f(\theta|\tau^{\ast})}f(\theta|\tau^{\ast})d\theta\nonumber\\
&=-\int\log\left(1+\frac{f^{\varepsilon}(\theta|\tau^{\ast})-f(\theta|\tau^{\ast})}{f(\theta|\tau^{\ast})}\right)f(\theta|\tau^{\ast})d\theta\nonumber\\
&=-\int\left(\frac{f^{\varepsilon}(\theta|\tau^{\ast})-f(\theta|\tau^{\ast})}{f(\theta|\tau^{\ast})}-\frac{1}{2}\left(\frac{f^{\varepsilon}(\theta|\tau^{\ast})-f(\theta|\tau^{\ast})}{f(\theta|\tau^{\ast})}\right)^{2}+\mathcal{O}(|f^{\varepsilon}-f|^{3})\right)f(\theta|\tau^{\ast})d\theta\nonumber\\
&=\frac{1}{2}\int\left(\left(\frac{f^{\varepsilon}(\theta|\tau^{\ast})-f(\theta|\tau^{\ast})}{f(\theta|\tau^{\ast})}\right)^{2}+\mathcal{O}(|f^{\varepsilon}-f|^{3})\right)f(\theta|\tau^{\ast})d\theta\nonumber\\
&=\frac{1}{2}\int\left(\left(\frac{\varepsilon^{T}\nabla_{\varepsilon}f^{\varepsilon}(\theta|\tau^{\ast})|_{\varepsilon=0}}{f(\theta|\tau^{\ast})}+\frac{1}{2}\frac{\varepsilon^{T}\nabla^{2}_{\varepsilon}f^{\varepsilon}(\theta|\tau^{\ast})|_{\varepsilon=0}\varepsilon}{f(\theta|\tau^{\ast})}\right)^{2}+\mathcal{O}(|\varepsilon|^{6})\right)f(\theta|\tau^{\ast})d\theta\nonumber\\
&=\frac{1}{2}\int\frac{\varepsilon^{T}\nabla_{\varepsilon}f^{\varepsilon}(\theta|\tau^{\ast})|_{\varepsilon=0}\left(\nabla_{\varepsilon}f^{\varepsilon}(\theta|\tau^{\ast})|_{\varepsilon=0}\right)^{T}\varepsilon}{f(\theta|\tau^{\ast})}d\theta\nonumber\\
&\qquad+\frac{1}{2}\int\frac{\varepsilon^{T}\nabla_{\varepsilon}f^{\varepsilon}(\theta|\tau^{\ast})|_{\varepsilon=0}\varepsilon^{T}\nabla^{2}_{\varepsilon}f^{\varepsilon}(\theta|\tau^{\ast})|_{\varepsilon=0}\varepsilon}{f(\theta|\tau^{\ast})}d\theta+\nonumber\\
&\qquad+\frac{1}{8}\int\frac{\left|\varepsilon^{T}\nabla^{2}_{\varepsilon}f^{\varepsilon}(\theta|\tau^{\ast})|_{\varepsilon=0}\varepsilon\right|^{2}}{f(\theta|\tau^{\ast})}d\theta+ \mathcal{O}(|\varepsilon|^{6})\nonumber\\
&=\frac{1}{2}\mathbb{E}_{f(\cdot|\tau^{\ast})}\left[\frac{\varepsilon^{T}\nabla_{\varepsilon}f^{\varepsilon}(\theta|\tau^{\ast})|_{\varepsilon=0}\left(\nabla_{\varepsilon}f^{\varepsilon}(\theta|\tau^{\ast})|_{\varepsilon=0}\right)^{T}\varepsilon}{f^{2}(\theta|\tau^{\ast})}\right]
+\nonumber\\
&\qquad+\frac{1}{2}\mathbb{E}_{f(\cdot|\tau^{\ast})}\left[\frac{\varepsilon^{T}\nabla_{\varepsilon}f^{\varepsilon}(\theta|\tau^{\ast})|_{\varepsilon=0}\varepsilon^{T}\nabla^{2}_{\varepsilon}f^{\varepsilon}(\theta|\tau^{\ast})|_{\varepsilon=0}\varepsilon}{f^{2}(\theta|\tau^{\ast})}\right]+\nonumber\\
&\qquad+\frac{1}{8}\mathbb{E}_{f(\cdot|\tau^{\ast})}\left[\frac{\left|\varepsilon^{T}\nabla^{2}_{\varepsilon}f^{\varepsilon}(\theta|\tau^{\ast})|_{\varepsilon=0}\varepsilon\right|^{2}}{f^{2}(\theta|\tau^{\ast})}\right]+ \mathcal{O}(|\varepsilon|^{6}).\label{Eq:RE_GeneralExpansion1}
\end{align}

To continue we need to study the contribution of the leading order term, which involves $\nabla_{\varepsilon}f^{\varepsilon}(\theta|\tau^{\ast})|_{\varepsilon=0}$ and $\nabla^{2}_{\varepsilon}f^{\varepsilon}(\theta|\tau^{\ast})|_{\varepsilon=0}$.  We consider the case where
$D_{\varepsilon}(\tau^{\ast})$ is an ellipse, as given by (\ref{Eq:Domain2}). %In this case $\varepsilon\in\mathbb{R}^{q}_{+}$ is q-dimensional.
%For notational convenience, let us denote $\tau_{\varepsilon}=(\tau_{1}\varepsilon_{1},\cdots, \tau_{q}\varepsilon_{q})$.

By Lemma \ref{L:ZeroDerivative} in the appendix we have that $\nabla_{\varepsilon}f^{\varepsilon}(\theta|\tau^{\ast})|_{\varepsilon=0}=0$.  In particular as $|\varepsilon|\downarrow 0$, we then obtain that
\begin{align}
H(P|P^{\varepsilon})(\tau^{\ast})&=\frac{1}{8}\mathbb{E}_{f(\theta|T=\tau^{\ast})}\left[\frac{\left|\varepsilon^{T}\nabla^{2}_{\varepsilon}f^{\varepsilon}(\theta|\tau^{\ast})|_{\varepsilon=0}
\varepsilon\right|^{2}}{f^{2}(\theta|\tau^{\ast})}\right]+ \mathcal{O}(|\varepsilon|^{6}).\nonumber
\end{align}
Then, Lemma \ref{L:ZeroDerivative} again, but for the second derivatives now, gives
\begin{align}
H(P|P^{\varepsilon})(\tau^{\ast})&=\frac{1}{8}\frac{1}{(q+2)^{2}f^{2}(\tau^{\ast})}\mathbb{E}_{f(\theta|\tau^{\ast})}
\left[\left(\sum_{i=1}^{q}\varepsilon_{i}^{2} w_{i}(\tau^{\ast},\theta)\right)^{2}\right]+ \mathcal{O}(|\varepsilon|^{6}),\nonumber
%&\leq\frac{1}{8}\frac{q}{(q+2)^{2}f^{2}(\tau^{\ast})}\sum_{i=1}^{q}\varepsilon_{i}^{4}\mathbb{E}_{f(\theta|\tau^{\ast})}
% w^{2}_{i}(\tau^{\ast},\theta)+ \mathcal{O}(|\varepsilon|^{6})\nonumber\\
% \label{Eq:ExpressionForTheEllipse}
\end{align}
where the weight function $w_{i}(\tau^{\ast},\theta)$ is given by (\ref{Eq:weights}). This completes the proof of the proposition.
%\begin{align}
%w_{i}(\tau^{\ast},\theta)&=\frac{\partial_{\tau_{i}}^{2}f(\tau^{\ast},\theta)}{f(\theta|\tau^{\ast})}-
% \partial_{\tau_{i}}^{2}f(\tau^{\ast})\nonumber\\
% &=\frac{\partial_{\tau_{i}}^{2}f(\theta|\tau^{\ast})}{f(\theta|\tau^{\ast})}f(\tau^{\ast})+2
% \frac{\partial_{\tau_{i}}f(\theta|\tau^{\ast})}{f(\theta|\tau^{\ast})}
% \partial_{\tau_{i}}f(\tau^{\ast}).\label{Eq:weights}
%\end{align}
\end{proof}

For completeness purposes we mention that a simple calculation shows that for every $i=1,\cdots,q$
\[
\mathbb{E}_{f(\theta|T=\tau^{\ast})}\left[w_{i}(\tau^{\ast},\theta)\right]=0.
\]

 Let us now comment on these results.
\begin{remark}\label{R:remark1}
\begin{enumerate}
\item{Notice that the formula for the ellipse, i.e., (\ref{Eq:ExpressionForTheEllipse})-(\ref{Eq:weights}), immediately reduces to the formula for the ball, when $\varepsilon_{1}=\cdots=\varepsilon_{q}=\varepsilon$. In particular, in the case of a ball we have the symmetric expression for the leading order term
    {\small\[
    \mathbb{E}_{f(\theta|T=\tau^{\ast})}\left[\frac{\left|\varepsilon^{T}\nabla^{2}_{\varepsilon}f^{\varepsilon}(\theta|\tau^{\ast})|_{\varepsilon=0}
\varepsilon\right|^{2}}{f^{2}(\theta|\tau^{\ast})}\right]=\varepsilon^{4}\frac{1}{(q+2)^{2}f^{2}(\tau^{\ast})}\mathbb{E}_{f(\theta|T=\tau^{\ast})}
\left[\frac{\left(\Delta_{\tau}f(\tau^{\ast},\theta)-
\frac{f(\tau^{\ast},\theta)}{f(\tau^{\ast})} \Delta_{\tau}f(\tau^{\ast})\right)^{2}}{f^{2}(\theta|\tau^{\ast})}\right]
    \]}
    We remark that in this case the effect of the individual weight is not being seen by the relative entropy.
    }
\item{Formula (\ref{Eq:ExpressionForTheEllipse}) makes clear that the typical dependence of the leading order term of the relative entropy on the size of the vector of sufficient statistics $q$ is of the order one.} %Hence, at least the leading order of relative entropy (for $\varepsilon$ sufficiently small and fixed $n$) behaves well with respect to $q$.  }
%\item{As we will see in Section \ref{S:ExponentialFamily}, at least in the case of an exponential family, the typical dependence of the leading order term of the relative entropy on the number of data points, $n$, is $n^{4}$. However, we will also make the point that depending on the values of the weight factors $\mathbb{E}_{f(\theta|\tau^{\ast})}
% \left(w^{2}_{i}(\tau^{\ast},\theta)\right)$ the scaling may be better than $n^{4}$ in some directions.}
\end{enumerate}
\end{remark}

Proposition \ref{L:RelEntropy_GeneralExpansion} quantifies the effect of choosing  $\varepsilon_{i}$ differently for different directions $i$.  %In particular if one chooses an  ellipse as an acceptance region as opposed to a ball, then one is in principle able to choose important directions and do sensitivity analysis.
Depending on the weighted effects of $\mathbb{E}_{f(\theta|\tau^{\ast})}
 \left(w^{2}_{i}(\tau^{\ast},\theta)\right)$ and of the correlations $\mathbb{E}_{f(\theta|\tau^{\ast})}
 \left(w_{i}(\tau^{\ast},\theta)w_{j}(\tau^{\ast},\theta)\right)$ one may be able to obtain information in regards to which of the directions  $i=1,\cdots, q$ are more sensitive. We will see specific examples where this is true in Sections \ref{S:Example} and \ref{S:Simulations}.

\section{The case of exponential family}\label{S:ExponentialFamily}

In this section we focus on the case where $f(x|\theta)$ belongs in the exponential family and our purpose is twofold. Exponential family is a large class of distributions and its specific structure will allow us to get more explicit formulas and also see the effect that the size of the data, $n$, has on the behavior of the relative entropy.

We compute the weights $w_{i}(\tau^{\ast},\theta)$ given by (\ref{Eq:weights}) for an exponential family. Due to the specific structural form that the  involved densities have, we are able to get detailed formulas for $w_{i}(\tau^{\ast},\theta)$. In addition, the size of the data, $n$, appears in a clear way. In particular, in the case of an exponential family, the weights $w_{i}(\tau^{\ast},\theta)$ are of order $n^{2}$ which also implies that the  expansion of relative entropy in $\varepsilon$ is typically correct if the product $n|\varepsilon|$ is small. Note that the dependence on $n$ is also related to the fact that the relative entropy analysis here  compares the distribution of the model $P$ with respect to the model $P^{\varepsilon}$.

It is important to remark here though that in ABC the size of the data $n$ is fixed and one can only change $\varepsilon$. We  show that for fixed $n$ and sufficiently small $|\varepsilon|$, the leading order term of relative entropy will typically scale as $n^{4}\sum_{i=1}^{q}\varepsilon^{4}_{i}/q$. However, as we shall demonstrate in a specific example in Section \ref{S:ExmapleNormal}, this is only a crude generic upper bound and improvements are in general possible.

Since we assume that we have an exponential family we can write that
\begin{align}
f(x|\theta)=\zeta(x)\exp\{\sum_{i=1}^{q}\eta_{i}(\theta)S_{i}(x)-A(\theta)\}.\nonumber
\end{align}

Hence the joint distribution for $n$ i.i.d data $X_{1},\cdots, X_{n}$ is given by
\begin{align}
f(x_{1},\cdots,x_{n}|\theta)=\prod_{j=1}^{n}\zeta(x_{j})\exp\{\sum_{i=1}^{q}\eta_{i}(\theta)Y_{i}(x)-n A(\theta)\},\nonumber
\end{align}
where we have denoted the sufficient statistics by $Y_{i}(x)=\sum_{j=1}^{n}S_{i}(x_{j})$. It is then well known that the distribution of the vector $(Y_{1}(x),\cdots,Y_{q}(x))$ belongs in the exponential family too and the corresponding joint probability density function is given by
\begin{align}
f(y_{1},\cdots,y_{q}|\theta)=R(y_{1},\cdots,y_{q})\exp\{\sum_{i=1}^{q}\eta_{i}(\theta)y_{i}-n A(\theta)\},\nonumber
\end{align}
for an appropriate function $R(y_{1},\cdots,y_{q})$, whose form depends on the specifics of the distribution at hand. Instead of the sufficient statistics  $(Y_{1}(x),\cdots,Y_{q}(x))$ we would be interested in the vector $(T_{1}(x),\cdots,T_{q}(x))$ where for each $i=1,\cdots,q$,  $T_{i}(x)=\frac{Y_{i}(x)}{n}$. This is done in order to single out the effect of the size $n$ of the data set. A simple transformation then shows that the density of $(T_{1}(x),\cdots,T_{q}(x))$ is given by
\begin{align}
f(\tau_{1},\cdots,\tau_{q}|\theta)=n^{q}R(n\tau_{1},\cdots,n\tau_{q})\exp\{n\sum_{i=1}^{q}\eta_{i}(\theta)\tau_{i}-n A(\theta)\}.\nonumber
\end{align}

Now, as it is known, the natural conjugate prior for $\Theta$ takes the form
\begin{align}
f(\theta|\rho,\kappa)= Z(\rho,\kappa)\exp\{\sum_{i=1}^{q}\eta_{i}(\theta)\rho_{i}-\kappa A(\theta)\},\nonumber
\end{align}
where $Z(\rho,\kappa)$ is the appropriate normalizing factor. Then a relatively straightforward computation shows that the posterior distribution of $\Theta$ takes the form
\begin{align}
f_{\Theta|T}(\theta|\tau_{i},i=1,\cdots, q)= Z(\rho+n\tau,\kappa+n)\exp\{\sum_{i=1}^{q}\eta_{i}(\theta)(\rho_{i}+n \tau_{i})-(\kappa+n) A(\theta)\}.\label{Eq:PosteriorExponential}
\end{align}

Essentially, due to conjugancy,  we move from the prior to the posterior by setting $\rho\mapsto \rho + n \tau$ and $\kappa\mapsto \kappa+n$. The marginal likelihood of the data for $\tau=(\tau_{1},\cdots,\tau_{q})$ becomes
\begin{align}
f_{T}(\tau)=n^{q}R(n\tau_{1},\cdots,n\tau_{q})\frac{Z(\rho,\kappa)}{Z(\rho+n\tau,\kappa+n)}.\label{Eq:MarginalExponential}
\end{align}

Next goal is to make Proposition \ref{L:RelEntropy_GeneralExpansion} precise for the case of exponential family. We start with a special case to provide some intuition. The general case is presented in Subsection \ref{SS:WeightFactorsGeneralCase}.
\begin{lemma}\label{L:RelEntropySpecialForm1}
Let us  assume that we are in the special case where  $q=1$, that $n\varepsilon$ is sufficiently small and let us also assume that the function $R(n\tau)=\text{constant}$. Assume that the acceptance region $D_{\varepsilon}(\tau^{\ast})$ is given by (\ref{Eq:Domain1}) (or equivalently by (\ref{Eq:Domain2}) since $q=1$). Then, we have that
\begin{align}
H(P|P^{\varepsilon})(\tau^{\ast})&= (n\varepsilon)^{4}\frac{1}{72} \text{Var}_{f(\theta|\tau^{\ast})} \left(\eta^{2}(\theta)\right)+ \mathcal{O}(|n\varepsilon|^{6}).\label{Eq:REsmall_n_epsilon}
\end{align}
\end{lemma}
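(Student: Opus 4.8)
The plan is to specialize the general expansion of Proposition \ref{L:RelEntropy_GeneralExpansion} to $q=1$ and then evaluate the single weight $w_{1}(\tau^{\ast},\theta)$ explicitly using the exponential-family structure. With $q=1$ the prefactor is $(q+2)^{2}=9$, so the proposition immediately reduces to
\[
H(P|P^{\varepsilon})(\tau^{\ast}) = \frac{\varepsilon^{4}}{72\, f^{2}(\tau^{\ast})}\, \mathbb{E}_{f(\theta|\tau^{\ast})}\!\left[w_{1}^{2}(\tau^{\ast},\theta)\right] + \mathcal{O}(|\varepsilon|^{6}),
\]
and the task is to identify the expectation with $n^{4} f^{2}(\tau^{\ast})\,\text{Var}_{f(\theta|\tau^{\ast})}\!\left(\eta^{2}(\theta)\right)$, after which the factors $f^{2}(\tau^{\ast})$ cancel and the claim follows.

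First I would compute the $\tau$-derivatives of the joint density. The crucial simplification from the hypothesis $R(n\tau)=\text{constant}$ is that the entire $\tau$-dependence of $f(\tau,\theta)$ sits in the exponential factor $\exp\{n\eta(\theta)\tau\}$ (recall the form of $f_{T,\Theta}$ obtained by multiplying the density of $T$ given $\theta$ with the conjugate prior). Hence $\partial_{\tau}f(\tau,\theta)=n\eta(\theta)f(\tau,\theta)$ and $\partial_{\tau}^{2}f(\tau,\theta)=n^{2}\eta^{2}(\theta)f(\tau,\theta)$. Dividing by $f(\theta|\tau^{\ast})$ and using $f(\tau^{\ast},\theta)=f(\theta|\tau^{\ast})f(\tau^{\ast})$ identifies the first term of $w_{1}$ in (\ref{Eq:weights}) as $n^{2}\eta^{2}(\theta)f(\tau^{\ast})$. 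For the second term I would differentiate under the integral sign in $f_{T}(\tau)=\int f(\tau,\theta)\,d\theta$, giving $\partial_{\tau}^{2}f(\tau^{\ast})=n^{2}f(\tau^{\ast})\,\mathbb{E}_{f(\theta|\tau^{\ast})}[\eta^{2}(\theta)]$.

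Assembling the two pieces produces the centered expression
\[
w_{1}(\tau^{\ast},\theta) = n^{2} f(\tau^{\ast})\Big(\eta^{2}(\theta) - \mathbb{E}_{f(\theta|\tau^{\ast})}[\eta^{2}(\theta)]\Big),
\]
so that $\mathbb{E}_{f(\theta|\tau^{\ast})}[w_{1}^{2}] = n^{4} f^{2}(\tau^{\ast})\,\text{Var}_{f(\theta|\tau^{\ast})}\!\left(\eta^{2}(\theta)\right)$. Substituting into the reduced expansion above cancels $f^{2}(\tau^{\ast})$ and combines the powers of $n$ and $\varepsilon$ into $(n\varepsilon)^{4}$, yielding the claimed leading term $\tfrac{1}{72}(n\varepsilon)^{4}\,\text{Var}_{f(\theta|\tau^{\ast})}\!\left(\eta^{2}(\theta)\right)$.

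The step I expect to require the most care is the bookkeeping of the remainder: Proposition \ref{L:RelEntropy_GeneralExpansion} only delivers $\mathcal{O}(|\varepsilon|^{6})$, whereas the statement asserts $\mathcal{O}(|n\varepsilon|^{6})$, and this is precisely where the hypothesis that $n\varepsilon$ be small must be used. Each additional $\tau$-derivative of the exponential factor contributes one more power of $n\eta(\theta)$, so the neglected sixth-order contributions carry an overall factor $n^{6}$; the genuine small parameter controlling the whole expansion is therefore the product $n\varepsilon$ rather than $\varepsilon$ alone. I would make this rigorous by tracking the $n$-scaling of the next term in the Taylor expansion of $f^{\varepsilon}(\theta|\tau^{\ast})$ underlying the proof of the proposition, confirming that every factor of $\varepsilon$ beyond the leading order is accompanied by a factor of $n$.
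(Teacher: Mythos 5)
Your proposal is correct, but it takes a different route from the paper's own proof of this lemma. The paper proves Lemma \ref{L:RelEntropySpecialForm1} by a direct, self-contained computation: it writes $f^{\varepsilon}(\theta|\tau^{\ast})/f(\theta|\tau^{\ast})$ explicitly as a ratio of integrals over $B_{1}(0)$ involving $e^{n\eta(\theta)\tau_{\varepsilon}}$ and the normalizing factor $Z^{-1}(\rho+n\tau_{\varepsilon}+n\tau^{\ast},\kappa+n)$, Taylor expands both the integrand and the logarithm in powers of $n\varepsilon$, and then uses the identity $Z\,\partial^{(k)}_{1}Z^{-1}=\mathbb{E}_{f(\theta|\tau^{\ast})}\eta^{k}(\theta)$ to watch the $(n\varepsilon)^{2}$ terms cancel exactly and the $(n\varepsilon)^{4}$ terms assemble into $\tfrac{1}{72}\mathrm{Var}_{f(\theta|\tau^{\ast})}(\eta^{2}(\theta))$. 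You instead specialize Proposition \ref{L:RelEntropy_GeneralExpansion} to $q=1$ and evaluate the weight $w_{1}$, obtaining $w_{1}(\tau^{\ast},\theta)=n^{2}f(\tau^{\ast})\bigl(\eta^{2}(\theta)-\mathbb{E}_{f(\theta|\tau^{\ast})}\eta^{2}(\theta)\bigr)$; this is exactly the computation the paper performs later in Section \ref{SS:WeightFactorsGeneralCase} for general $q$ and general $R$ (see (\ref{Eq:WeightFactorsExponentialFamily}) and Corollary \ref{C:RelEntropy_ExpFamilyGeneral}, which explicitly notes that the $q=1$, $R$ constant case recovers this lemma). Your route is shorter and exposes the structural reason for the answer (the weight is the centered $\eta^{2}$), whereas the paper's direct expansion serves as an independent consistency check on Proposition \ref{L:RelEntropy_GeneralExpansion}, verifying by hand that the second-order terms cancel. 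Your identification of the delicate point --- upgrading the remainder from $\mathcal{O}(|\varepsilon|^{6})$ with $n$-dependent constants to $\mathcal{O}(|n\varepsilon|^{6})$ --- is exactly right; the paper handles this carefully only in the proof of Lemma \ref{L:RelEntropy_ExpFamilyGeneral}, where Laplace asymptotics under Condition \ref{A:ConditionR2} are used to show that the moments $\mathbb{E}_{f(\theta|\tau^{\ast})}\eta^{k}(\theta)$ entering the higher-order terms are bounded uniformly in $n$, so to fully close your argument you would need to invoke (or assume) the same uniform moment bounds.
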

\begin{proof}

Let us denote $\tau_{\varepsilon}=\tau \varepsilon$. Based on the expression (\ref{Eq:PosteriorExponential}) we can write
\begin{align}
f^{\varepsilon}(\theta|\tau^{\ast})&=\frac{\int_{B_{1}(0)}f(\tau_{\varepsilon}+\tau^{\ast},\theta)d\tau}{\int_{B_{1}(0)}f(\tau_{\varepsilon}+\tau^{\ast})d\tau}=\frac{\int_{B_{1}(0)}f(\theta|\tau_{\varepsilon}+\tau^{\ast})f(\tau_{\varepsilon}+\tau^{\ast})d\tau}{\int_{B_{1}(0)}f(\tau_{\varepsilon}+\tau^{\ast})d\tau}\nonumber\\
&=f(\theta|\tau^{\ast})\frac{\int_{B_{1}(0)}R(n\tau_{\varepsilon}+n\tau^{\ast})e^{n \eta(\theta)\cdot \tau_{\varepsilon} }d\tau}{Z(\rho+n\tau^{\ast},\kappa+n)\int_{B_{1}(0)}R(n\tau_{\varepsilon}+n\tau^{\ast})Z^{-1}(\rho+n\tau_{\varepsilon}+n\tau^{\ast},\kappa+n)d\tau}. \nonumber
\end{align}

Hence, we obtain that the relative entropy satisfies
\begin{align}
H(P|P^{\varepsilon})(\tau^{\ast})&=-\int\log\frac{f^{\varepsilon}(\theta|\tau^{\ast})}{f(\theta|\tau^{\ast})}f(\theta|\tau^{\ast})d\theta\nonumber\\
&=-\int\log\frac{\int_{B_{1}(0)}R(n\tau_{\varepsilon}+n\tau^{\ast})e^{n \eta(\theta)\cdot\tau_{\varepsilon}}d\tau}{Z(\rho+n\tau^{\ast},\kappa+n)\int_{B_{1}(0)}R(n\tau_{\varepsilon}+n\tau^{\ast})Z^{-1}(\rho+n\tau_{\varepsilon}+n\tau^{\ast},\kappa+n)d\tau}f(\theta|\tau^{\ast})d\theta\nonumber\\
&=-\int\log\left(\int_{B_{1}(0)}R(n\tau_{\varepsilon}+n\tau^{\ast}) e^{n \eta(\theta)\cdot\tau_{\varepsilon}}d\tau\right)f(\theta|\tau^{\ast})d\theta\nonumber\\
&\qquad+\log\left(Z(\rho+n\tau^{\ast},\kappa+n)\int_{B_{1}(0)}R(n\tau_{\varepsilon}+n\tau^{\ast}) Z^{-1}(\rho+n\tau_{\varepsilon}+n\tau^{\ast},\kappa+n)d\tau\right). \nonumber
\end{align}

Under the assumptions of the lemma we then have
\begin{align}
H(P|P^{\varepsilon})(\tau^{\ast})&=-\frac{1}{6}(n\varepsilon)^{2} \mathbb{E}_{f(\theta|\tau^{\ast})} \eta^{2}(\theta)+\frac{2}{5!3} (n\varepsilon)^{4} \mathbb{E}_{f(\theta|\tau^{\ast})} \eta^{4}(\theta) \nonumber\\
&\quad+\frac{Z(\rho+n\tau^{\ast},\kappa+n)}{2}\left(\frac{1}{3}(n\varepsilon)^{2}\partial^{2}_{1}Z^{-1}(\rho+n\tau^{\ast},\kappa+n)+\frac{2}{5!}(n\varepsilon)^{4}\partial^{4}_{1}Z^{-1}(\rho+n\tau^{\ast},\kappa+n)
\right)\nonumber\\
&\quad-\frac{1}{72}(n\varepsilon)^{4}\left(Z(\rho+n\tau^{\ast},\kappa+n)\partial^{2}_{1}Z^{-1}(\rho+n\tau^{\ast},\kappa+n)\right)^2+ \mathcal{O}(|n\varepsilon|^{6}). \nonumber
\end{align}

Next, we notice that
\begin{align}
Z(\rho+n\tau^{\ast},\kappa+n)\partial^{(k)}_{1}Z^{-1}(\rho+n\tau^{\ast},\kappa+n)&= \mathbb{E}_{f(\theta|\tau^{\ast})}
 \eta^{k}(\theta). \nonumber
\end{align}

Hence, we can write
\begin{align}
H(P|P^{\varepsilon})(\tau^{\ast})&=-\frac{1}{6}(n\varepsilon)^{2} \mathbb{E}_{f(\theta|\tau^{\ast})} \eta^{2}(\theta)+\frac{2}{5!3} (n\varepsilon)^{4} \mathbb{E}_{f(\theta|\tau^{\ast})} \eta^{4}(\theta) \nonumber\\
&\quad+\left(\frac{1}{6}(n\varepsilon)^{2}\mathbb{E}_{f(\theta|\tau^{\ast})} \eta^{2}(\theta)+\frac{2}{5!}(n\varepsilon)^{4}\mathbb{E}_{f(\theta|\tau^{\ast})} \eta^{4}(\theta)
\right)-\frac{1}{72}(n\varepsilon)^{4}\left(\mathbb{E}_{f(\theta|\tau^{\ast})} \eta^{2}(\theta)\right)^2+ \mathcal{O}(|n\varepsilon|^{6}) \nonumber\\
%&= (n\varepsilon)^{4}\frac{1}{72}\left( \mathbb{E}_{f(\theta|\tau^{\ast})} \eta^{4}(\theta) -\left(\mathbb{E}_{f(\theta|\tau^{\ast})} \eta^{2}(\theta)\right)^2\right)+ \mathcal{O}(|n\varepsilon|^{6})\nonumber\\
&= (n\varepsilon)^{4}\frac{1}{72} \text{Var}_{f(\theta|\tau^{\ast})} \left(\eta^{2}(\theta)\right)+ \mathcal{O}(|n\varepsilon|^{6}),\nonumber%\label{Eq:REsmall_n_epsilon}
\end{align}
completing the proof of the lemma.
\end{proof}

Lemma \ref{L:RelEntropySpecialForm1} makes it clear that, at least in the case of exponential family, the typical order of the relative entropy is $(n\varepsilon)^{4}$ and that for a given small tolerance, say $\text{tol}$, one can choose
\[
(n\varepsilon)^{4}\frac{1}{72} \text{Var}_{f(\theta|\tau^{\ast})} \left(\eta^{2}(\theta)\right)\leq \text{tol}.\]
We also see that there is a weight factor that takes the form $\frac{1}{72}\text{Var}_{f(\Theta|\tau^{\ast})} \left(\eta^{2}(\theta)\right)$. In Section \ref{SS:WeightFactorsGeneralCase} we generalize the computation to the $q\neq 1$ case and for $R(\cdot)\neq \text{constant}$ based on the expression given by Proposition \ref{L:RelEntropy_GeneralExpansion}. Of course the expression derived in Section \ref{SS:WeightFactorsGeneralCase} coincides with (\ref{Eq:REsmall_n_epsilon}) with $q=1$ and for $R(\tau)=\text{constant}$, but it also shows that the individual weights may have non-trivial contribution, a point made clearer in Section \ref{S:Example}.

\subsection{Calculation of the weight factors in the general case.} \label{SS:WeightFactorsGeneralCase}
The goal of this section is to derive the analogous statement to Lemma \ref{L:RelEntropySpecialForm1} in the general case $q\neq 1$ and $R(n\tau)\neq\text{constant}$. We start with some assumptions. For a given function $h:\mathbb{R}^{q+1}\mapsto \mathbb{R}$, let $\partial_{i}h$ be the partial derivative in the $i^{\text{th}}$ direction.

\begin{condition}\label{A:ConditionR}
For each $k\in\mathbb{N}$ and for all $i=1,\cdots, q$ we have that $\partial^{(k)}_{i} \log R(n\tau^{\ast})$ is bounded uniformly on $n$ as $n\rightarrow\infty$.
\end{condition}

Condition \ref{A:ConditionR} is trivially true for the normal distribution for example. We also require
\begin{condition}\label{A:ConditionR2}
Define the function $g(\theta)=\sum_{i=1}^{q}\eta_{i}(\theta)\tau_{i}-A(\theta)$. Let us assume that the function g is smooth and has a maximum only at one interior non-degenerate critical point, say $\theta^{*}$, i.e., that $\nabla g(\theta^{*})=0$ and $\text{det}\nabla^{2}g(\theta^{*})<0$.
\end{condition}

\begin{lemma}\label{L:RelEntropy_ExpFamilyGeneral}
Assume that the acceptance region is given by by (\ref{Eq:Domain2}). Under Conditions \ref{A:Regularity}, \ref{A:ConditionR} and \ref{A:ConditionR2} we have as $|n\varepsilon|\rightarrow 0$,
{\small\begin{align}
&H(P|P^{\varepsilon})(\tau^{\ast})=\frac{n^{4}}{8}\frac{1}{(q+2)^{2}}\mathbb{E}_{f(\theta|\tau^{\ast})}
\left(\sum_{i=1}^{q}\varepsilon_{i}^{2}\left(\eta^{2}_{i}(\theta)-\mathbb{E}_{f(\theta|\tau^{\ast})} \eta^{2}_{i}(\theta)+2\partial_{i} \log R(n\tau^{\ast})\left(\eta_{i}(\theta)-\mathbb{E}_{f(\theta|\tau^{\ast})} \eta_{i}(\theta)\right)
\right)\right)^{2}\nonumber\\
&\qquad\qquad+ \mathcal{O}(|n\varepsilon|^{6}).%\nonumber\\
%&\leq\frac{n^{4}}{8}\frac{q}{(q+2)^{2}}\sum_{i=1}^{q}\varepsilon_{i}^{4}\left[\mathbb{E}_{f(\theta|\tau^{\ast})}
%\left(\eta^{2}_{i}(\theta)-\mathbb{E}_{f(\theta|\tau^{\ast})} \eta^{2}_{i}(\theta)
%\right)^{2}+4\left(\partial_{i} \log R(n\tau^{\ast})\right)^{2}\mathbb{E}_{f(\theta|\tau^{\ast})}\left(\eta_{i}(\theta)-\mathbb{E}_{f(\theta|\tau^{\ast})} \eta_{i}(\theta)\right)^{2}\right]\nonumber\\
%&\qquad\qquad+ \mathcal{O}(|\varepsilon|^{6})\nonumber\\
%&=\frac{n^{4}}{8}\frac{q}{(q+2)^{2}}\sum_{i=1}^{q}\varepsilon_{i}^{4}\left[\mathbb{E}_{f(\theta|\tau^{\ast})}
%\eta^{4}_{i}(\theta)-\left(\mathbb{E}_{f(\theta|\tau^{\ast})} \eta^{2}_{i}(\theta)\right)^{2}
%\right]+ \mathcal{O}(|\varepsilon|^{6}),\nonumber\\
%&=\frac{n^{4}}{8}\frac{q}{(q+2)^{2}}\sum_{i=1}^{q}\varepsilon_{i}^{4}\left[\text{Var}_{f(\theta|\tau^{\ast})}\left(
%\eta^{2}_{i}(\theta)\right)+4\left(\partial_{i} \log R(n\tau^{\ast})\right)^{2}\text{Var}_{f(\theta|\tau^{\ast})}\left(
%\eta_{i}(\theta)\right)\right]+ \mathcal{O}(|\varepsilon|^{6}).
\label{Eq:ExpressionForTheEllipse2}
\end{align}}
\end{lemma}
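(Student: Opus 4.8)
The plan is to take the general expansion of Proposition \ref{L:RelEntropy_GeneralExpansion} as the starting point and merely evaluate the weight functions $w_i(\tau^{\ast},\theta)$ of (\ref{Eq:weights}) explicitly for the conjugate exponential family. Since Proposition \ref{L:RelEntropy_GeneralExpansion} has already reduced the relative entropy to an expectation involving the second-order $\tau$-derivatives of $f(\theta|\tau)$ and of $f_{T}(\tau)$, no further $\varepsilon$-expansion is required; what remains is to differentiate the explicit densities (\ref{Eq:PosteriorExponential}) and (\ref{Eq:MarginalExponential}) and to track the powers of $n$ that this produces.

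First I would set $\alpha_{i}=\rho_{i}+n\tau_{i}$ and $\beta=\kappa+n$, so that $\partial_{\tau_{i}}=n\,\partial_{\alpha_{i}}$ and each differentiation in $\tau_{i}$ brings down one factor of $n$. The key structural fact is that $\log Z(\alpha,\beta)$ is, up to sign, a cumulant generating function for the natural parameters $\eta_{i}(\theta)$ under the posterior: from $Z^{-1}(\alpha,\beta)=\int \exp\{\sum_{j}\eta_{j}(\theta)\alpha_{j}-\beta A(\theta)\}\,d\theta$ one reads off $\partial_{\alpha_{i}}\log Z=-\mathbb{E}_{f(\theta|\tau)}[\eta_{i}]$ and $\partial_{\alpha_{i}}^{2}\log Z=-\text{Var}_{f(\theta|\tau)}(\eta_{i})$. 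Differentiating $\log f(\theta|\tau)$ then yields the clean identities $\partial_{\tau_{i}}\log f(\theta|\tau)=n\bigl(\eta_{i}(\theta)-\mathbb{E}[\eta_{i}]\bigr)$ and $\partial_{\tau_{i}}^{2}\log f(\theta|\tau)=-n^{2}\,\text{Var}(\eta_{i})$, the latter being independent of $\theta$, while from (\ref{Eq:MarginalExponential}) one gets $\partial_{\tau_{i}}\log f_{T}(\tau)=n\,\partial_{i}\log R(n\tau)+n\,\mathbb{E}[\eta_{i}]$.

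Next I would assemble $w_{i}(\tau^{\ast},\theta)$ from the second line of (\ref{Eq:weights}) using $\partial_{\tau_{i}}^{2}f/f=\partial_{\tau_{i}}^{2}\log f+(\partial_{\tau_{i}}\log f)^{2}$. Dividing by $f(\tau^{\ast})$ and substituting the identities above, the terms proportional to $\mathbb{E}[\eta_{i}]$ cancel between the two contributions, leaving
\[
\frac{w_{i}(\tau^{\ast},\theta)}{f(\tau^{\ast})}=n^{2}\Big[\eta_{i}^{2}(\theta)-\mathbb{E}_{f(\theta|\tau^{\ast})}\eta_{i}^{2}(\theta)+2\,\partial_{i}\log R(n\tau^{\ast})\big(\eta_{i}(\theta)-\mathbb{E}_{f(\theta|\tau^{\ast})}\eta_{i}(\theta)\big)\Big].
\]
Substituting this into the expansion of Proposition \ref{L:RelEntropy_GeneralExpansion}, the prefactor $1/f^{2}(\tau^{\ast})$ combines with the common $n^{2}$ inside each weight to produce the overall constant $n^{4}/(8(q+2)^{2})$ together with the weight displayed in (\ref{Eq:ExpressionForTheEllipse2}). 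Specializing to $q=1$ with $R$ constant collapses the bracket to $\eta^{2}(\theta)-\mathbb{E}\eta^{2}(\theta)$ and recovers $\frac{1}{72}\text{Var}_{f(\theta|\tau^{\ast})}(\eta^{2})$ of Lemma \ref{L:RelEntropySpecialForm1}, a useful consistency check.

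The one genuinely delicate point is the remainder: Proposition \ref{L:RelEntropy_GeneralExpansion} only asserts $\mathcal{O}(|\varepsilon|^{6})$, whereas the claim here is $\mathcal{O}(|n\varepsilon|^{6})$. Exposing this hidden $n$-dependence requires bounding the higher-order $\tau$-derivatives of $f(\theta|\tau)$ and $f_{T}(\tau)$ uniformly in $n$, each such derivative contributing a further power of $n$ alongside a higher cumulant of $\eta_{i}$. This is precisely where the two extra hypotheses enter: Condition \ref{A:ConditionR} keeps $\partial_{i}^{(k)}\log R(n\tau^{\ast})$ bounded in $n$, so that the $R$-contribution to every weight and to the remainder stays of the stated order, while Condition \ref{A:ConditionR2} supplies the Laplace-type concentration of the posterior near the unique non-degenerate maximizer $\theta^{\ast}$ of $g$, guaranteeing that the relevant posterior moments of $\eta_{i}$ are finite and well-controlled as $n\to\infty$ and hence that the remainder is genuinely of order $|n\varepsilon|^{6}$. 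I expect this uniform control of the remainder, rather than the algebra of the weights, to be the main obstacle.
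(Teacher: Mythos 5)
Your proposal is correct and follows essentially the same route as the paper: compute the weights $w_i(\tau^{\ast},\theta)$ explicitly from (\ref{Eq:weights}) using the cumulant identities $\partial_{i}\log Z=-\mathbb{E}_{f(\theta|\tau^{\ast})}\eta_i$ and $\partial^2_{i,i}\log Z-(\partial_i\log Z)^2=-\mathbb{E}_{f(\theta|\tau^{\ast})}\eta_i^2$, substitute into Proposition \ref{L:RelEntropy_GeneralExpansion}, and upgrade the remainder to $\mathcal{O}(|n\varepsilon|^6)$ by bounding $\partial^{(k)}_i\log R(n\tau^{\ast})$ via Condition \ref{A:ConditionR} and the posterior moments $\mathbb{E}_{f(\theta|\tau^{\ast})}\eta_j^k$ via Laplace asymptotics under Condition \ref{A:ConditionR2}. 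Your passage through logarithmic derivatives rather than the paper's direct differentiation of the densities is only a cosmetic variation, and your identification of the uniform-in-$n$ control of the remainder as the delicate step matches the paper's own emphasis.
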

The proof of this lemma is at the end of this section. A few remarks follow.

Lemma \ref{L:RelEntropy_ExpFamilyGeneral} shows that for  appropriately small $n |\varepsilon|$ and as long as the rest of the involved quantities are bounded uniformly on $n$, the leading order term of the relative entropy scales like $n^{4}\sum_{i=1}^{q}\varepsilon_{i}^{4}/q$. However, this is only a crude upper bound and is not using the individual properties of the weight factors $w_{i}$. The weight factors and the probability distribution under which the expectation is being calculated also depends on $n$, which implies that it is possible that for some $i$, the behavior is better than $n^{4}$. In particular,  in
% \[
% \text{Var}_{f(\theta|\tau^{\ast})}\left(
%\eta^{2}_{i}(\theta)\right)+4\left(\partial_{i} \log R(n\tau^{\ast})\right)^{2}\text{Var}_{f(\theta|\tau^{\ast})}\left(
%\eta_{i}(\theta)\right)
%\]
%may contribute in non-trivial ways in the $i^{th}$ direction.
 Section \ref{S:Example} we show that this is the case indeed for a few simple examples.

Let us also remark that if the function $R(\tau)$ is constant, (see Section \ref{S:ExmapleNormal} for such a case), then (\ref{Eq:ExpressionForTheEllipse2}) takes a simpler form. Indeed, we have the following corollary.
\begin{corollary}\label{C:RelEntropy_ExpFamilyGeneral}
Under the setting of Lemma \ref{L:RelEntropy_ExpFamilyGeneral} and assuming that $R(\tau)$ is constant, we obtain
\begin{align}
&H(P|P^{\varepsilon})(\tau^{\ast})=\frac{n^{4}}{8}\frac{1}{(q+2)^{2}}
\left(\sum_{i=1}^{q}\varepsilon_{i}^{4}\text{Var}_{f(\theta|\tau^{\ast})}\left(
\eta^{2}_{i}(\theta)\right)+\sum_{i,j=1, i\neq j}^{q}\varepsilon_{i}^{2}\varepsilon_{j}^{2}\text{Cov}_{f(\theta|\tau^{\ast})}\left(
\eta^{2}_{i}(\theta),\eta^{2}_{j}(\theta)\right)\right)
+ \mathcal{O}(|n \varepsilon|^{6})
\label{Eq:ExpressionForTheEllipse3}
\end{align}

In particular if also $q=1$ then we obtain the statement of Lemma \ref{L:RelEntropySpecialForm1}.
\end{corollary}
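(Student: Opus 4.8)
The plan is to read the result off directly from Lemma \ref{L:RelEntropy_ExpFamilyGeneral} by specializing $R(\tau)$ and then performing an elementary expansion of the squared sum. First I would observe that when $R(\tau)$ is constant, $\partial_{i}\log R(n\tau^{\ast})=0$ for every $i=1,\dots,q$, so the cross terms $2\partial_{i}\log R(n\tau^{\ast})(\eta_{i}(\theta)-\mathbb{E}_{f(\theta|\tau^{\ast})}\eta_{i}(\theta))$ appearing inside the square in (\ref{Eq:ExpressionForTheEllipse2}) vanish identically. This reduces the leading order term to
\[
H(P|P^{\varepsilon})(\tau^{\ast})=\frac{n^{4}}{8}\frac{1}{(q+2)^{2}}\mathbb{E}_{f(\theta|\tau^{\ast})}\left[\left(\sum_{i=1}^{q}\varepsilon_{i}^{2}\bigl(\eta^{2}_{i}(\theta)-\mathbb{E}_{f(\theta|\tau^{\ast})}\eta^{2}_{i}(\theta)\bigr)\right)^{2}\right]+\mathcal{O}(|n\varepsilon|^{6}).
\]

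Next I would introduce the centered quantities $u_{i}(\theta)=\eta^{2}_{i}(\theta)-\mathbb{E}_{f(\theta|\tau^{\ast})}\eta^{2}_{i}(\theta)$, which by construction satisfy $\mathbb{E}_{f(\theta|\tau^{\ast})}[u_{i}]=0$. Expanding the square and using linearity of the expectation gives $\mathbb{E}_{f(\theta|\tau^{\ast})}[(\sum_{i}\varepsilon_{i}^{2}u_{i})^{2}]=\sum_{i,j}\varepsilon_{i}^{2}\varepsilon_{j}^{2}\mathbb{E}_{f(\theta|\tau^{\ast})}[u_{i}u_{j}]$. Because each $u_{i}$ has mean zero, the mixed second moments are precisely the (co)variances of the squared natural parameters: $\mathbb{E}_{f(\theta|\tau^{\ast})}[u_{i}^{2}]=\text{Var}_{f(\theta|\tau^{\ast})}(\eta^{2}_{i}(\theta))$ and, for $i\neq j$, $\mathbb{E}_{f(\theta|\tau^{\ast})}[u_{i}u_{j}]=\text{Cov}_{f(\theta|\tau^{\ast})}(\eta^{2}_{i}(\theta),\eta^{2}_{j}(\theta))$. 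Separating the diagonal ($i=j$) from the off-diagonal ($i\neq j$) contributions in the double sum then yields exactly (\ref{Eq:ExpressionForTheEllipse3}).

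For the final assertion I would specialize to $q=1$: the off-diagonal sum is empty, only the single term $\varepsilon^{4}\text{Var}_{f(\theta|\tau^{\ast})}(\eta^{2}(\theta))$ survives, and the prefactor collapses to $\frac{n^{4}}{8(q+2)^{2}}=\frac{n^{4}}{72}$, so that
\[
H(P|P^{\varepsilon})(\tau^{\ast})=(n\varepsilon)^{4}\frac{1}{72}\text{Var}_{f(\theta|\tau^{\ast})}(\eta^{2}(\theta))+\mathcal{O}(|n\varepsilon|^{6}),
\]
recovering (\ref{Eq:REsmall_n_epsilon}). Since every step is a deterministic algebraic manipulation of the already-established expansion in Lemma \ref{L:RelEntropy_ExpFamilyGeneral}, I do not anticipate any genuine analytic obstacle here; the only point deserving a moment of care is the mean-zero centering of the $u_{i}$, which is exactly what permits replacing the raw products $\mathbb{E}_{f(\theta|\tau^{\ast})}[u_{i}u_{j}]$ by variances and covariances without carrying along any subtracted-mean remainder.
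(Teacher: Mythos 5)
Your proposal is correct and follows exactly the route the paper intends: the corollary is stated without an explicit proof precisely because it amounts to setting $\partial_{i}\log R(n\tau^{\ast})=0$ in (\ref{Eq:ExpressionForTheEllipse2}), expanding the square of the centered sum into diagonal variance terms and off-diagonal covariance terms, and checking that $q=1$ gives the prefactor $\frac{n^{4}}{8(1+2)^{2}}=\frac{n^{4}}{72}$ of Lemma \ref{L:RelEntropySpecialForm1}. Your observation that the mean-zero centering of the $u_{i}$ is what lets the raw second moments be identified with variances and covariances is exactly the one point of care needed.
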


\begin{proof}[Proof of Lemma \ref{L:RelEntropy_ExpFamilyGeneral}]
We start by computing the weight factors $w_{i}(\tau^{\ast},\theta)$ given by (\ref{Eq:weights}).   By taking derivatives on (\ref{Eq:PosteriorExponential}) we obtain
\begin{align}
\partial_{\tau_{i}}f_{\Theta|T}(\theta|\tau)&=n\left(\partial_{i} \log Z(\rho+n\tau,\kappa+n)+\eta_{i}(\theta)\right)  f_{\Theta|T}(\theta|\tau)\nonumber\\
\partial^{2}_{\tau_{i}}f_{\Theta|T}(\theta|\tau)&=n^{2}\left(\partial^{2}_{i,i} \log Z(\rho+n\tau,\kappa+n)+\left(\partial_{i} \log Z(\rho+n\tau,\kappa+n)+\eta_{i}(\theta)\right)^{2}\right)  f_{\Theta|T}(\theta|\tau)\nonumber\\
\partial_{\tau_{i}}f_{T}(\tau)&=n\left( \partial_{i} \log R(n\tau)-\partial_{i} \log Z(\rho+n\tau,\kappa+n)\right)f_{T}(\tau).\label{Eq:DerivativeFactors}
\end{align}

Hence, by plugging these expressions into (\ref{Eq:weights}) we subsequently obtain
\begin{align}
w_{i}(\tau^{\ast},\theta)&=\frac{\partial_{\tau_{i}}^{2}f(\theta|\tau^{\ast})}{f(\theta|\tau^{\ast})}f(\tau^{\ast})+2
 \frac{\partial_{\tau_{i}}f(\theta|\tau^{\ast})}{f(\theta|\tau^{\ast})}
 \partial_{\tau_{i}}f(\tau^{\ast})\nonumber\\
 &=n^{2}\left[\partial^{2}_{i,i} \log Z(\rho+n\tau^{\ast},\kappa+n)+\left(\partial_{i} \log Z(\rho+n\tau^{\ast},\kappa+n)+\eta_{i}(\theta)\right)^{2}\right.\nonumber\\
 &\qquad\left.+2\left(\partial_{i} \log R(n\tau^{\ast})-\partial_{i} \log Z(\rho+n\tau^{\ast},\kappa+n)\right)\left(\partial_{i} \log Z(\rho+n\tau^{\ast},\kappa+n)+\eta_{i}(\theta)\right)\right]f_{T}(\tau^{\ast})\nonumber\\
 &=n^{2}\left[\partial^{2}_{i,i} \log Z(\rho+n\tau^{\ast},\kappa+n)-\left(\partial_{i} \log Z(\rho+n\tau^{\ast},\kappa+n)\right)^{2}+\eta^{2}_{i}(\theta)\right]f_{T}(\tau^{\ast})\nonumber\\
 &\quad+n^{2}\left[2\partial_{i} \log R(n\tau^{\ast})\left(\partial_{i} \log Z(\rho+n\tau^{\ast},\kappa+n)+\eta_{i}(\theta)\right)\right]f_{T}(\tau^{\ast}).\nonumber
\end{align}

Next, we notice that (\ref{Eq:PosteriorExponential}) implies that the $k^{th}$ derivative of $Z^{-1}(\rho+n\tau^{\ast},\kappa+n)=Z^{-1}(\rho_{1}+n\tau^{\ast}_{1},\cdots,\rho_{q}+n\tau^{\ast}_{q},\kappa+n)$ in the $i^{th}$ direction is for $i=1,\cdots, q$
\begin{align}
\partial^{(k)}_{i}Z^{-1}(\rho+n\tau^{\ast},\kappa+n)&=Z^{-1}(\rho+n\tau^{\ast},\kappa+n) \mathbb{E}_{f(\theta|\tau^{\ast})}
 \eta^{k}_{i}(\theta).\nonumber
\end{align}

Hence, we can actually write
\begin{align}
\partial^{2}_{i,i} \log Z(\rho+n\tau^{\ast},\kappa+n)-\left(\partial_{i} \log Z(\rho+n\tau^{\ast},\kappa+n)\right)^{2}&= -\mathbb{E}_{f(\theta|\tau^{\ast})} \eta^{2}_{i}(\theta),\label{Eq:Zcombination1}
\end{align}
and
\begin{align}
\partial_{i} \log Z(\rho+n\tau^{\ast},\kappa+n)&= -\mathbb{E}_{f(\theta|\tau^{\ast})} \eta_{i}(\theta).\label{Eq:Zcombination2}
\end{align}

The latter imply that the weight function takes the form
\begin{align}
w_{i}(\tau^{\ast},\theta) &=n^{2}\left[\eta^{2}_{i}(\theta)-\mathbb{E}_{f(\theta|\tau^{\ast})} \eta^{2}_{i}(\theta)+2\partial_{i} \log R(n\tau^{\ast})\left(\eta_{i}(\theta)-\mathbb{E}_{f(\theta|\tau^{\ast})} \eta_{i}(\theta)\right)\right]f_{T}(\tau^{\ast}).\label{Eq:WeightFactorsExponentialFamily}
\end{align}

Therefore,  as $|n\varepsilon|\rightarrow 0$, one expects that the following holds up to leading order term for the relative entropy
{\small\begin{align}
&H(P|P^{\varepsilon})(\tau^{\ast})=\frac{n^{4}}{8}\frac{1}{(q+2)^{2}}\mathbb{E}_{f(\theta|\tau^{\ast})}
\left(\sum_{i=1}^{q}\varepsilon_{i}^{2}\left(\eta^{2}_{i}(\theta)-\mathbb{E}_{f(\theta|\tau^{\ast})} \eta^{2}_{i}(\theta)+2\partial_{i} \log R(n\tau^{\ast})\left(\eta_{i}(\theta)-\mathbb{E}_{f(\theta|\tau^{\ast})} \eta_{i}(\theta)\right)
\right)\right)^{2}\nonumber\\
&\qquad\qquad+ \mathcal{O}(|n\varepsilon|^{6}).%\nonumber
%&\leq\frac{n^{4}}{8}\frac{q}{(q+2)^{2}}\sum_{i=1}^{q}\varepsilon_{i}^{4}\left[\mathbb{E}_{f(\theta|\tau^{\ast})}
%\left(\eta^{2}_{i}(\theta)-\mathbb{E}_{f(\theta|\tau^{\ast})} \eta^{2}_{i}(\theta)
%\right)^{2}+4\left(\partial_{i} \log R(n\tau^{\ast})\right)^{2}\mathbb{E}_{f(\theta|\tau^{\ast})}\left(\eta_{i}(\theta)-\mathbb{E}_{f(\theta|\tau^{\ast})} \eta_{i}(\theta)\right)^{2}\right]\nonumber\\
%&\qquad\qquad+ \mathcal{O}(|\varepsilon|^{6})\nonumber\\
%&=\frac{n^{4}}{8}\frac{q}{(q+2)^{2}}\sum_{i=1}^{q}\varepsilon_{i}^{4}\left[\mathbb{E}_{f(\theta|\tau^{\ast})}
%\eta^{4}_{i}(\theta)-\left(\mathbb{E}_{f(\theta|\tau^{\ast})} \eta^{2}_{i}(\theta)\right)^{2}
%\right]+ \mathcal{O}(|\varepsilon|^{6}),\nonumber\\
%&=\frac{n^{4}}{8}\frac{q}{(q+2)^{2}}\sum_{i=1}^{q}\varepsilon_{i}^{4}\left[\text{Var}_{f(\theta|\tau^{\ast})}\left(
%\eta^{2}_{i}(\theta)\right)+4\left(\partial_{i} \log R(n\tau^{\ast})\right)^{2}\text{Var}_{f(\theta|\tau^{\ast})}\left(
%\eta_{i}(\theta)\right)\right]+ \mathcal{O}(|\varepsilon|^{6}).
\label{Eq:ExpressionForTheEllipse2a}
\end{align}}

Let us conclude the proof of the lemma  by justifying the validity of the expansion (\ref{Eq:ExpressionForTheEllipse2a}) when $|n\varepsilon|$ is small enough. A simple calculation of the Taylor series expansion of the density $f^{\varepsilon}(\theta|\tau^{*})$ shows that the remainder terms depend on terms of the form $\mathbb{E}_{f(\theta|\tau^{\ast})} \eta^{k}_{i}(\theta)$ for $k\in\mathbb{N}$ and on $\partial^{(k)}_{i} \log R(n\tau^{\ast})$. In regards to $\partial^{(k)}_{i} \log R(n\tau^{\ast})$ we have assumed Condition \ref{A:ConditionR}.

%Note that even though we could not verify that Condition \ref{A:ConditionR} is true for the whole family of exponential distributions, we did verify that it is true for classical distributions, such as exponential, gamma, normal and beta.

%Recalling that $\partial_{i} \log R(n\tau^{\ast})=\frac{\partial_{i} R(n\tau^{\ast})}{R(n\tau^{\ast})}$ is the derivative in the $i^{th}$ direction, this term will typically not grow with $n$.

In regards to $\mathbb{E}_{f(\theta|\tau^{\ast})} \eta^{k}_{i}(\theta)$ we can be more explicit.  We use classical Laplace asymptotics and Condition \ref{A:ConditionR2} to show that such quantities will be uniformly bounded in $n$ for any $k\in \mathbb{N}$. The latter and Condition \ref{A:ConditionR} then imply that (\ref{Eq:ExpressionForTheEllipse2}) is a valid expansion. Indeed, notice that we can write for $j=1,\cdots, q$ and $k\in\mathbb{N}$
\begin{align}
\mathbb{E}_{f(\theta|\tau^{\ast})} \eta^{k}_{j}(\theta)&=\frac{\int \eta^{k}_{j}(\theta)e^{\sum_{i=1}^{q}\eta_{i}(\theta)\rho_{i}-\kappa A(\theta)}e^{n g(\theta)}d\theta}{\int e^{\sum_{i=1}^{q}\eta_{i}(\theta)\rho_{i}-\kappa A(\theta)}e^{n g(\theta)}d\theta}.\nonumber
\end{align}

We have assumed that the function g is smooth and has a maximum only at one interior non-degenerate critical point, say $\theta^{*}$. That means $\nabla g(\theta^{*})=0$ and $\text{det}\nabla^{2}g(\theta^{*})<0$ (Condition \ref{A:ConditionR2}). Then, classical Laplace asymptotics allows us to write as $n\rightarrow\infty$ up to leading order
\begin{align}
\mathbb{E}_{f(\theta|\tau^{\ast})} \eta^{k}_{j}(\theta)&\sim \frac{ \eta^{k}_{j}(\theta^{*})e^{\sum_{i=1}^{q}\eta_{i}(\theta^{*})\rho_{i}-\kappa A(\theta^{*})}e^{n g(\theta^{*})}\left(\frac{2\pi}{n}\right)^{q/2}\left[-\text{det}\nabla^{2}g(\theta^{*})\right]^{-1/2}}{e^{\sum_{i=1}^{q}\eta_{i}(\theta^{*})\rho_{i}-\kappa A(\theta^{*})}e^{n g(\theta^{*})}\left(\frac{2\pi}{n}\right)^{q/2}\left[-\text{det}\nabla^{2}g(\theta^{*})\right]^{-1/2}}\nonumber\\
&=\eta^{k}_{j}(\theta^{*}).\nonumber
\end{align}

Since the rest of the terms in the expansion are of lower order with respect to $n$, we can conclude that (\ref{Eq:ExpressionForTheEllipse2a}) is a valid expansion of the relative entropy even when $n$ gets large as long as $|n\varepsilon|$ is small. If $n$ is held fixed, then (\ref{Eq:ExpressionForTheEllipse2a}) is a valid expansion of the relative entropy if $|\varepsilon|$ is sufficiently small. This concludes the proof of the lemma.
\end{proof}

\section{Bias calculations}\label{S:BiasCalculations}

In this section we compute the bias of the estimator up to leading order with respect to $|\varepsilon|$. A similar computation has previously appeared in \cite{BarberVossWebster}. We present the result for the bias below, connecting it to the weight (\ref{Eq:weights}). In addition, the computations in \cite{BarberVossWebster} are for a ball as an acceptance region (\ref{Eq:Domain1}), whereas here we present the result for the more general case of an ellipse (\ref{Eq:Domain2}).

The bias of the estimator $\widehat{\pi_{X}[h]}_{\varepsilon,K}$ as defined by (\ref{Eq:Estimator}) is defined to be
\[
\text{bias}\left(\widehat{\pi_{X}[h]}_{\varepsilon,K}\right)=\mathbb{E}\widehat{\pi_{X}[h]}_{\varepsilon,K}-\pi_{X}[h].
\]

Algebraic computations similar to those in the case of a ball in \cite{BarberVossWebster} together with the computations of this section and the definition of the weights
by (\ref{Eq:weights}), shows that
\begin{align}
\text{bias}\left(\widehat{\pi_{X}[h]}_{\varepsilon,K}\right)&=\sum_{i=1}^{q}\varepsilon_{i}^{2}C_{i}(\tau^{*})+\mathcal{O}(|\varepsilon|^{3}),\label{Eq:Bias1}
\end{align}
where
\begin{align}
C_{i}(\tau^{\ast})&=\frac{1}{2(q+2)f(\tau^{*})}\mathbb{E}_{f(\theta|\tau^{\ast})}\left[ h(\theta) w_{i}(\tau^{\ast},\theta)\right].\label{Eq:Bias2}
\end{align}

It is clear that the last display reduces to the form presented in \cite{BarberVossWebster} if $\varepsilon_{i}=\varepsilon$ for every $i=1,\cdots, q$ and after substitution of the form of the weight factors $w_{i}(\tau^{\ast},\theta)$ from (\ref{Eq:weights}).

Let us now make specific the bias computations of (\ref{Eq:Bias1})-(\ref{Eq:Bias2}) for the case of an exponential family. By plugging in the expression for the weight $w_{i}(\tau^{*},\theta)$ from (\ref{Eq:WeightFactorsExponentialFamily}) in (\ref{Eq:Bias2}) we obtain
\begin{align}
C_{i}(\tau^{*})&=\frac{1}{2(q+2)f(\tau^{*})}\mathbb{E}_{f(\theta|\tau^{\ast})}\left[ h(\theta) w_{i}(\tau^{\ast},\theta)\right]\nonumber\\
&=\frac{n^{2}}{2(q+2)}\mathbb{E}_{f(\theta|\tau^{\ast})}\left[ h(\theta)\left(\eta^{2}_{i}(\theta)-\mathbb{E}_{f(\theta|\tau^{\ast})} \eta^{2}_{i}(\theta)+2\partial_{i} \log R(n\tau^{\ast})\left(\eta_{i}(\theta)-\mathbb{E}_{f(\theta|\tau^{\ast})} \eta_{i}(\theta)\right)\right)\right].\label{Eq:Bias2Exp}
\end{align}

Combined with (\ref{Eq:Bias1}), the latter expression shows that the typical order of the bias is $\mathcal{O}((n|\varepsilon|)^{2})$. However, as we will see in the specific examples of Section \ref{S:Example}, by taking into account the weight functions, specific observables of interest do behave better in terms of the $n$-dependence for the bias, and the bias may even be independent of $n$.

\section{Conditional mean rejection rate}\label{S:MeanRejectionRate}

The goal of this section is to compare the standard ABC algorithm that is using the ball as acceptance region (\ref{Eq:Domain1}) to the ABC algorithm that is using the ellipse as acceptance region (\ref{Eq:Domain2}). One possible way to do so, which can also be computed in practice, is the mean rejection rate per accepted particle for each one of the two algorithms. Let us define $R_{B}$ and $R_{E}$ to be the number of rejections before an accepted particle, conditional on the value of the observed statistic $\tau^{\ast}$, for ball and ellipse respectively. It is clear that both $R_{B}$ and $R_{E}$ are distributed according to geometric distribution
\[
R_{B}\sim \text{Geom}(p_{B}), \text{ and } R_{E}\sim \text{Geom}(p_{E}),
\]
where $p_{B}$ and $p_{E}$ being the acceptance probabilities for the case of acceptance region being a ball and an ellipse respectively. Writing $B_{\varepsilon}(\tau^{*})$ for the acceptance region for the ball, (\ref{Eq:Domain1}), and  $D_{\varepsilon}(\tau^{*})$ for the acceptance region for the ellipse, (\ref{Eq:Domain2}), we have respectively
\[
p_{B}=\mathbb{P}\left(\tau\in B_{\varepsilon}(\tau^{*})\right), \text{ and }p_{E}=\mathbb{P}\left(\tau\in D_{\varepsilon}(\tau^{*})\right),
\]
where we recall that in the case of a ball $\varepsilon$ is positive scalar, whereas in the case of an ellipse $\varepsilon\in\mathbb{R}^{q}_{+}$.

Now, we are interested in approximating in the small $|\varepsilon|$ regime the quantity
\[
U(\tau^{\ast})=\frac{\mathbb{E}\left[R_{E}|\tau^{\ast}\right]}{\mathbb{E} \left[R_{B}|\tau^{\ast}\right]}=\frac{p_{B}}{p_{E}}.
\]

We write $U(\tau^{\ast})=\frac{\mathbb{E}\left[R_{E}|\tau^{\ast}\right]}{\mathbb{E} \left[R_{B}|\tau^{\ast}\right]}$ in order to emphasize that $U$ is a random quantity depending on the observed statistic, hence it is data driven itself.

We have the following lemma, with proof presented at the end of this section.
\begin{lemma}\label{L:ConditionalRejectionRate}
Asymptotically, as $|\varepsilon|\rightarrow 0$ we have that
\begin{align}
U(\tau^{\ast})
&=\frac{\varepsilon^{q}}{\prod_{i=1}^{q}\varepsilon_{i}}\frac{f_{T}(\tau^{\ast}) + \varepsilon^{2}\frac{1}{2(q+2)}\Delta f_{T}(\tau^{\ast}) + \mathcal{O}(|\varepsilon|^{3})}{f_{T}(\tau^{\ast}) + \frac{1}{2(q+2)}\sum_{i=1}^{q}\varepsilon_{i}^{2}\partial_{\tau_{i}}^{2}f_{T}(\tau^{\ast}) + \mathcal{O}(|\varepsilon|^{3})}.\label{Eq:GeneralRatio}
\end{align}
\end{lemma}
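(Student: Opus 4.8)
The plan is to reduce everything to the two acceptance probabilities $p_B$ and $p_E$, since the lemma's starting point $U(\tau^\ast)=p_B/p_E$ is already in hand from the geometric-mean identity stated above the lemma. Writing the probabilities as integrals of the marginal density against the indicator of each region,
\[
p_B=\int_{B_\varepsilon(\tau^\ast)}f_T(\tau)\,d\tau,\qquad p_E=\int_{D_\varepsilon(\tau^\ast)}f_T(\tau)\,d\tau,
\]
I would evaluate each by a Taylor expansion of $f_T$ about $\tau^\ast$ combined with a rescaling that maps the region onto the unit ball $B_1(0)$.

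For the ball, set $\tau=\tau^\ast+\varepsilon u$ with $u\in B_1(0)$, so that $d\tau=\varepsilon^q\,du$. Expanding $f_T(\tau^\ast+\varepsilon u)$ to second order in $\varepsilon$ and integrating term by term over $B_1(0)$, the first-order term vanishes by the odd symmetry $\int_{B_1(0)}u_i\,du=0$, and the mixed second-order terms vanish by $\int_{B_1(0)}u_iu_j\,du=0$ for $i\neq j$. Using the isotropic moment $\int_{B_1(0)}u_i^2\,du=|B_1(0)|/(q+2)$, which follows from $\int_{B_1(0)}|u|^2\,du=q|B_1(0)|/(q+2)$ together with symmetry, this yields
\[
p_B=\varepsilon^q|B_1(0)|\left(f_T(\tau^\ast)+\frac{\varepsilon^2}{2(q+2)}\Delta f_T(\tau^\ast)+\mathcal{O}(|\varepsilon|^3)\right).
\]
For the ellipse I would use instead the anisotropic change of variables $\tau_i=\tau_i^\ast+\varepsilon_i u_i$ with $u\in B_1(0)$, whose Jacobian is $\prod_{i=1}^q\varepsilon_i$. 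The same symmetry arguments annihilate the linear and off-diagonal quadratic contributions, and the diagonal moment $\int_{B_1(0)}u_i^2\,du=|B_1(0)|/(q+2)$ now weights $\varepsilon_i^2\,\partial_{\tau_i}^2 f_T(\tau^\ast)$ separately in each direction, giving
\[
p_E=\Big(\prod_{i=1}^q\varepsilon_i\Big)|B_1(0)|\left(f_T(\tau^\ast)+\frac{1}{2(q+2)}\sum_{i=1}^q\varepsilon_i^2\,\partial_{\tau_i}^2 f_T(\tau^\ast)+\mathcal{O}(|\varepsilon|^3)\right).
\]
Forming the ratio $p_B/p_E$, the common factor $|B_1(0)|$ cancels and the prefactors combine into $\varepsilon^q/\prod_{i=1}^q\varepsilon_i$, reproducing exactly (\ref{Eq:GeneralRatio}).

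Since the computation is a direct Taylor expansion, the genuinely delicate points are bookkeeping rather than conceptual. The one step that truly needs care is the isotropic second moment $\int_{B_1(0)}u_i^2\,du=|B_1(0)|/(q+2)$ and the accompanying vanishing of the odd and mixed moments by reflection symmetry of $B_1(0)$: getting the constant $1/(q+2)$ right is precisely what produces the matching $(q+2)$ factors in the numerator and denominator of (\ref{Eq:GeneralRatio}). I would also record that uniformity of the $\mathcal{O}(|\varepsilon|^3)$ remainders is guaranteed by the smoothness and positivity of $f_T$ in Condition \ref{A:Regularity}, which bounds the third derivatives on a neighborhood of $\tau^\ast$ so that the Taylor error is controlled after the rescaling onto $B_1(0)$.
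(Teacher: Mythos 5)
Your proposal is correct and follows essentially the same route as the paper: both compute $p_E$ (and $p_B$ as the special case $\varepsilon_1=\cdots=\varepsilon_q=\varepsilon$) by rescaling the acceptance region onto $B_1(0)$, Taylor-expanding $f_T$ to second order, killing the odd and mixed moments by symmetry, and using $\int_{B_1(0)}\tau_i^2\,d\tau=\tfrac{1}{q}\int_{B_1(0)}\|\tau\|^2\,d\tau=\tfrac{|B_1(0)|}{q+2}$ before taking the ratio. No gaps.
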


Let us make now Lemma \ref{L:ConditionalRejectionRate}  precise for the case of exponential family. As we show in Section \ref{S:ExponentialFamily}, in the case of exponential family one sees clearly the dependence on the size of the data set, $n$, and as
a matter of fact, in the generic situation, the relation (\ref{Eq:GeneralRatio}) is then true if the products $n|\varepsilon|$ and $n\varepsilon$ are small.

\begin{corollary}\label{C:MeanRejectionRateExp}
 In the case of exponential family, (\ref{Eq:GeneralRatio}) reduces to
 \begin{align}
U(\tau^{\ast})&=\frac{\varepsilon^{q}}{\prod_{i=1}^{q}\varepsilon_{i}}\frac{1 + n^{2}\varepsilon^{2}\frac{1}{2(q+2)}\sum_{i=1}^{q}q_{i}(n,\tau^{\ast}) + \mathcal{O}(|n \varepsilon|^{3})}{1 + \frac{n^{2}}{2(q+2)}\sum_{i=1}^{q}\varepsilon_{i}^{2}q_{i}(n,\tau^{\ast}) + \mathcal{O}((n|\varepsilon|)^{3})},\nonumber
\end{align}
where
\begin{align}
q_{i}(n,\tau^{\ast})&=\frac{\partial^{2}_{i} R(n\tau^{\ast})}{R(n\tau^{\ast})}+\mathbb{E}_{f(\theta|\tau^{\ast})}\eta_{i}^{2}(\theta)
+2\partial_{i} \log R(n\tau^{\ast}) \mathbb{E}_{f(\theta|\tau^{\ast})}\eta_{i}(\theta).\label{Eq:WeightsRatio}
\end{align}
\end{corollary}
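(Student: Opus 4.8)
The plan is to derive Corollary \ref{C:MeanRejectionRateExp} as a direct specialization of Lemma \ref{L:ConditionalRejectionRate}, substituting the explicit exponential-family form of the marginal likelihood $f_{T}(\tau^{\ast})$ from (\ref{Eq:MarginalExponential}). First I would factor out $f_{T}(\tau^{\ast})$ from both numerator and denominator of (\ref{Eq:GeneralRatio}); this immediately cancels the leading constants and reduces the two fractions to $1 + \tfrac{\varepsilon^{2}}{2(q+2)}\tfrac{\Delta f_{T}(\tau^{\ast})}{f_{T}(\tau^{\ast})}+\mathcal{O}(|\varepsilon|^{3})$ and $1 + \tfrac{1}{2(q+2)}\sum_{i}\varepsilon_{i}^{2}\tfrac{\partial_{\tau_{i}}^{2}f_{T}(\tau^{\ast})}{f_{T}(\tau^{\ast})}+\mathcal{O}(|\varepsilon|^{3})$ respectively. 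Since $\Delta f_{T}=\sum_{i=1}^{q}\partial_{\tau_{i}}^{2}f_{T}$, the whole computation reduces to identifying the quantity $q_{i}(n,\tau^{\ast})=\partial_{\tau_{i}}^{2}f_{T}(\tau^{\ast})/f_{T}(\tau^{\ast})$ and showing it equals the expression in (\ref{Eq:WeightsRatio}).

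The core of the argument is therefore the computation of $\partial_{\tau_{i}}^{2}f_{T}(\tau^{\ast})/f_{T}(\tau^{\ast})$. I would start from (\ref{Eq:MarginalExponential}), namely $f_{T}(\tau)=n^{q}R(n\tau)\,Z(\rho,\kappa)/Z(\rho+n\tau,\kappa+n)$, and take logarithmic derivatives so that the constant factors $n^{q}$ and $Z(\rho,\kappa)$ play no role. Writing $\log f_{T}(\tau)=\log R(n\tau)-\log Z(\rho+n\tau,\kappa+n)+\text{const}$, the first derivative in $\tau_{i}$ reproduces the already-computed expression (\ref{Eq:DerivativeFactors}) for $\partial_{\tau_{i}}f_{T}$, and differentiating once more gives $\partial_{\tau_{i}}^{2}f_{T}/f_{T}=(\partial_{\tau_{i}}f_{T}/f_{T})^{2}+\partial_{\tau_{i}}^{2}\log f_{T}$. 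Substituting $\partial_{\tau_{i}}^{2}\log R(n\tau)=n^{2}(\partial_{i}^{2}R(n\tau)/R(n\tau)-(\partial_{i}\log R(n\tau))^{2})$ together with the identities (\ref{Eq:Zcombination1})-(\ref{Eq:Zcombination2}), which give $\partial_{i}\log Z=-\mathbb{E}_{f(\theta|\tau^{\ast})}\eta_{i}(\theta)$ and $\partial_{i}^{2}\log Z=\mathbb{E}_{f(\theta|\tau^{\ast})}\eta_{i}^{2}(\theta)-(\mathbb{E}_{f(\theta|\tau^{\ast})}\eta_{i}(\theta))^{2}$, the cross terms combine and the result collapses precisely to (\ref{Eq:WeightsRatio}).

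I expect the only genuine bookkeeping difficulty to be keeping the $R$-dependent terms straight, since $\partial_{i}^{2}R/R$ and $(\partial_{i}\log R)^{2}$ differ and must be combined correctly with the $2\partial_{i}\log R\,\mathbb{E}\eta_{i}$ cross term that arises from multiplying the two logarithmic first derivatives; the temptation is to conflate $\partial_{i}^{2}\log R$ with $\partial_{i}^{2}R/R$, and I would be careful to track this so that the final $q_{i}$ matches (\ref{Eq:WeightsRatio}) exactly. Finally, I would remark that the justification that (\ref{Eq:GeneralRatio}) holds in the exponential-family regime when $n|\varepsilon|$ is small follows from the same Laplace-asymptotics argument used in the proof of Lemma \ref{L:RelEntropy_ExpFamilyGeneral}, using Conditions \ref{A:ConditionR} and \ref{A:ConditionR2} to guarantee that the moments $\mathbb{E}_{f(\theta|\tau^{\ast})}\eta_{i}^{k}(\theta)$ and the derivatives $\partial_{i}^{(k)}\log R(n\tau^{\ast})$ remain bounded uniformly in $n$, so that the powers of $n$ in $q_{i}(n,\tau^{\ast})$ are the only source of $n$-growth and the error terms are genuinely $\mathcal{O}(|n\varepsilon|^{3})$.
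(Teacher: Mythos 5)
Your proposal follows essentially the same route as the paper: it differentiates $\log f_{T}(\tau)=\log R(n\tau)-\log Z(\rho+n\tau,\kappa+n)+\text{const}$ twice, combines $(\partial_{\tau_i}f_T/f_T)^2+\partial_{\tau_i}^2\log f_T$ with the identities (\ref{Eq:Zcombination1})--(\ref{Eq:Zcombination2}) and $\partial_i^2\log R+(\partial_i\log R)^2=\partial_i^2 R/R$ to get $\partial_{\tau_i}^2 f_T=n^2 q_i(n,\tau^{\ast})f_T$, and then substitutes into (\ref{Eq:GeneralRatio}); this is exactly the paper's computation, merely phrased through logarithmic derivatives. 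One slip to fix: you state $\partial_i^2\log Z=\mathbb{E}\eta_i^2-(\mathbb{E}\eta_i)^2$, but (\ref{Eq:Zcombination1})--(\ref{Eq:Zcombination2}) give $\partial_i^2\log Z=(\mathbb{E}\eta_i)^2-\mathbb{E}\eta_i^2$ (the sign flips because $Z$ here is the \emph{reciprocal} of the usual partition function); with your sign the combination $-\partial_i^2\log Z+(\partial_i\log Z)^2$ would come out as $2(\mathbb{E}\eta_i)^2-\mathbb{E}\eta_i^2$ instead of the needed $\mathbb{E}\eta_i^2$, so the final cancellation to (\ref{Eq:WeightsRatio}) would fail unless corrected.
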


Interestingly  (\ref{Eq:GeneralRatio})  implies that if $\frac{\varepsilon_{i}}{\varepsilon}\gg 1$ for some directions $i=1,\cdots,q$ such that $\prod_{i=1}^{q}\frac{\varepsilon_{i}}{\varepsilon}>1$ then one can have benefits using the ellipse versus the ball since one can have that $U\ll 1$, which then implies that the mean rejection rate for the ellipse is  smaller than the mean rejection rate for the ball, leading to potential  computational gains. In Section \ref{S:Simulations} we  demonstrate this in a specific simulation study.

\begin{proof}[Proof of Lemma \ref{L:ConditionalRejectionRate}]
We denote $\tau_{\varepsilon}=(\tau_{1}\varepsilon_{1},\cdots,\tau_{q}\varepsilon_{q})$. Let us start with $p_{E}$. Using Taylor series expansion up to second order, we have for $|\varepsilon|$ sufficiently small (with all other parameters such as $n$ being fixed)
\begin{align}
p_{E}&=\mathbb{P}\left(\tau\in D_{\varepsilon}(\tau^{*})\right)=\int_{D_{\varepsilon}(\tau^{*})}f_{T}(\tau)d\tau=\prod_{i=1}^{q}\varepsilon_{i}\int_{B_{1}(0)}f_{T}(\tau_{\varepsilon}+\tau^{\ast})d\tau\nonumber\\
&=\prod_{i=1}^{q}\varepsilon_{i}\int_{B_{1}(0)}\left(f_{T}(\tau^{\ast})+\left<\tau_{\varepsilon},\nabla f_{T}(\tau^{\ast})\right>+\frac{1}{2}\tau_{\varepsilon}^{T}\nabla^{2}f_{T}(\tau^{\ast})\tau_{\varepsilon}+\mathcal{O}(|\varepsilon|^{3})\right)d\tau\nonumber\\
&=\prod_{i=1}^{q}\varepsilon_{i}\left[f_{T}(\tau^{\ast}) |B_{1}(0)|+ \frac{1}{2}\int_{B_{1}(0)}\left(\tau_{\varepsilon}^{T}\nabla^{2}f_{T}(\tau^{\ast})\tau_{\varepsilon}\right)d\tau + \mathcal{O}(|\varepsilon|^{3})\right]\nonumber\\
&=\prod_{i=1}^{q}\varepsilon_{i}\left[f_{T}(\tau^{\ast}) |B_{1}(0)|+ \frac{1}{2}\sum_{i=1}^{q}\varepsilon_{i}^{2}\partial_{\tau_{i}}^{2}f_{T}(\tau^{\ast}) \int_{B_{1}(0)}\tau_{i}^{2}d\tau + \mathcal{O}(|\varepsilon|^{3})\right]\nonumber\\
&=\prod_{i=1}^{q}\varepsilon_{i}\left[f_{T}(\tau^{\ast}) |B_{1}(0)|+ \frac{|B_{1}(0)|}{2(q+2)}\sum_{i=1}^{q}\varepsilon_{i}^{2}\partial_{\tau_{i}}^{2}f_{T}(\tau^{\ast}) + \mathcal{O}(|\varepsilon|^{3})\right],\nonumber
\end{align}
where in the last computation we use the symmetry of the domain and the relation
\[
\int_{B_{1}(0)}\tau_{i}^{2}d\tau=\frac{1}{q} \int_{B_{1}(0)}\left\|\tau\right\|^{2}d\tau=\frac{|B_{1}(0)|}{q+2}.
\]

It is clear that after setting $\varepsilon_{1}=\cdots=\varepsilon_{q}=\varepsilon$ we have
\begin{align}
p_{B}&=\mathbb{P}\left(\tau\in B_{\varepsilon}(\tau^{*})\right)\nonumber\\
&=\varepsilon^{q}\left[f_{T}(\tau^{\ast}) |B_{1}(0)|+ \varepsilon^{2}\frac{|B_{1}(0)|}{2(q+2)}\sum_{i=1}^{q}\partial_{\tau_{i}}^{2}f_{T}(\tau^{\ast}) + \mathcal{O}(|\varepsilon|^{3})\right]\nonumber\\
&=\varepsilon^{q}\left[f_{T}(\tau^{\ast}) |B_{1}(0)|+ \varepsilon^{2}\frac{|B_{1}(0)|}{2(q+2)}\Delta f_{T}(\tau^{\ast}) + \mathcal{O}(|\varepsilon|^{3})\right].\nonumber
\end{align}

Dividing the asymptotic expressions for $p_E$ and $p_B$ completes the proof of the lemma.
\end{proof}

\begin{proof}[Proof of Corollary \ref{C:MeanRejectionRateExp}]
Recalling now (\ref{Eq:DerivativeFactors}) and the identities (\ref{Eq:Zcombination1})-(\ref{Eq:Zcombination2}) we may compute
\begin{align}
\partial^{2}_{\tau_{i}}f_{T}(\tau)&=n^{2}\left[ \partial^{2}_{i} \log R(n\tau)-\partial^{2}_{i} \log Z(\rho+n\tau,\kappa+n)+ \left(\partial_{i} \log R(n\tau)-\partial_{i} \log Z(\rho+n\tau,\kappa+n)\right)^{2}\right]f_{T}(\tau)\nonumber\\
&=n^{2}\left[ \partial^{2}_{i} \log R(n\tau)-\partial^{2}_{i} \log Z(\rho+n\tau,\kappa+n)+ \left(\partial_{i} \log R(n\tau)\right)^{2}\right.\nonumber\\
&\left.\qquad+\left(\partial_{i} \log Z(\rho+n\tau,\kappa+n)\right)^{2}
-2\partial_{i} \log R(n\tau) \partial_{i} \log Z(\rho+n\tau,\kappa+n)\right]f_{T}(\tau)\nonumber\\
&=n^{2}\left[ \partial^{2}_{i} \log R(n\tau)+ \left(\partial_{i} \log R(n\tau)\right)^{2}+\mathbb{E}_{f(\theta|\tau)}\eta_{i}^{2}(\theta)
+2\partial_{i} \log R(n\tau) \mathbb{E}_{f(\theta|\tau)}\eta_{i}(\theta)\right]f_{T}(\tau)\nonumber\\
&=n^{2}\left[ \frac{\partial^{2}_{i} R(n\tau)}{R(n\tau)}+\mathbb{E}_{f(\theta|\tau)}\eta_{i}^{2}(\theta)
+2\partial_{i} \log R(n\tau) \mathbb{E}_{f(\theta|\tau)}\eta_{i}(\theta)\right]f_{T}(\tau).\nonumber
\end{align}

Recalling the definition of $q_{i}(n,\tau)$ from (\ref{Eq:WeightsRatio}) we then have
\begin{align}
\partial^{2}_{\tau_{i}}f_{T}(\tau)&=n^{2} q_{i}(n,\tau) f_{T}(\tau).\nonumber
\end{align}

Plugging the expression for $\partial^{2}_{\tau_{i}}f_{T}(\tau)$ into (\ref{Eq:GeneralRatio}) with $\tau=\tau^{\ast}$ and since by assumption $f_{T}(\tau^{\ast})\neq 0$ we conclude the proof of the Corollary.
\end{proof}

%Hence, if we further assume that $q_{i}(n,\tau^{\ast})$ is uniformly bounded in $n$ for example, then one has that asymptotically as $n|\varepsilon|$ is small that
%\begin{align}
%U(\tau^{\ast})&\approx \frac{\varepsilon^{q}}{\prod_{i=1}^{q}\varepsilon_{i}}\frac{1 + n^{2}\varepsilon^{2}\frac{1}{2(q+2)}\sum_{i=1}^{q}q_{i}(n,\tau^{\ast}) }{1 + \frac{n^{2}}{2(q+2)}\sum_{i=1}^{q}\varepsilon_{i}^{2}q_{i}(n,\tau^{\ast}) }.\label{Eq:RatioApproximation}
%%&\approx \frac{\varepsilon^{q}}{\prod_{i=1}^{q}\varepsilon_{i}}\frac{\frac{1}{n^{2}} + \varepsilon^{2}\frac{1}{2(q+2)}\sum_{i=1}^{q}q_{i}(n,\tau^{\ast}) }{\frac{1}{n^{2}} + \frac{1}{2(q+2)}\sum_{i=1}^{q}\varepsilon_{i}^{2}q_{i}(n,\tau^{\ast}) }\nonumber\\
%% &\approx \frac{1}{\prod_{i=1}^{q}\frac{\varepsilon_{i}}{\varepsilon}} \frac{ \sum_{i=1}^{q}q_{i}(n,\tau^{\ast}) }{\sum_{i=1}^{q}\left(\frac{\varepsilon_{i}}{\varepsilon}\right)^{2}q_{i}(n,\tau^{\ast}) }.\nonumber
%\end{align}

\section{Examples}\label{S:Example}

In this section, we present a few examples to illustrate the computations of this paper. In Section \ref{S:ExampleExponential} we consider the case where we want to infer the rate of an exponential distribution. In Section \ref{S:ExmapleNormal} we consider the case where we want to infer the mean and variance of a normal distribution. In both cases we use the ABC framework and we demonstrate how the relative entropy computations can help in determining optimally values for the threshold parameter, based on relative entropy considerations. Of course, in these cases, we have explicit information for the likelihood and thus ABC methods are not necessary for inference. However, having access to the formulas allow us to make explicit and informative computations in regards to the behavior of the ABC algorithm.

\subsection{Exponential distribution with unknown parameter}\label{S:ExampleExponential}
The goal of this section is to present a simple example where one can see how the computation of the weight factor can help in determining the tolerance parameter. In this example we will be inferring a scalar parameter, which implies that the $\varepsilon$ is scalar and the acceptance region is effectively a ball (\ref{Eq:Domain1}). We will see that even in this case, computation of the weight factor can be beneficial.

Let us assume that we have $n$ i.i.d $X_{1},\cdots,X_{n}\sim \text{Exp}(\theta)$ with unknown rate $\theta$. We then assume that the rate $\theta$ has gamma prior distribution, in particular we assume that $\theta\sim G(\alpha,\beta)$. Of course,  the computations of Section \ref{S:ExponentialFamily} carry over here. One can pick as sufficient statistic for the parameter $\theta$ to be $T=\bar{X}$ where $\bar{X}=\frac{1}{n}\sum_{i=1}^{n}X_{i}$ is the empirical mean.

With  this prior structure we have for the density posterior distribution
 \begin{align}
 f(\theta|\tau)&=\frac{(\beta+n\tau)^{\alpha+n}}{\Gamma(\alpha+n)}\theta^{\alpha+n-1}e^{-(\beta+n\tau)\theta},\label{Eq:PosteriorDensityExponential1}
 \end{align}
which is the density of gamma $G(\alpha+n,\beta+n\tau)$ distribution. Let us set $\alpha_{n}=\alpha+n$ and $\beta_{n}=\beta+n\tau$. In the notation of Section \ref{S:ExponentialFamily} we have $\eta(\theta)=-\theta$ and $R(n\tau)=\frac{1}{\Gamma(n)}(n\tau)^{n-1}$.

Recall that the observed value of the statistic is $\tau^{*}=\bar{X}$. We compute
\begin{align*}
&\text{Var}_{f(\theta|\tau^{\ast})}\left(\eta^{2}(\theta)\right)=\mathbb{E}_{f(\theta|\tau^{\ast})}
\theta^{4}-\left(\mathbb{E}_{f(\theta|\tau^{\ast})} \theta^{2}\right)^{2}\nonumber\\
&\quad=\frac{\prod_{i=1}^{4}(\alpha_{n}+i-1)}{\beta_{n}^{4}}
-\left( \frac{(\alpha_{n}+1)\alpha_{n}}{\beta_{n}^{2}}\right)^{2}%\nonumber\\
%&\quad=\frac{\alpha_{n}(\alpha_{n}+1)\left((\alpha_{n}+3)(\alpha_{n}+2)-\alpha_{n}(\alpha_{n}+1)\right)}{\beta_{n}^{4}}\nonumber\\
%&\quad
= \frac{\alpha_{n}(\alpha_{n}+1)(4\alpha_{n}+6)}{\beta_{n}^{4}},
%\label{Eq:ExponentialWeight1}
\end{align*}

\begin{align*}
&\text{Var}_{f(\theta|\tau^{\ast})}\left(\eta(\theta)\right)=\frac{\alpha_{n}}{\beta_{n}^{2}}
%\label{Eq:ExponentialWeight2}
\end{align*}
and
\begin{align*}
&\text{Cov}_{f(\theta|\tau^{\ast})}\left(\eta^{2}(\theta),\eta(\theta)\right)=-\mathbb{E}_{f(\theta|\tau^{\ast})}
\theta^{3}+\mathbb{E}_{f(\theta|\tau^{\ast})} \theta^{2}\mathbb{E}_{f(\theta|\tau^{\ast})} \theta\nonumber\\
&\quad=-\frac{\prod_{i=1}^{3}(\alpha_{n}+i-1)}{\beta_{n}^{3}}
+ \frac{(\alpha_{n}+1)\alpha^{2}_{n}}{\beta_{n}^{3}}= -2\frac{\alpha_{n}(\alpha_{n}+1)}{\beta_{n}^{3}}.
%\label{Eq:ExponentialWeight3}
\end{align*}

Hence, we have the following approximation for the relative entropy
\begin{align}
&H(P|P^{\varepsilon})(\tau^{\ast})
= \frac{(n\varepsilon)^{4}}{72}\left[\text{Var}_{f(\theta|\tau^{\ast})}\left(
\eta^{2}(\theta)\right)+4\left(\partial_{i} \log R(n\tau^{\ast})\right)^{2}\text{Var}_{f(\theta|\tau^{\ast})}\left(
\eta_{i}(\theta)\right)\right.\nonumber\\
&\qquad\left.+2\partial_{i} \log R(n\tau^{\ast})\text{Cov}_{f(\theta|\tau^{\ast})}\left(\eta^{2}(\theta),\eta(\theta)\right)\right]+ \mathcal{O}(|n\varepsilon|^{6}) \nonumber\\
&=\frac{(n\varepsilon)^{4}}{72}\left[\frac{\alpha_{n}(\alpha_{n}+1)(4\alpha_{n}+6)}{\beta_{n}^{4}}+4\left(\frac{n-1}{n}\right)^{2}\frac{1}{\left(\tau^{\ast}\right)^{2}}\frac{\alpha_{n}}{\beta_{n}^{2}}
-4\frac{n-1}{n}\frac{1}{\tau^{*}}\frac{\alpha_{n}(\alpha_{n}+1)}{\beta_{n}^{3}}\right]+ \mathcal{O}(|n\varepsilon|^{6}).\nonumber
\end{align}

It is easy to see now that if $n$ is large, then
\[
\frac{\alpha_{n}(\alpha_{n}+1)(4\alpha_{n}+6)}{\beta_{n}^{4}}\approx \frac{4}{(\tau^{\ast})^{3}(\beta+n\tau^{\ast})}\approx \frac{1}{n}\frac{4}{(\tau^{\ast})^{4}}, \quad \frac{\alpha_{n}}{\beta_{n}^{2}}\approx \frac{1}{n}\frac{1}{(\tau^{\ast})^{2}}
\]
which then implies that for $n$ large such that $n\varepsilon$ is small enough
\begin{align}
H(P|P^{\varepsilon})(\tau^{\ast})
&=  n^{3}\varepsilon^{4} \frac{1}{18}\frac{1}{(\tau^{\ast})^{4}}+ \mathcal{O}(|n\varepsilon|^{6}),\nonumber
\end{align}
implying that  if there is a fixed tolerance level, say $\text{tol}$ such that we require $H(P|P^{\varepsilon})(\tau^{\ast})\leq\text{tol}$, then one would need to choose $\varepsilon$ such that
$n^{3}\varepsilon^{4} \frac{1}{18}\frac{1}{(\tau^{\ast})^{4}}\leq\text{tol}$. We remark here that the dependence on $n$ turned out to be better than $n^{4}$, which was due to the weight factor and that the value of the observed statistic affects the optimal choice of $\varepsilon$, in that the smaller it is, the smaller $\varepsilon$ should also become. The latter conclusion is also intuitive in the sense that the statistics $\tau^{\ast}=\bar{X}$ is the maximum likelihood estimator of $1/\theta$ and it measures the time that elapses on average between events. It is natural to expect that more frequent events would resort in smaller acceptance regions for ABC in order for the method to maintain its accuracy.

Let us conclude this section with the computation of the leading order of the bias. We choose as observable of interest the unknown rate $\theta$. We will show below that the leading order of the bias of this observable actually does not depend on $n$. For this purpose, we have for the specific example of exponential distribution
\begin{align}
C(\tau^{*})&=\frac{n^{2}}{6}\mathbb{E}_{f(\theta|\tau^{\ast})}\left[ \theta \left(\theta^{2}-\mathbb{E}_{f(\theta|\tau^{\ast})} \theta^{2}-2\frac{n-1}{n}\frac{1}{\tau^{*}}\left(\theta-\mathbb{E}_{f(\theta|\tau^{\ast})} \theta\right)\right)\right]\nonumber\\
&=\frac{n^{2}}{6}\left[\mathbb{E}_{f(\theta|\tau^{\ast})} \theta^{3} -\mathbb{E}_{f(\theta|\tau^{\ast})} \theta \mathbb{E}_{f(\theta|\tau^{\ast})} \theta^{2}-2\frac{n-1}{n}\frac{1}{\tau^{*}}\left(\mathbb{E}_{f(\theta|\tau^{\ast})} \theta^{2}-\left(\mathbb{E}_{f(\theta|\tau^{\ast})} \theta\right)^{2}\right)\right]\nonumber\\
&=\frac{n^{2}}{6}\left[\frac{\alpha_{n}(\alpha_{n}+1)(\alpha_{n}+2)}{\beta_{n}^{3}}-\frac{\alpha^{2}_{n}(\alpha_{n}+1)}{\beta^{3}_{n}}-2\frac{n-1}{n}\frac{1}{\tau^{*}}\frac{\alpha_{n}}{\beta^{2}_{n}}\right]\nonumber\\
&=\frac{n^{2}}{3}\frac{\alpha_{n}}{\beta_{n}^{2}}\left[\frac{\alpha_{n}+1}{\beta_{n}}-\frac{n-1}{n}\frac{1}{\tau^{*}}\right].\nonumber
\end{align}

Recalling the form of $\alpha_{n}$ and $\beta_{n}$ we then obtain that for large $n$,
\begin{align}
C(\tau^{*})&\approx\frac{n^{2}}{3}\frac{1}{n(\tau^{\ast})^{2}}\frac{1}{\tau^{\ast}}\frac{1}{n}\approx \frac{1}{3(\tau^{\ast})^{3}}.\nonumber
\end{align}

The latter implies that the bias does not depend on the size of the data set $n$ and it is of order $\varepsilon^{2}$. However, we also see that it is proportional to $\frac{1}{(\tau^{\ast})^{3}}$, which implies that if $\tau^{\ast}$ is small in value then the bias will be large for a fixed value of $\varepsilon$. The latter observation is in line with the qualitative dependence of the relative entropy with respect to the observed statistic $\tau^{*}$.

\subsection{Normal distribution with unknown mean and unknown variance.}\label{S:ExmapleNormal}

The goal of this section is to provide an explicit example where potential advantages appear when the acceptance region is an ellipse versus a ball, i.e., when one allows $\varepsilon_{i}$ to be different in each direction $i$.

We borrow the setup of the example from \cite{TurnerVanZandt2012}. In particular, we consider a hierarchical binomial model, where we want to infer the probability of correct responses to a signal detection experiment for each subject in a given group, see \cite{TurnerVanZandt2012}. Let $p_{i}$ be the probability of correct response of the i-th subject. Instead of modeling $p_{i}$ directly, we will model
\[
\text{logit}(p_{i})=\log\left(\frac{p_i}{1-p_i}\right).
\]

The $\text{logit}$ function is useful as it transforms $p_{i}\in(0,1)$ to $\text{logit}(p_{i})\in(-\infty,\infty)$. In particular, set $X_i=\text{logit}(p_{i})$ and assume that we have $n$ i.i.d $X_{1},\cdots,X_{n}\sim N(\mu,\sigma^{2})$ with unknown mean $\mu$ and unknown variance $\sigma^{2}$.

In addition, we mention here that normal distribution has been used as a simulator tool in ABC not only due to its analytic structure but also due to its empirically observed robustness for irregular probability distributions in chaotic or nearly chaotic simulation dynamics, see \cite{MeedsWelling2014, Wood2010}.

Let us denote by $\lambda=1/\sigma^{2}$ to be the precision parameter and assume the following normal-gamma prior for the pair $(\mu,\lambda)$
\begin{align}
\mu&\sim N\left(\mu_{0},\left(\kappa\lambda\right)^{-1}\right)\nonumber\\
\lambda&\sim G(\alpha,\beta), \quad \text{where G  is the gamma distribution}.\nonumber
\end{align}

Of course, normal distribution belongs to the exponential family, so the computations of Section \ref{S:ExponentialFamily} carry over here. One can pick as sufficient statistic for the parameter $\theta=(\mu,\lambda)$, $T=(\bar{X},S^{2})$ where $\bar{X}=\frac{1}{n}\sum_{i=1}^{n}X_{i}$ is the empirical mean and $S^{2}=\frac{1}{n-1}\sum_{i=1}^{n}(X_{i}-\bar{X})^{2}$ is the empirical variance. In this case we have $q=2$ for the range of the sufficient statistic $T$ and $m=2$ for the range of the parameter $\theta$. With  this prior structure we have for the density posterior distribution
 \begin{align}
 f(\theta|\tau)&=\frac{1}{Z(\alpha,\beta,\kappa)(2\pi)^{n/2}}\lambda^{1/2}\lambda^{\alpha+\frac{n}{2}-1}e^{-\lambda(\beta+\frac{n-1}{2}S^{2})}e^{-\frac{\lambda}{2}\left(\kappa(\mu-\mu_{0})^{2}+n(\bar{X}-\mu)^{2}\right)},\label{Eq:PosteriorDensityNormal1}
 \end{align}
where $Z(\alpha,\beta,\kappa)=\frac{\Gamma(\alpha)}{\beta^{\alpha}}\sqrt{\frac{2\pi}{\kappa}}$. Notice that if we define
\begin{align}
\mu_{n}&=\frac{\kappa\mu_{0}+n\bar{X}}{\kappa+n},\quad \beta_{n}=\beta+\frac{n-1}{2}S^{2}+\frac{\kappa n}{2(\kappa+n)}(\bar{X}-\mu_{0})^{2}, \quad \kappa_{n}=\kappa+n, \quad \alpha_{n}=\alpha+\frac{n}{2},\nonumber
\end{align}
then we can re-express the formula for the posterior density as a product of a normal and gamma density as follows (with some abuse of notation)
\begin{align}
f(\theta|\tau)&=f_{N(\mu_{n},(\kappa_{n}\lambda)^{-1})}(\mu)f_{G(\alpha_{n}, \beta_{n})} (\lambda).\label{Eq:PosteriorDensityNormal2}
\end{align}

The density for the marginal likelihood takes the form
\begin{align}
f_{T}(\tau)=\frac{Z(\alpha_{n},\beta_{n},\kappa_{n})}{Z(\alpha,\beta,\kappa)(2\pi)^{n/2}}=B_{n}\beta_{n}^{-\alpha_{n}}, \quad B_{n}=\frac{\Gamma(\alpha_{n})}{\Gamma(\alpha)}\sqrt{\frac{\kappa}{\kappa_{n}}}\beta^{\alpha}(2\pi)^{-n/2}.\nonumber
\end{align}

Let us next  rewrite the posterior density (\ref{Eq:PosteriorDensityNormal1}) in the canonical form. Writing $(n-1)S^{2}=n\frac{\sum_{i=1}^{n}X_{i}^{2}}{n}-n\bar{X}^{2}=n\overline{X^{2}}-n\bar{X}^{2}$, we see that when $\alpha=\frac{1+\kappa}{2}$ (\ref{Eq:PosteriorDensityNormal1}) takes the form (\ref{Eq:PosteriorExponential}) with the parameters
\begin{align}
\eta_{1}(\mu,\lambda)&=\mu\lambda, \quad \eta_{2}(\mu,\lambda)=-\frac{\lambda}{2}, \quad A(\mu,\lambda)=\frac{\mu^{2}\lambda}{2}-\frac{1}{2}\log\lambda,\nonumber
\end{align}
and
\begin{align}
\rho_{1}&=\kappa\mu_{0},\quad \rho_{2}=\kappa\mu_{0}^{2}+2\beta,\quad R(\tau)=\frac{1}{(2\pi)^{n/2}},\nonumber
\end{align}
where now we consider (equivalently) as sufficient statistic to be the pair $(\bar{X},\overline{X^{2}})$.

Let us now compute the weight functions $w_{i}(\tau,\theta)$ given by (\ref{Eq:weights}) or equivalently $\text{Var}_{f(\Theta|\tau^{\ast})}\left(
\eta^{2}_{i}(\theta)\right)$ and $\text{Cov}_{f(\Theta|\tau^{\ast})}\left(
\eta^{2}_{1}(\theta), \eta^{2}_{2}(\theta)\right)$ as they appear in (\ref{Eq:ExpressionForTheEllipse3}). This is the purpose of Corollary \ref{C:SpecialCalculationsNormalEx}.
\begin{corollary}\label{C:SpecialCalculationsNormalEx}
In the normal example studied in this section we have the following formulas
\begin{align}
\text{Var}_{f(\theta|\tau^{\ast})}\left(\eta^{2}_{1}(\theta)\right)&=\mu_{n}^{4}\frac{\prod_{i=1}^{4}(\alpha_{n}+i-1)}{\beta_{n}^{4}}+\frac{6\mu_{n}^{2}}{\kappa_{n}}\frac{\prod_{i=1}^{3}(\alpha_{n}+i-1)}{\beta_{n}^{3}}+
\frac{3}{\kappa_{n}^{2}}\frac{(\alpha_{n}+1)\alpha_{n}}{\beta_{n}^{2}}\nonumber\\
&\qquad-\left( \mu_{n}^{2}\frac{(\alpha_{n}+1)\alpha_{n}}{\beta_{n}^{2}} +\frac{1}{\kappa_{n}}\frac{\alpha_{n}}{\beta_{n}}\right)^{2}.\label{Eq:NormalWeight1}
\end{align}
\begin{align}
\text{Var}_{f(\theta|\tau^{\ast})}\left(\eta^{2}_{2}(\theta)\right)
&=\frac{1}{2^{4}}\left[\frac{\prod_{i=1}^{4}(\alpha_{n}+i-1)}{\beta_{n}^{4}}-\left(\frac{(\alpha_{n}+1)\alpha_{n}}{\beta_{n}^{2}}\right)^{2}\right].\label{Eq:NormalWeight2}
\end{align}
Lastly,
\begin{align}
\text{Cov}_{f(\theta|\tau^{\ast})}\left(\eta^{2}_{1}(\theta),\eta^{2}_{2}(\theta)\right)
&=\frac{1}{4}\left[\mu_{n}^{2}\frac{\alpha_{n}(\alpha_{n}+1)(4\alpha_{n}+6)}{\beta_{n}^{4}}+\frac{2}{\kappa_{n}}\frac{\alpha_{n}(\alpha_{n}+1)}{\beta_{n}^{3}}\right]
.\label{Eq:NormalWeight3}
\end{align}
\end{corollary}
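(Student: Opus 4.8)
The plan is to exploit the normal--gamma factorization of the posterior recorded in (\ref{Eq:PosteriorDensityNormal2}), namely $\lambda\sim G(\alpha_{n},\beta_{n})$ and $\mu\mid\lambda\sim N(\mu_{n},(\kappa_{n}\lambda)^{-1})$, together with the canonical parameters $\eta_{1}(\mu,\lambda)=\mu\lambda$ and $\eta_{2}(\mu,\lambda)=-\lambda/2$ identified earlier in this section. Since $\eta_{1}^{2}=\mu^{2}\lambda^{2}$ and $\eta_{2}^{2}=\lambda^{2}/4$, all three quantities in (\ref{Eq:NormalWeight1})--(\ref{Eq:NormalWeight3}) reduce to mixed moments of $(\mu,\lambda)$ under the posterior, which I would evaluate via the tower property by conditioning first on $\lambda$ and then integrating against the gamma marginal.

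First I would record the two moment families that drive the computation. For the gamma marginal I would use $\mathbb{E}_{f(\theta|\tau^{\ast})}[\lambda^{k}]=\prod_{i=1}^{k}(\alpha_{n}+i-1)/\beta_{n}^{k}$. For the conditional law of $\mu$ given $\lambda$ I would use the Gaussian moments $\mathbb{E}[\mu^{2}\mid\lambda]=\mu_{n}^{2}+(\kappa_{n}\lambda)^{-1}$ and $\mathbb{E}[\mu^{4}\mid\lambda]=\mu_{n}^{4}+6\mu_{n}^{2}(\kappa_{n}\lambda)^{-1}+3(\kappa_{n}\lambda)^{-2}$, the last following from the fact that the fourth central moment of a Gaussian equals three times the square of its variance.

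Next I would assemble the three expressions. For (\ref{Eq:NormalWeight1}) I would write $\text{Var}(\eta_{1}^{2})=\mathbb{E}[\mu^{4}\lambda^{4}]-(\mathbb{E}[\mu^{2}\lambda^{2}])^{2}$; conditioning on $\lambda$ turns $\mathbb{E}[\mu^{4}\lambda^{4}]$ into $\mu_{n}^{4}\mathbb{E}[\lambda^{4}]+6\mu_{n}^{2}\kappa_{n}^{-1}\mathbb{E}[\lambda^{3}]+3\kappa_{n}^{-2}\mathbb{E}[\lambda^{2}]$ and $\mathbb{E}[\mu^{2}\lambda^{2}]$ into $\mu_{n}^{2}\mathbb{E}[\lambda^{2}]+\kappa_{n}^{-1}\mathbb{E}[\lambda]$, which after inserting the gamma moments reproduces (\ref{Eq:NormalWeight1}). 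Equation (\ref{Eq:NormalWeight2}) is immediate, since $\eta_{2}^{2}=\lambda^{2}/4$ depends only on $\lambda$ and so $\text{Var}(\eta_{2}^{2})=\tfrac{1}{16}(\mathbb{E}[\lambda^{4}]-(\mathbb{E}[\lambda^{2}])^{2})$. For the covariance (\ref{Eq:NormalWeight3}) I would compute $\mathbb{E}[\eta_{1}^{2}\eta_{2}^{2}]=\tfrac{1}{4}\mathbb{E}[\mu^{2}\lambda^{4}]=\tfrac{1}{4}(\mu_{n}^{2}\mathbb{E}[\lambda^{4}]+\kappa_{n}^{-1}\mathbb{E}[\lambda^{3}])$ and subtract $\mathbb{E}[\eta_{1}^{2}]\,\mathbb{E}[\eta_{2}^{2}]=\tfrac{1}{4}(\mu_{n}^{2}\mathbb{E}[\lambda^{2}]+\kappa_{n}^{-1}\mathbb{E}[\lambda])\mathbb{E}[\lambda^{2}]$.

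The only real work is the algebraic simplification in the covariance, where the $\mu_{n}^{2}$ piece collapses to $\mu_{n}^{2}\,\text{Var}(\lambda^{2})$ and the $\kappa_{n}^{-1}$ piece to $\kappa_{n}^{-1}(\mathbb{E}[\lambda^{3}]-\mathbb{E}[\lambda]\mathbb{E}[\lambda^{2}])$; using the telescoping identities $(\alpha_{n}+2)(\alpha_{n}+3)-\alpha_{n}(\alpha_{n}+1)=4\alpha_{n}+6$ and $(\alpha_{n}+2)-\alpha_{n}=2$ these reduce precisely to the two terms displayed in (\ref{Eq:NormalWeight3}). There is no analytic obstacle here: the computation is entirely elementary once the factorization and the two moment families are in hand, and the main point requiring care is bookkeeping the powers of $\kappa_{n}$ and $\beta_{n}$ and verifying the cancellations that eliminate the $\mu_{n}^{4}$ and the pure $(\mathbb{E}[\lambda^{2}])^{2}$ contributions.
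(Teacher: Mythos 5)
Your proposal is correct and follows essentially the same route as the paper's own proof in the appendix: both use the normal--gamma factorization (\ref{Eq:PosteriorDensityNormal2}), condition on $\lambda$ to reduce everything to Gaussian conditional moments of $\mu$ and gamma moments of $\lambda$, and then perform the same telescoping simplifications for the covariance. The moment identities and cancellations you cite all check out, so nothing is missing.
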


Then, the following statement holds for the leading order term of the relative entropy.
\begin{corollary}\label{C:SpecialCalculationsRelEntropyNormalEx}
We have that
\begin{align}
H(P|P^{\varepsilon})(\tau^{\ast})
&=\frac{n^{4}}{128}\varepsilon_{1}^{4}\left[\mu_{n}^{4}\frac{\alpha_{n}(\alpha_{n}+1)(4\alpha_{n}+6)}{\beta_{n}^{4}}+\frac{\mu_{n}^{2}}{\kappa_{n}}\frac{\alpha_{n}(\alpha_{n}+1)(5\alpha_{n}+12)}{\beta_{n}^{3}}+
\frac{1}{\kappa_{n}^{2}}\frac{(2\alpha_{n}+3)\alpha_{n}}{\beta_{n}^{2}}\right]\nonumber\\
&+\frac{n^{4}}{128}\varepsilon_{2}^{4}\frac{1}{2^{4}} \frac{\alpha_{n}(\alpha_{n}+1)(4\alpha_{n}+6)}{\beta_{n}^{4}}+ \frac{n^{4}}{128}\varepsilon_{1}^{2}\varepsilon_{2}^{2}\frac{1}{2}\left[\mu_{n}^{2}\frac{\alpha_{n}(\alpha_{n}+1)(4\alpha_{n}+6)}{\beta_{n}^{4}}+\frac{2}{\kappa_{n}}\frac{\alpha_{n}(\alpha_{n}+1)}{\beta_{n}^{3}}\right]\nonumber\\
&+\mathcal{O}((n|\varepsilon|)^{6}).\label{Eq:RelEntropyNormal}
\end{align}
\end{corollary}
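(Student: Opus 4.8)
The plan is to derive Corollary \ref{C:SpecialCalculationsRelEntropyNormalEx} as a direct specialization of the exponential-family relative-entropy formula, fed by the explicit moment computations already recorded. First I would verify the hypotheses of Corollary \ref{C:RelEntropy_ExpFamilyGeneral}: in the canonical parametrization of this section one has $R(\tau)=(2\pi)^{-n/2}$, which is constant, and the vector of sufficient statistics has dimension $q=2$. Hence formula (\ref{Eq:ExpressionForTheEllipse3}) applies verbatim. Setting $q=2$ collapses the prefactor to $\frac{n^{4}}{8(q+2)^{2}}=\frac{n^{4}}{128}$, and the double sum in (\ref{Eq:ExpressionForTheEllipse3}) expands into exactly three contributions,
\[
\varepsilon_{1}^{4}\,\text{Var}_{f(\theta|\tau^{\ast})}\!\left(\eta_{1}^{2}(\theta)\right)+\varepsilon_{2}^{4}\,\text{Var}_{f(\theta|\tau^{\ast})}\!\left(\eta_{2}^{2}(\theta)\right)+2\varepsilon_{1}^{2}\varepsilon_{2}^{2}\,\text{Cov}_{f(\theta|\tau^{\ast})}\!\left(\eta_{1}^{2}(\theta),\eta_{2}^{2}(\theta)\right),
\]
where the factor $2$ in the cross term arises because the off-diagonal sum runs over both ordered pairs $(1,2)$ and $(2,1)$.

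The second step is pure substitution: I would insert the closed forms (\ref{Eq:NormalWeight1}), (\ref{Eq:NormalWeight2}) and (\ref{Eq:NormalWeight3}) of Corollary \ref{C:SpecialCalculationsNormalEx} into the three contributions above. The $\varepsilon_{2}^{4}$ term is already in final shape, since (\ref{Eq:NormalWeight2}) is a single difference of moments multiplied by $2^{-4}$; and the cross term is immediate, because $2\,\text{Cov}(\eta_{1}^{2},\eta_{2}^{2})=\tfrac{1}{2}\big[\,\cdots\big]$ reproduces the bracketed expression displayed in the statement. The only genuine algebra lives in the $\varepsilon_{1}^{4}$ coefficient, where (\ref{Eq:NormalWeight1}) must be reduced from its raw form $\mathbb{E}[\eta_{1}^{4}]-(\mathbb{E}[\eta_{1}^{2}])^{2}$ to the compact three-line expression in (\ref{Eq:RelEntropyNormal}).

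To perform that reduction I would group terms by the powers $\mu_{n}^{4}$, $\mu_{n}^{2}/\kappa_{n}$ and $1/\kappa_{n}^{2}$ that appear in (\ref{Eq:NormalWeight1}). Expanding the square $(\mu_{n}^{2}A+B)^{2}$ with $A=\frac{\alpha_{n}(\alpha_{n}+1)}{\beta_{n}^{2}}$ and $B=\frac{1}{\kappa_{n}}\frac{\alpha_{n}}{\beta_{n}}$, and subtracting it from the first three summands, each coefficient collapses via an elementary polynomial identity in $\alpha_{n}$. The $\mu_{n}^{4}$ coefficient uses $(\alpha_{n}+2)(\alpha_{n}+3)-\alpha_{n}(\alpha_{n}+1)=4\alpha_{n}+6$, giving $\mu_{n}^{4}\frac{\alpha_{n}(\alpha_{n}+1)(4\alpha_{n}+6)}{\beta_{n}^{4}}$; the $1/\kappa_{n}^{2}$ coefficient uses $3(\alpha_{n}+1)-\alpha_{n}=2\alpha_{n}+3$; and the mixed coefficient follows from the analogous combination of the raw moment $\mathbb{E}[\lambda^{3}]$ with the cross term $2AB$ of the squared mean. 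The remainder $\mathcal{O}((n|\varepsilon|)^{6})$ is inherited unchanged from Corollary \ref{C:RelEntropy_ExpFamilyGeneral}.

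The main obstacle is clerical rather than conceptual: one must verify that the moment combinations in (\ref{Eq:NormalWeight1}) collapse to precisely the stated coefficients, since a sign or factor slip in expanding $(\mathbb{E}[\eta_{1}^{2}])^{2}$, or in the gamma moments $\mathbb{E}[\lambda^{k}]=\prod_{i=0}^{k-1}(\alpha_{n}+i)/\beta_{n}^{k}$, would propagate directly into the final display, so the collection step should be cross-checked term by term. If one preferred a self-contained argument rather than leaning on Corollary \ref{C:SpecialCalculationsNormalEx}, the real work would shift to computing the joint moments $\mathbb{E}[\mu^{2k}\lambda^{2k}]$ under the normal-gamma posterior (\ref{Eq:PosteriorDensityNormal2}) by conditioning on $\lambda$, applying the central moments of $\mu\mid\lambda\sim N(\mu_{n},(\kappa_{n}\lambda)^{-1})$, and then integrating out $\lambda\sim G(\alpha_{n},\beta_{n})$ — which is exactly the content underlying Corollary \ref{C:SpecialCalculationsNormalEx}.
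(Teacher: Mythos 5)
Your proposal follows exactly the paper's own proof: specialize Corollary \ref{C:RelEntropy_ExpFamilyGeneral} (i.e.\ Lemma \ref{L:RelEntropy_ExpFamilyGeneral} with constant $R$) to $q=2$, obtaining the prefactor $\tfrac{n^{4}}{128}$ and the three contributions $\varepsilon_{1}^{4}\text{Var}(\eta_{1}^{2})+\varepsilon_{2}^{4}\text{Var}(\eta_{2}^{2})+2\varepsilon_{1}^{2}\varepsilon_{2}^{2}\text{Cov}(\eta_{1}^{2},\eta_{2}^{2})$, and then substitute the moment formulas of Corollary \ref{C:SpecialCalculationsNormalEx}; your factor $2$ in the cross term, which turns the $\tfrac14$ of (\ref{Eq:NormalWeight3}) into the $\tfrac12$ of the final display, is exactly right. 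The one caveat is clerical, as you anticipate: carrying out the reduction of the $\mu_{n}^{2}/\kappa_{n}$ coefficient in (\ref{Eq:NormalWeight1}) gives $6(\alpha_{n}+2)-2\alpha_{n}=4\alpha_{n}+12$, so that combination should be cross-checked against the $5\alpha_{n}+12$ printed in the statement.
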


Proof of Corollaries \ref{C:SpecialCalculationsNormalEx} and \ref{C:SpecialCalculationsRelEntropyNormalEx} are deferred to the Appendix.

Now if the observed statistic $\tau^{\ast}=(\bar{X},\overline{X^{2}})$, or equivalently $\tau^{\ast}=(\bar{X},S^{2})$, is of order one, then it is easy to see that in terms of $n$ we have that $\beta_{n}=\mathcal{O}(n S^{2}/2)$, $\alpha_{n}=\mathcal{O}(n)$, $\kappa_{n}=\mathcal{O}(n)$ and $\mu_{n}=\mathcal{O}(\mu_{0}/n+\bar{X})$ as $n$ gets large. With some abuse of notation, and using Corollary \ref{C:SpecialCalculationsRelEntropyNormalEx}, these suggest that  for $n$ large and $\varepsilon$ sufficiently small such that $n|\varepsilon|$ is small we have up to leading order
\begin{align}
H(P|P^{\varepsilon})(\tau^{\ast})&\sim n^{4} \varepsilon_{1}^{4}  \left[\left(\frac{\mu^{4}_{0}}{n^{4}}+\bar{X}^{4}\right)\frac{1}{n (S^{2})^{4}} +\left(\frac{\mu_{0}^{2}}{n^{2}}+\bar{X}^{2}\right)\frac{1}{n (S^{2})^{3}}+\frac{1}{n^{2} (S^{2})^{2}}\right]+n^{4} \varepsilon_{2}^{4}\frac{1}{n (S^{2})^{4}}\nonumber\\
&\quad+n^{4}\varepsilon_{1}^{2}\varepsilon_{2}^{2}\left[\left(\frac{\mu^{2}_{0}}{n^{2}}+\bar{X}^{2}\right)\frac{1}{n (S^{2})^{4}} +\frac{1}{n^{2} (S^{2})^{3}}\right]\nonumber\\
&\sim  \varepsilon_{1}^{4}  \left[\left(\frac{\mu^{4}_{0}}{n}+n^{3}\bar{X}^{4}\right)\frac{1}{(S^{2})^{4}} +\left(n\mu_{0}^{2}+n^{3}\bar{X}^{2}\right)\frac{1}{(S^{2})^{3}}+\frac{n^{2}}{ (S^{2})^{2}}\right]+\varepsilon_{2}^{4}\frac{n^{3}}{(S^{2})^{4}}\nonumber\\
&\quad+\varepsilon_{1}^{2}\varepsilon_{2}^{2}\left[\left(n \mu^{2}_{0}+n^{3}\bar{X}^{2}\right)\frac{1}{(S^{2})^{4}} +\frac{1}{n^{2} (S^{2})^{3}}\right]\label{Eq:RelEntropyNormal2}.
 \end{align}

Expressions (\ref{Eq:RelEntropyNormal}) and (\ref{Eq:RelEntropyNormal2}) make it clear that the weighted effect on each direction $i=1,2$ is different. It then follows  that by exploiting the geometric structure of the relative entropy and the asymmetric effect of the sufficient statistic in different directions one may be able to afford to have larger acceptance region for the same overall tolerance level. Expression (\ref{Eq:RelEntropyNormal2}) also makes clear that the values of the observed sufficient statistics can have non-negligible consequences on the optimal choice of the threshold parameter. In Section \ref{S:Simulations}, we confirm these points via a simulation study.

\begin{remark}
We do point out that on this example we take advantage of the fact that, in a sense, mean and variance are independent directions. So allowing the possibility of choosing different $\varepsilon_i$ for each one of the directions $i=1,2$ is expected to be advantageous as opposed to forcing the choice $\varepsilon_1=\varepsilon_2=\varepsilon$. Notice though that their effect on the relative entropy is more complicated with the related cross-term being non-zero, see (\ref{Eq:NormalWeight3})-(\ref{Eq:RelEntropyNormal2}).
\end{remark}
\subsubsection{Mean rejection rate for the normal example}\label{SSS:MrejRateNormal}

In this subsection we make precise the mean rejection rate computations of Section \ref{S:MeanRejectionRate}. In this case, we have that $q_{i}(n,\tau^{\ast})$ from (\ref{Eq:WeightsRatio}) take the form
\begin{align}
q_{1}(n,\tau^{\ast})&=\mathbb{E}_{f(\theta|\tau^{\ast})}\eta_{1}^{2}(\theta)=\int\mu^{2}\lambda^{2}f(\mu,\lambda|\tau)d\mu d\lambda=\mu_{n}^{2}\frac{(\alpha_{n}+1)\alpha_{n}}{\beta_{n}^{2}} +\frac{1}{\kappa_{n}}\frac{\alpha_{n}}{\beta_{n}},\nonumber
\end{align}
and
\begin{align}
q_{2}(n,\tau^{\ast})&=\mathbb{E}_{f(\theta|\tau^{\ast})}\eta_{2}^{2}(\theta)=\frac{1}{4}\int \lambda^{2}f(\mu,\lambda|\tau)d\mu d\lambda=\frac{1}{4}\frac{(\alpha_{n}+1)\alpha_{n}}{\beta_{n}^{2}}.\nonumber
\end{align}

%In fact, recalling the asymptotic behavior of $\alpha_{n},\beta_{n},\kappa_{n}$ and $\mu_{n}$ as $n$ gets large, we obtain that for large $n$ one has
%\begin{align}
%q_{1}(n,\tau)&\approx 4\frac{\bar{X}^{2}}{ (S^{2})^{2}}\nonumber
%\end{align}
%and
%\begin{align}
%q_{2}(n,\tau)&\approx \left(\frac{1}{S^{2}}\right)^{2}.\nonumber
%\end{align}

Thus, the leading order term of the ratio of the rejection rates  in the limit as $n|\varepsilon|$ and $n\varepsilon$ go to zero, takes the form
\begin{align}
U(\tau^{\ast})&=\frac{\mathbb{E}\left[R_{E}|\tau^{\ast}\right]}{\mathbb{E} \left[R_{B}|\tau^{\ast}\right]}=\frac{p_{B}}{p_{E}}\approx\frac{1}{\frac{\varepsilon_{1}\varepsilon_{2}}{\varepsilon^{2}}}\frac{
1+\frac{n^2\varepsilon^2}{8}\left[\mu_{n}^{2}\frac{(\alpha_{n}+1)\alpha_{n}}{\beta_{n}^{2}} +\frac{1}{\kappa_{n}}\frac{\alpha_{n}}{\beta_{n}} + \frac{1}{4}\frac{(\alpha_{n}+1)\alpha_{n}}{\beta_{n}^{2}}\right] }{1+\frac{n^2\varepsilon^{2}_{1}}{8}\left(\mu_{n}^{2}\frac{(\alpha_{n}+1)\alpha_{n}}{\beta_{n}^{2}} +\frac{1}{\kappa_{n}}\frac{\alpha_{n}}{\beta_{n}}\right)+ \frac{n^2\varepsilon^{2}_{2}}{8} \frac{1}{4}\frac{(\alpha_{n}+1)\alpha_{n}}{\beta_{n}^{2}}}.\label{Eq:RatioApproximationNormal}
%&\approx\frac{1}{\frac{\varepsilon_{1}\varepsilon_{2}}{\varepsilon^{2}}}\frac{ 4\frac{\bar{X}^{2}}{(S^{2})^{2}} + \left(\frac{1}{S^{2}}\right)^{2} }{\left(\frac{\varepsilon_{1}}{\varepsilon}\right)^{2}4\frac{\bar{X}^{2}}{(S^{2})^{2}}+ \left(\frac{\varepsilon_{2}}{\varepsilon}\right)^{2} \left(\frac{1}{S^{2}}\right)^{2}}\nonumber\\
%&=\frac{1}{\frac{\varepsilon_{1}\varepsilon_{2}}{\varepsilon^{2}}}\frac{ 4\bar{X}^{2} + 1}{\left(\frac{\varepsilon_{1}}{\varepsilon}\right)^{2} 2\bar{X}^{2}+ \left(\frac{\varepsilon_{2}}{\varepsilon}\right)^{2} }.\nonumber
\end{align}

It is important to note here however that this is an approximation formula valid only in the limit and thus care is needed in its application and interpretation. %In Section \ref{S:Simulations} we demonstrate in specific simulation examples that the advantage of considering an ellipse versus a ball is non-negligible in this case.
\subsubsection{Bias computations for the normal example}\label{SSS:BiasNormal}

Let us now discuss the behavior of the bias for the normal example. We will present the case of both mean and variance as observables. Let us start with the mean.

Let us set $h(\theta)=\mu$ and we want to compute the leading order of (\ref{Eq:Bias1}) for $h(\theta)=\mu$, i.e.,
\begin{align*}
\text{bias}(\hat{\mu})&=\sum_{i=1}^{q}\varepsilon_{i}^{2}C_{i}(\tau^{*})+\mathcal{O}(|\varepsilon|^{3}),%\label{Eq:Bias1normal}
\end{align*}
where in this case we have
\begin{align}
C_{i}(\tau^{\ast})&=\frac{n^{2}}{8}\mathbb{E}_{f(\theta|\tau^{\ast})}\left[ \mu \left(\eta^{2}_{i}(\theta)-\mathbb{E}_{f(\theta|\tau^{\ast})} \eta^{2}_{i}(\theta)\right)\right].\label{Eq:Bias2normal}
\end{align}

The result is summarized in the following Corollary.
\begin{corollary}\label{C:Bias_mean_normal}
For $i=1,2$, $C_{i}(\tau^{\ast})$ from (\ref{Eq:Bias2normal}) satisfy
\begin{align}
C_{1}(\tau^{\ast})&=\frac{n^{2}}{4}\frac{\mu_{n}}{\kappa_{n}}\frac{\alpha_{n}}{\beta_{n}}, \text{ and } C_{2}(\tau^{\ast})=0.\label{Eq:Bias2normal_1}
\end{align}
Therefore, we have
\begin{align}
\text{bias}(\hat{\mu})&=\varepsilon_{1}^{2}C_{1}(\tau^{*})+\mathcal{O}(n|\varepsilon|^{3})\approx \varepsilon_{1}^{2}\frac{n}{4}\left(\frac{\mu_{0}}{n}+\bar{X}\right)\frac{1}{S^{2}}+\mathcal{O}((n|\varepsilon|)^{3}).\nonumber
\end{align}
\end{corollary}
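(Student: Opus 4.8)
The plan is to evaluate the two coefficients $C_{1}(\tau^{\ast})$ and $C_{2}(\tau^{\ast})$ directly from their defining formula (\ref{Eq:Bias2normal}) specialized to $h(\theta)=\mu$, exploiting the factorized form of the posterior recorded in (\ref{Eq:PosteriorDensityNormal2}). Recall that in the canonical parametrization one has $\eta_{1}(\mu,\lambda)=\mu\lambda$ and $\eta_{2}(\mu,\lambda)=-\lambda/2$, so that $\eta_{1}^{2}=\mu^{2}\lambda^{2}$ and $\eta_{2}^{2}=\lambda^{2}/4$. The decisive structural fact is that under $f(\theta|\tau^{\ast})$ the precision $\lambda$ is $G(\alpha_{n},\beta_{n})$ and, conditionally on $\lambda$, the mean $\mu$ is $N(\mu_{n},(\kappa_{n}\lambda)^{-1})$; in particular the conditional mean $\mathbb{E}[\mu\mid\lambda]=\mu_{n}$ does not depend on $\lambda$.

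First I would dispatch $C_{2}(\tau^{\ast})$. Since $\eta_{2}^{2}$ is a function of $\lambda$ alone, conditioning on $\lambda$ and using $\mathbb{E}[\mu\mid\lambda]=\mu_{n}$ gives $\mathbb{E}[\mu\,\eta_{2}^{2}]=\mathbb{E}[\eta_{2}^{2}\,\mathbb{E}[\mu\mid\lambda]]=\mu_{n}\,\mathbb{E}[\eta_{2}^{2}]=\mathbb{E}[\mu]\,\mathbb{E}[\eta_{2}^{2}]$, so the centered product appearing in (\ref{Eq:Bias2normal}) vanishes identically and hence $C_{2}(\tau^{\ast})=0$. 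For $C_{1}(\tau^{\ast})$ I would again condition on $\lambda$ and invoke the elementary Gaussian moment identities $\mathbb{E}[\mu^{2}\mid\lambda]=\mu_{n}^{2}+(\kappa_{n}\lambda)^{-1}$ and $\mathbb{E}[\mu^{3}\mid\lambda]=\mu_{n}^{3}+3\mu_{n}(\kappa_{n}\lambda)^{-1}$. Multiplying by $\lambda^{2}$ and averaging over $\lambda\sim G(\alpha_{n},\beta_{n})$ reduces everything to the gamma moments $\mathbb{E}[\lambda]=\alpha_{n}/\beta_{n}$ and $\mathbb{E}[\lambda^{2}]=\alpha_{n}(\alpha_{n}+1)/\beta_{n}^{2}$. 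Writing out $\mathbb{E}[\mu^{3}\lambda^{2}]-\mathbb{E}[\mu]\,\mathbb{E}[\mu^{2}\lambda^{2}]$, the two $\mu_{n}^{3}\mathbb{E}[\lambda^{2}]$ contributions cancel and what survives is $(3-1)\,\mu_{n}\kappa_{n}^{-1}\mathbb{E}[\lambda]=2\mu_{n}\kappa_{n}^{-1}\alpha_{n}/\beta_{n}$; multiplying by the prefactor $n^{2}/8$ yields $C_{1}(\tau^{\ast})=\tfrac{n^{2}}{4}\,\tfrac{\mu_{n}}{\kappa_{n}}\,\tfrac{\alpha_{n}}{\beta_{n}}$, exactly as claimed.

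Finally, substituting $C_{1}$ and $C_{2}=0$ into (\ref{Eq:Bias1}) produces $\text{bias}(\hat{\mu})=\varepsilon_{1}^{2}C_{1}(\tau^{\ast})+\mathcal{O}(|\varepsilon|^{3})$, and the large-$n$ approximation follows from the asymptotics $\mu_{n}\approx \mu_{0}/n+\bar{X}$, $\kappa_{n}\approx n$, $\alpha_{n}\approx n/2$ and $\beta_{n}\approx nS^{2}/2$, whence $\alpha_{n}/\beta_{n}\approx 1/S^{2}$ and $C_{1}(\tau^{\ast})\approx \tfrac{n}{4}(\mu_{0}/n+\bar{X})\,S^{-2}$. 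I do not expect any genuine obstacle here: the argument is entirely a matter of careful moment bookkeeping via the tower property. The only two points that require attention are the telescoping cancellation that leaves $C_{1}$ proportional to $\kappa_{n}^{-1}\mathbb{E}[\lambda]$ rather than to the larger $\mathbb{E}[\lambda^{2}]$ terms, and the observation that $C_{2}$ vanishes precisely because $\eta_{2}^{2}$ depends on $\lambda$ only while the conditional mean of $\mu$ is $\lambda$-free.
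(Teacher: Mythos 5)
Your proposal is correct and follows essentially the same route as the paper: both compute $C_1$ and $C_2$ by conditioning on $\lambda$, inserting the Gaussian conditional moments $\mathbb{E}[\mu^{k}\mid\lambda]$, and then averaging against the $G(\alpha_n,\beta_n)$ density, with the identical cancellations yielding $C_1=\tfrac{n^2}{4}\tfrac{\mu_n}{\kappa_n}\tfrac{\alpha_n}{\beta_n}$ and $C_2=0$. Your tower-property phrasing of why $C_2$ vanishes is just a cleaner packaging of the paper's explicit integral computation.
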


%Therefore, we get that the bias of $\mu$ actually depends, up to leading order, only on the first direction, i.e., we have
%\begin{align}
%\text{bias}(\hat{\mu})&=\varepsilon_{1}^{2}C_{1}(\tau^{*})+\mathcal{O}(n|\varepsilon|^{3})\approx \varepsilon_{1}^{2}\frac{n}{4}\left(\frac{\mu_{0}}{n}+\bar{X}\right)\frac{1}{S^{2}}+\mathcal{O}((n|\varepsilon|)^{3}),\nonumber
%\end{align}
Corollary \ref{C:Bias_mean_normal} shows that if $\mu_{0}$ and $\bar{X}$ are small in value, then $C_{1}$, up to leading order, does not depend on $n$, but it grows inversely proportionally to $S^{2}$. The latter behavior is qualitatively the same as the dependence of the relative entropy with respect to $S^{2}$.

Let us then investigate the variance. Recalling that the precision $\lambda=1/\sigma^{2}$, let us now set $h(\theta)=\frac{1}{\lambda}$. We  want to compute the leading order of (\ref{Eq:Bias1}) for $h(\theta)=\sigma^{2}=1/\lambda$, i.e.,
\begin{align*}
\text{bias}(\hat{\sigma^{2}})&=\sum_{i=1}^{q}\varepsilon_{i}^{2}C_{i}(\tau^{*})+\mathcal{O}((n|\varepsilon|)^{3}),%\label{Eq:Bias1normalVar}
\end{align*}
where in this case we have
\begin{align}
C_{i}(\tau^{\ast})&=\frac{n^{2}}{8}\mathbb{E}_{f(\theta|\tau^{\ast})}\left[ \frac{1}{\lambda} \left(\eta^{2}_{i}(\theta)-\mathbb{E}_{f(\theta|\tau^{\ast})} \eta^{2}_{i}(\theta)\right)\right].\label{Eq:Bias2normalVar}
\end{align}

The result is summarized in the following Corollary.
\begin{corollary}\label{C:Bias_variance_normal}
For $i=1,2$, $C_{i}(\tau^{\ast})$ from (\ref{Eq:Bias2normalVar}) satisfy
\begin{align}
C_{1}(\tau^{\ast})&=\frac{n^{2}}{8}\left[ -2 \mu^{2}_{n}\frac{\alpha_n}{\beta_{n}}\frac{1}{\alpha_{n}-1}-\frac{1}{\kappa_{n}}\frac{1}{\alpha_{n}-1}\right], \text{ and } C_{2}(\tau^{\ast})=\frac{n^{2}}{32}\frac{\alpha_{n}}{\beta_{n}}\frac{-2}{\alpha_{n}-1}.\nonumber
\end{align}
Therefore, we have that for $n$ large enough and $\varepsilon$ small enough
\begin{align}
\text{bias}(\hat{\sigma^{2}})
&\approx \varepsilon_{1}^{2}\frac{n}{8}\left[-2\left(\frac{\mu_{0}}{n}+\bar{X}\right)^{2}\frac{1}{S^{2}}-\frac{1}{n}\right]+\varepsilon_{2}^{2}\left[-\frac{1}{16}\frac{n}{S^{2}}\right]+\mathcal{O}((n|\varepsilon|)^{3}).\nonumber
\end{align}
\end{corollary}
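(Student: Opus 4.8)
The plan is to start from the specialized bias coefficient (\ref{Eq:Bias2normalVar}) and observe that, because $R(\tau)$ is constant in this example, the $\partial_i\log R$ terms in the general formula (\ref{Eq:Bias2Exp}) vanish and each coefficient is simply a posterior covariance. Writing $h(\theta)=1/\lambda$, one has
\[
C_i(\tau^{\ast})=\frac{n^{2}}{8}\,\text{Cov}_{f(\theta|\tau^{\ast})}\!\left(\frac{1}{\lambda},\eta_i^{2}(\theta)\right),
\]
so the whole problem reduces to evaluating mixed posterior moments of $1/\lambda$ against $\eta_1^{2}=\mu^{2}\lambda^{2}$ and $\eta_2^{2}=\lambda^{2}/4$. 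Throughout I would exploit the product structure of the posterior recorded in (\ref{Eq:PosteriorDensityNormal2}), namely $\mu|\lambda\sim N(\mu_n,(\kappa_n\lambda)^{-1})$ and $\lambda\sim G(\alpha_n,\beta_n)$, together with the standard Gamma moments $\mathbb{E}[\lambda^{k}]=\prod_{j=1}^{k}(\alpha_n+j-1)/\beta_n^{k}$ and the negative moment $\mathbb{E}[\lambda^{-1}]=\beta_n/(\alpha_n-1)$, valid once $\alpha_n>1$.

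For $C_2(\tau^{\ast})$ only the marginal Gamma law of $\lambda$ enters. Using $\lambda^{-1}\cdot\lambda^{2}=\lambda$, the covariance expands as $\mathbb{E}[\lambda]-\mathbb{E}[\lambda^{-1}]\,\mathbb{E}[\lambda^{2}]$, so
\[
C_2(\tau^{\ast})=\frac{n^{2}}{32}\left(\frac{\alpha_n}{\beta_n}-\frac{\beta_n}{\alpha_n-1}\cdot\frac{\alpha_n(\alpha_n+1)}{\beta_n^{2}}\right),
\]
which, after factoring $\alpha_n/\beta_n$ and using $1-\tfrac{\alpha_n+1}{\alpha_n-1}=-\tfrac{2}{\alpha_n-1}$, collapses to the claimed $C_2(\tau^{\ast})=\tfrac{n^{2}}{32}\tfrac{\alpha_n}{\beta_n}\tfrac{-2}{\alpha_n-1}$.

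For $C_1(\tau^{\ast})$ the coupling between $\mu$ and $\lambda$ is handled by conditioning on $\lambda$ first, using $\mathbb{E}[\mu^{2}\mid\lambda]=\mu_n^{2}+(\kappa_n\lambda)^{-1}$. This turns the two needed moments into pure Gamma expectations, $\mathbb{E}[\mu^{2}\lambda]=\mu_n^{2}\alpha_n/\beta_n+1/\kappa_n$ and $\mathbb{E}[\mu^{2}\lambda^{2}]=\mu_n^{2}\alpha_n(\alpha_n+1)/\beta_n^{2}+\alpha_n/(\kappa_n\beta_n)$ (the latter coinciding with $q_1(n,\tau^{\ast})$ from Subsection \ref{SSS:MrejRateNormal}). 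Since $\lambda^{-1}\cdot\mu^{2}\lambda^{2}=\mu^{2}\lambda$, the covariance reads
\[
C_1(\tau^{\ast})=\frac{n^{2}}{8}\left(\mathbb{E}[\mu^{2}\lambda]-\mathbb{E}[\lambda^{-1}]\,\mathbb{E}[\mu^{2}\lambda^{2}]\right),
\]
and inserting the three moments, then applying $1-\tfrac{\alpha_n+1}{\alpha_n-1}=-\tfrac{2}{\alpha_n-1}$ to the $\mu_n^{2}$ part and $1-\tfrac{\alpha_n}{\alpha_n-1}=-\tfrac{1}{\alpha_n-1}$ to the $1/\kappa_n$ part, gives the stated closed form for $C_1$. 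The large-$n$ display in the corollary then follows by substituting $\alpha_n\sim n/2$, $\beta_n\sim nS^{2}/2$, $\kappa_n\sim n$, $\mu_n\sim \mu_0/n+\bar X$ into $\varepsilon_1^{2}C_1+\varepsilon_2^{2}C_2$ and retaining the leading powers of $n$.

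There is no real obstacle here beyond careful moment bookkeeping; the two points that require attention are ensuring $\alpha_n>1$ so that $\mathbb{E}[\lambda^{-1}]$ is finite (automatic for $n\geq 2$), and respecting the conditioning order in $\mathbb{E}[\mu^{2}\lambda^{k}]$, since the Gaussian conditional variance $(\kappa_n\lambda)^{-1}$ itself depends on $\lambda$ and is exactly what produces the $1/\kappa_n$ contributions that a naive replacement of $\mu^{2}$ by $\mu_n^{2}$ would discard. Everything else is routine algebraic simplification.
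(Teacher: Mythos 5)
Your proposal is correct and follows essentially the same route as the paper: both reduce $C_i$ to a posterior covariance of $1/\lambda$ against $\eta_i^2$, condition on $\lambda$ via $\mathbb{E}[\mu^2\mid\lambda]=\mu_n^2+(\kappa_n\lambda)^{-1}$ to reduce everything to Gamma moments (including $\mathbb{E}[\lambda^{-1}]=\beta_n/(\alpha_n-1)$), and simplify with the identities $1-\tfrac{\alpha_n+1}{\alpha_n-1}=-\tfrac{2}{\alpha_n-1}$ and $1-\tfrac{\alpha_n}{\alpha_n-1}=-\tfrac{1}{\alpha_n-1}$ before substituting the large-$n$ orders of $\alpha_n,\beta_n,\kappa_n,\mu_n$. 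Your remarks on $\alpha_n>1$ and on respecting the conditioning order are sensible and consistent with what the paper does implicitly.
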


Proofs of Corollaries \ref{C:Bias_mean_normal} and \ref{C:Bias_variance_normal} are deferred to the appendix.

\section{Simulation studies}\label{S:Simulations}

In this section we present simulation studies to demonstrate some of the results of this paper. In particular, we show in the normal example of Section \ref{S:ExmapleNormal} that considering the ellipse instead of the circle as acceptance region, i.e., allowing $\varepsilon_{1}\neq\varepsilon_{2}$ has interesting advantages.

We present data for both the case of ball as acceptance region and for the case of an ellipse as acceptance region. In both cases, we fix a level of tolerance for the relative entropy, i.e., we require that $\varepsilon$ for the case of a ball or $\varepsilon_{1},\varepsilon_{2}$ for the case of an ellipse to be such that $H(P|P^{\varepsilon})(\tau^{\ast})\leq\text{tol}$. In the case of an ellipse, we choose $\varepsilon_{1}$ and $\varepsilon_{2}$ such that the area of the acceptance region, $\pi\varepsilon_{1}\varepsilon_{2}$, is maximized. In particular we solve the following optimization problem\footnote{In order to solve (\ref{Eq:OptimizationProblem3}) we used R's built-in function nloptr where the local and gradient-based optimization version of the Augmented Lagrangian algorithm was used.}
\begin{align}
&\textrm{maximize } (\pi\varepsilon_{1}\varepsilon_{2}) \textrm{ subject to leading order term in }(\ref{Eq:RelEntropyNormal})\leq\text{tol}%\footnote{
%\textcolor{red}{In order to solve (\ref{Eq:OptimizationProblem3}) we used R's built-in function nloptr where the local and gradient-based optimization version of the Augmented Lagrangian algorithm was used.}}
.\label{Eq:OptimizationProblem3}
\end{align}

The measure of efficiency will be the average number of rejections per accepted particle, which is an indicative measure of computation time needed to complete the algorithm for a predefined level of tolerance error in terms of relative entropy. In addition, in order to showcase the effect of the size of the data, $n$, we record values for different values of $n$.

In Section \ref{S:MeanRejectionRate} we derived asymptotic formulas for the mean rejection rates for both ellipse and ball. In the case of normal distribution, we derived that their ratio   $U(\tau^{\ast})$ asymptotically, as $n|\varepsilon|$ and $n\varepsilon$ are small, satisfies (\ref{Eq:RatioApproximationNormal}).
%\begin{align}
%U(\tau^{\ast})&=\frac{\mathbb{E}\left[R_{E}|\tau^{\ast}\right]}{\mathbb{E}\left[ R_{B}|\tau^{\ast}\right]}=\frac{p_{B}}{p_{E}}\approx
%\frac{1}{\frac{\varepsilon_{1}\varepsilon_{2}}{\varepsilon^{2}}}\frac{
%1+\frac{n^2\varepsilon^2}{8}\left[\mu_{n}^{2}\frac{(\alpha_{n}+1)\alpha_{n}}{\beta_{n}^{2}} +\frac{1}{\kappa_{n}}\frac{\alpha_{n}}{\beta_{n}} + \frac{1}{4}\frac{(\alpha_{n}+1)\alpha_{n}}{\beta_{n}^{2}}\right] }{1+\frac{n^2\varepsilon^{2}_{1}}{8}\left(\mu_{n}^{2}\frac{(\alpha_{n}+1)\alpha_{n}}{\beta_{n}^{2}} +\frac{1}{\kappa_{n}}\frac{\alpha_{n}}{\beta_{n}}\right)+ \frac{n^2\varepsilon^{2}_{2}}{8} \frac{1}{4}\frac{(\alpha_{n}+1)\alpha_{n}}{\beta_{n}^{2}}}.\nonumber
%\end{align}

We write $U(\tau^{\ast})$ for $U$ to emphasize that the ratio is a function of the observed statistic and thus a random quantity by itself. In addition, we remark here however that this approximation is accurate only in the limit as $n|\varepsilon|\downarrow 0$.   To make clear that we are using the asymptotic formula we will write
\begin{align}
\tilde{U}(\tau^{\ast})&=
\frac{1}{\frac{\varepsilon_{1}\varepsilon_{2}}{\varepsilon^{2}}}\frac{
1+\frac{n^2\varepsilon^2}{8}\left[\mu_{n}^{2}\frac{(\alpha_{n}+1)\alpha_{n}}{\beta_{n}^{2}} +\frac{1}{\kappa_{n}}\frac{\alpha_{n}}{\beta_{n}} + \frac{1}{4}\frac{(\alpha_{n}+1)\alpha_{n}}{\beta_{n}^{2}}\right] }{1+\frac{n^2\varepsilon^{2}_{1}}{8}\left(\mu_{n}^{2}\frac{(\alpha_{n}+1)\alpha_{n}}{\beta_{n}^{2}} +\frac{1}{\kappa_{n}}\frac{\alpha_{n}}{\beta_{n}}\right)+ \frac{n^2\varepsilon^{2}_{2}}{8} \frac{1}{4}\frac{(\alpha_{n}+1)\alpha_{n}}{\beta_{n}^{2}}}.\nonumber
\end{align}

In all cases, the true values of the parameters under which the observed value of the statistic $\tau^{\ast}=(\bar{X},S^{2})$ is computed are $(\mu,\sigma^{2})=(0,1)$. The hyperparameters for the prior distributions are chosen to be $(\mu_{0},\kappa,\alpha,\beta)=(0,1,1,1)$. Also, in both cases the number of accepted sampled values, or particles, is fixed to $K=1000$. The goal is to make sure that the priors are appropriately chosen in order to concentrate the conclusions of the numerical studies on the effect of choosing a ball versus an ellipse as acceptance region.

%In Figure \ref{F:ConvergencePlot} we present a graph of $\tilde{U}(\tau^{\ast})$ as a function of $n$. It is clear that the computational gain for using an ellipse as opposed to a ball for acceptance region becomes more evident as $n$ increases. Of course, $\tilde{U}(\tau^{\ast})$, depends on the observed statistic $\tau^{\ast}$ and as a random quantity its value will change each time the experiment is repeated. However, the qualitative picture remains the same.
%
%\begin{figure}[ht]
%\begin{center}
%\includegraphics[scale=0.3, width=8.0 cm, height=4.0 cm]{ConvergencePlot}\hspace{2cm}
%\caption{Indicative asymptotic performance of ratio  of mean rejection rates $\frac{\mathbb{E}\left[R_{E}|\tau^{\ast}\right]}{\mathbb{E}\left[R_{B}|\tau^{\ast}\right]}$ for representative realizations of the observed statistic $\tau^{\ast}=(\bar{X},S^{2})$.}\label{F:ConvergencePlot}
%\end{center}
%\end{figure}

Data for the ball as acceptance region, where $\varepsilon_{1}=\varepsilon_{2}=\varepsilon$ are presented in Table \ref{T:Ball}, whereas data for the ellipse as acceptance region, where $\varepsilon_{1}\neq\varepsilon_{2}$ are presented in Table \ref{T:Ellipse}. In the tables, we report the observed values of the sufficient statistic $\tau^{\ast}=(\bar{X},S^{2})$, the estimated values for $\mu$ and $\sigma^{2}$ along with their empirical standard deviations, as well as the average number of rejections per accepted particle, denoted by  $\hat{R}_{B}$ and $\hat{R}_{E}$ for ball and ellipse respectively. Clearly, the more rejections per accepted particle the algorithm has, the more time it takes to run for the same level of accuracy.  %In Table \ref{T:Ellipse} we also report the value $\frac{\hat{R}_{E}}{\hat{R}_{B}}$ which can be interpreted as the  per unit time computational gain that the algorithm with the ellipse as acceptance region has compared to the algorithm that uses the ball as acceptance region.

\begin{table}[ht]
  \begin{tabular}[c]{|c|c|c|c|c|c|}
    \hline
     $(\text{tol},n)$ & $\tau^{\ast}=(\bar{X},S^{2})$ & $\varepsilon$ & $(\hat{\mu},\hat{\text{sd}}(\mu))$ & $(\hat{\sigma^{2}},\hat{\text{sd}}(\sigma^{2}))$  & $\hat{R}_{B}$  \\
    \hline
    $(0.05,100)$ & $(0.167,1.061)$ &  $0.055$ & $(0.161,0.109)$ & $(1.035,0.075)$ & $1484$  \\
    $(0.25,100)$ & $(-0.022,0.965)$ &  $0.083$ & $(-0.023,0.106)$ & $(0.987,0.076)$ & $608$  \\
    $(0.5,100)$ & $(0.061,1.099)$ &  $0.111$ & $(0.059,0.122)$ & $(1.053,0.079)$ & $384$  \\
    $(1,100)$ & $(-0.181,0.866)$ &  $0.096$ & $(-0.175,0.105)$ & $(0.936,0.071)$ & $389$  \\
    \hline
    $(0.05,300)$ & $(-0.112,1.003)$ &  $0.025$ & $(-0.111,0.061)$ & $(1.007,0.041)$ & $6998$  \\
    $(0.25,300)$ & $(0.022,0.974)$ &  $0.038$ & $(0.024,0.061)$ & $(0.989,0.041)$ & $2902$  \\
    $(0.5,300)$ & $(0.020,1.001)$ &  $0.046$ & $(0.021,0.062)$ & $(1.003,0.042)$ & $1982$  \\
    $(1,300)$ & $(-0.070,1.005)$  & $0.054$ & $(-0.069,0.063)$ & $(1.007,0.043)$ & $1490$  \\
    \hline
   $(0.05,600)$ & $(-0.014,1.025)$ &  $0.016$ & $(-0.013,0.004)$ & $(1.015,0.030)$ & $17645$  \\
    $(0.25,600)$ & $(0.023,0.933)$ &  $0.022$ & $(0.023,0.042)$ & $(0.968,0.028)$ & $8820$  \\
    $(0.5,600)$ & $(-0.009,0.986)$ &  $0.027$ & $(-0.010,0.041)$ & $(0.994,0.029)$ & $5805$  \\
    $(1,600)$ & $(0.016,0.972)$  & $0.003$ & $(0.014,0.004)$ & $(0.986,0.028)$ & $4172$  \\
    \hline
   $(0.05,1000)$ & $(-0.027,1.013)$ &  $0.011$ & $(-0.027,0.031)$ & $(1.017,0.023)$ & $35639$  \\
    $(0.25,1000)$ & $(0.016,0.971)$ &  $0.015$ & $(0.017,0.032)$ & $(0.986,0.022)$ & $17730$  \\
    $(0.5,1000)$ & $(-0.050,0.938)$ &  $0.0175$ & $(-0.051,0.032)$ & $(0.986,0.023)$ & $13868$  \\
    $(1,1000)$ & $(-0.012,0.995)$  & $0.022$ & $(-0.012,0.033)$ & $(0.992,0.023)$ & $8702$  \\
      \hline
\end{tabular}
\medskip
\caption{Data for ball as acceptance region.} \label{T:Ball}
\end{table}

\begin{table}[ht]
  \begin{tabular}[c]{|c|c|c|c|c|c|}
    \hline
     $(\text{tol},n)$ & $\tau^{\ast}=(\bar{X},S^{2})$ & $(\varepsilon_{1},\varepsilon_{2})$ & $(\hat{\mu},\hat{\text{sd}}(\mu))$ & $(\hat{\sigma^{2}},\hat{\text{sd}}(\sigma^{2}))$  & $\hat{R}_{E}$  \\
    \hline
    $(0.05,100)$ & $(0.167,1.061)$ &  $(0.065,0.049)$ & $(0.167,0.103)$ & $(1.037,0.074)$ & $1505$  \\
    $(0.25,100)$ & $(-0.022,0.965)$ &  $(0.155,0.069)$ & $(-0.027,0.121)$ & $(0.991,0.077)$ & $390$ \\
    $(0.5,100)$ & $(0.061,1.099)$ &  $(0.173,0.094)$ & $(0.063,0.133)$ & $(1.061,0.082)$ & $303$  \\
    $(1,100)$ & $(-0.081,0.866)$ &  $(0.113,0.086)$ & $(-0.178,0.106)$ & $(0.937,0.072)$ & $370$ \\
    \hline
    $(0.05,300)$ & $(-0.112,1.003)$ &  $(0.035,0.021)$ & $(-0.113,0.061)$ & $(1.004,0.042)$ & $5660$  \\
    $(0.25,300)$ & $(0.022,0.974)$ &  $(0.087,0.031)$ & $(0.194,0.071)$ & $(0.989,0.041)$ & $1559$  \\
    $(0.5,300)$ & $(0.020,1.001)$ &  $(0.106,0.038)$ & $(0.024,0.077)$ & $(0.998,0.040)$ & $1102$  \\
    $(1,300)$ & $(-0.070,1.005)$  & $(0.091,0.046)$ & $(-0.074,0.075)$ & $(1.006,0.045)$ & $1105$  \\
    \hline
   $(0.05,600)$ & $(-0.014,1.025)$ &  $(0.043,0.013)$ & $(-0.015,0.046)$ & $(1.014,0.028)$ & $7617$ \\
    $(0.25,600)$ & $(0.023,0.933)$ &  $(0.056,0.018)$ & $(0.022,0.049)$ & $(0.966,0.029)$ & $4045$ \\
    $(0.5,600)$ & $(-0.009,0.986)$ &  $(0.078,0.023)$ & $(-0.006,0.056)$ & $(0.991,0.029)$ & $2679$ \\
    $(1,600)$ & $(0.016,0.972)$  & $(0.087,0.026)$ & $(0.016,0.059)$ & $(0.986,0.031)$ & $1782$  \\
     \hline
   $(0.05,1000)$ & $(-0.027,1.033)$ &  $(0.028,0.009)$ & $(-0.026,0.035)$ & $(1.018,0.023)$ & $18043$  \\
    $(0.25,1000)$ & $(0.016,0.971)$ &  $(0.045,0.013)$ & $(0.017,0.039)$ & $(0.987,0.023)$ & $7085$  \\
    $(0.5,1000)$ & $(-0.050,0.938)$ &  $(0.036,0.015)$ & $(-0.011,0.048)$ & $(0.993,0.023)$ & $6270$ \\
    $(1,1000)$ & $(-0.012,0.985)$  & $(0.068,0.018)$ & $(-0.011,0.048)$ & $(0.993,0.023)$ & $3461$  \\
      \hline
\end{tabular}
\medskip
\caption{Data for ellipse as acceptance region.} \label{T:Ellipse}
\end{table}

Note that $\frac{\hat{R}_{E}}{\hat{R}_{B}}$  can be interpreted as the  per unit time computational gain that the algorithm with the ellipse as acceptance region has compared to the algorithm that uses the ball as acceptance region (i.e. the choice of two different distance metrics). Clearly, the smaller $\frac{\hat{R}_{E}}{\hat{R}_{B}}$ than $1$ is, the better the algorithm that uses the ellipse versus the algorithm that uses the ball for acceptance region is.

In Table \ref{T:Comparison} we see a comparison of the asymptotic $\tilde{U}$ and empirical $\frac{\hat{R}_{E}}{\hat{R}_{B}}$ for the data presented in Tables \ref{T:Ball} and \ref{T:Ellipse}.  It is clear that the asymptotic formula $\tilde{U}$ overestimates the ratio of mean rejection rates due to the fact that it is only an asymptotically correct formula. However, it is evident from the simulation results that  $\tilde{U}$ is of the correct order. %one needs to go to larger $n$ values in order for it to be valid. %However, we do see, both from the asymptotic and mainly from the empirical values,  that the trend is that as $n$ increases, the advantage of using an ellipse versus a ball becomes more evident.
In addition, for better presentation,  we have plotted in Figure \ref{Figure2}, the per unit time computational gain of the algorithm, $\frac{\hat{R}_{E}}{\hat{R}_{B}}$, for different values of tolerance levels taken from Table \ref{T:Comparison}.

\begin{table}[ht]
  \begin{tabular}[c]{|c||c|c|c|c||c|c|c|c|}
    \hline
     $(\text{tol},n)$ & $(0.05,100)$ & $(0.25,100)$ & $(0.5,100)$ & $(1,100)$  & $(0.05,300)$  & $(0.25,300)$ & $(0.5,300)$ & $(1,300)$\\
    \hline
    $\tilde{U}(\tau^{\ast})$ & $0.986$ &  $0.763$ & $0.884$ & $1.041$ & $0.947$  & $0.648$ & $0.651$ & $0.861$\\
    $\frac{\hat{R}_{E}}{\hat{R}_{B}}$ & $1.014$ &  $0.641$ & $0.790$ & $0.949$ & $0.809$ &  $0.537$ & $0.556$ & $0.741$\\
    \hline
 $(\text{tol},n)$ & $(0.05,600)$ & $(0.25,600)$ & $(0.5,600)$ & $(1,600)$  & $(0.05,1000)$  & $(0.25,1000)$ & $(0.5,1000)$ & $(1,1000)$\\
    \hline
    $\tilde{U}(\tau^{\ast})$ & $0.531$ &  $0.592$ & $0.540$ & $0.574$ & $0.592$  & $0.532$ & $0.574$ & $0.511$\\
    $\frac{\hat{R}_{E}}{\hat{R}_{B}}$ & $0.431$ &  $0.462$ & $0.461$ & $0.436$ & $0.506$ &  $0.401$ & $0.452$ & $0.397$\\
      \hline
\end{tabular}
\medskip
\caption{Asymptotic and Monte Carlo comparison for the two different choices of acceptance region (ball and ellipse).} \label{T:Comparison}
\end{table}

\begin{figure}[h]
\begin{center}
\includegraphics[scale=1, height=4cm, width=9cm, angle=0]{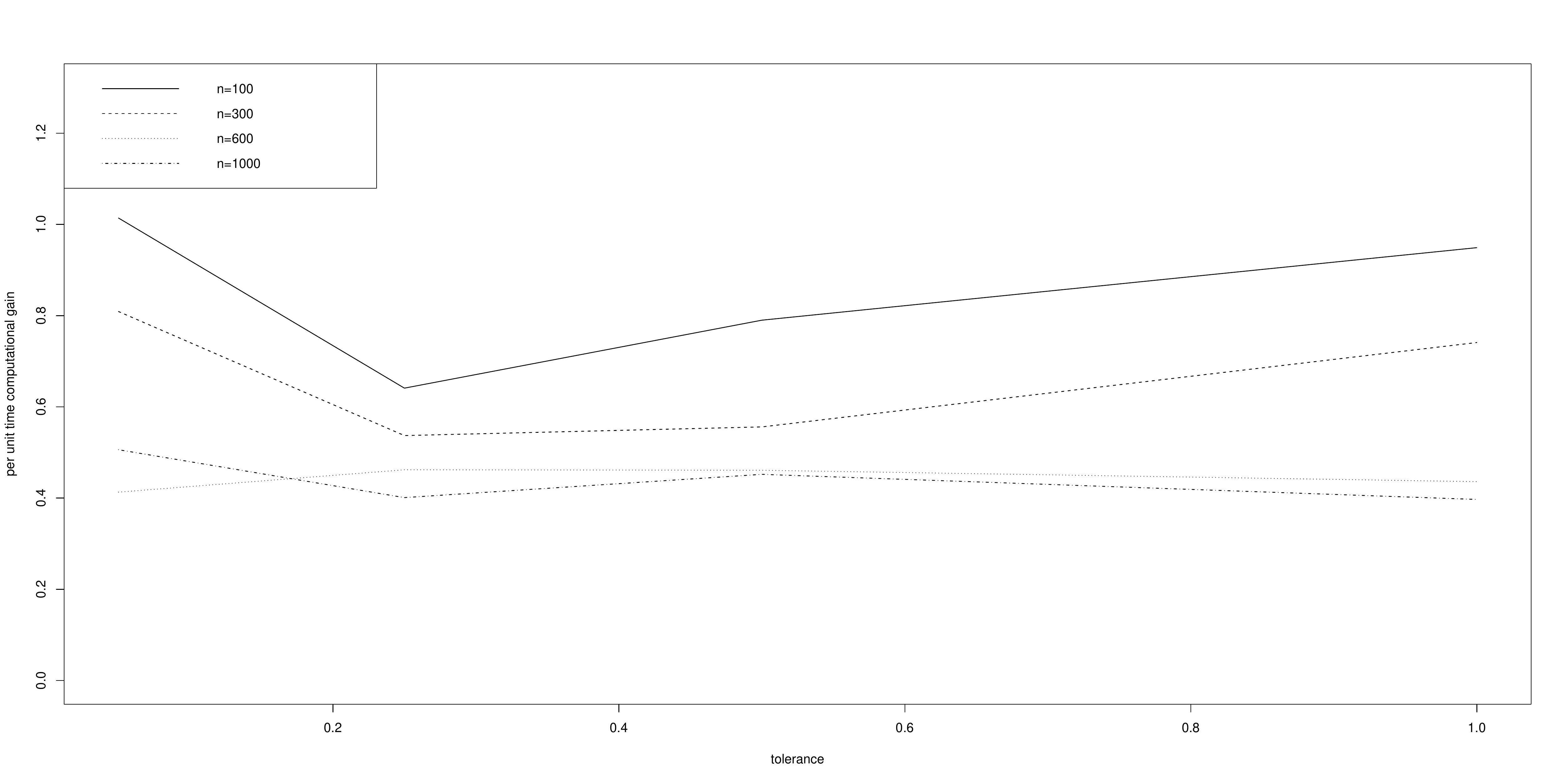}\hspace{2cm}
\caption{Per unit time computational gain of the algorithm, $\frac{\hat{R}_{E}}{\hat{R}_{B}}$. Values are from Table \ref{T:Comparison}. Computational gains are more evident as $n$ increases, but seem to stabilize for large $n$. }\label{Figure2}
\end{center}
\end{figure}

The conclusion is that for the same level of tolerance, allowing $\varepsilon_{1}\neq\varepsilon_{2}$ has  computational advantages as it reduces the number of rejections per accepted particle. %The reason for this is that in all cases $\varepsilon_{1}>\varepsilon_{2}$, implying that the acceptance region is larger.
In addition, the tables also demonstrate that as $n$ increases, $\varepsilon_{i}$ decrease. This is of course to be expected due to the form that the weight functions take in (\ref{Eq:RelEntropyNormal}). It is also noteworthy that the advantage of allowing $\varepsilon_{1}\neq\varepsilon_{2}$ seems to become more evident as $n$ increases even though the degree of the potential computational gain seems to diminish the larger the $n$ gets, see Figure \ref{Figure2} (at least for the simulation that is studied here).

\section{Conclusions}\label{S:Conclusions}
In this paper we analyzed the popular ABC algorithm via the lens of relative entropy ideas. By computing the leading order term in the expansion of relative entropy as function of the threshold vector parameter, we can quantify the effect of the chosen distance metric. In this direction, we showed that advantages arise when one exploits potential asymmetries in the distribution of observed statistics.  In particular,  one can then allow for a larger acceptance region which then implies a smaller rejection rate, which in turn results in faster convergence for the same level of tolerance error. This is part of a larger question, i.e., the quantification of the effect of distance metrics on the behavior of ABC. In this paper, we have seen that analyzing the ABC through the lens of relative entropy gives us a way to do so.

In addition, we characterized precisely the effect of the number of data points on the performance of ABC by showing that in order to maintain the same level of tolerance, one should decrease the threshold parameter as the number of data points increases. In connection to that, Figure 2 seems to suggest that as the number of data points, $n$, increase, the gain in performance due to the choice of the distance metric seems to stabilize. It would be of interest to theoretically address this aspect of the algorithm and as a consequence quantify in more precise terms the effect that the choice of the distance metric has as the number of data points increase.

One important conclusion of our analysis is that the leading order term in the expansion of relative entropy depends on certain weight factors that then govern the appropriate choice of threshold parameter. In this paper, our goal is to illustrate the phenomenon and for the numerical examples that we presented we could compute the weight functions in closed form. This was done in order to focus on the more theoretical aspects of the algorithm. For more general models, closed form computation may not be possible and one would have to result to simulation. The weight functions can be potentially monitored online and computed on the fly using Monte Carlo methods. We leave the development of the latter computational methods for future work.

\appendix

\section{Proof of intermediate results}\label{A:Appendix1}

\begin{lemma}\label{L:ZeroDerivative}
Assume Condition \ref{A:Regularity} and that the acceptance region $D_{\varepsilon}(\tau^{\ast})$ is given by (\ref{Eq:Domain2}). Let $f^{\varepsilon}(\theta|\tau^{\ast})$ be defined by (\ref{Eq:PerturbedDensity}). Then we have that $\nabla_{\varepsilon}f^{\varepsilon}(\theta|\tau^{\ast})|_{\varepsilon=0}=0$. In addition, we have that for all $i=1,\cdots, q$
\begin{align}
\partial^{2}_{\varepsilon_{i}}f^{\varepsilon}(\theta|\tau^{\ast})|_{\varepsilon=0}
&=\frac{1}{(q+2)f(\tau^{\ast})}\left(\partial_{\tau_{i}}^{2}f(\tau^{\ast},\theta)-
\frac{f(\tau^{\ast},\theta)}{f(\tau^{\ast})} \partial_{\tau_{i}}^{2}f(\tau^{\ast})\right),\label{Eq:HessianDiagonalDerivatives}
\end{align}
and for all $i,j=1,\cdots, q$ with $i\neq j$, $\partial^{2}_{\varepsilon_{i}\varepsilon_{j}}f^{\varepsilon}(\theta|\tau^{\ast})|_{\varepsilon=0}=0$.
\end{lemma}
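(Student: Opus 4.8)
The plan is to remove the $\varepsilon$-dependence of the domain of integration by an affine change of variables, so that $f^{\varepsilon}(\theta|\tau^{\ast})$ becomes a ratio of two integrals over the \emph{fixed} unit ball $B_{1}(0)$ whose integrands are smooth in $\varepsilon$. First I would note that the common factor $1/|D_{\varepsilon}(\tau^{\ast})|$ cancels between numerator and denominator in (\ref{Eq:PerturbedDensity}), giving
\[
f^{\varepsilon}(\theta|\tau^{\ast})=\frac{\int_{D_{\varepsilon}(\tau^{\ast})}f_{T,\Theta}(\tau,\theta)\,d\tau}{\int_{D_{\varepsilon}(\tau^{\ast})}f_{T}(\tau)\,d\tau}.
\]
Substituting $\tau_{i}=\tau_{i}^{\ast}+\varepsilon_{i}u_{i}$ maps the ellipse (\ref{Eq:Domain2}) onto $B_{1}(0)$ and produces a Jacobian $\prod_{i=1}^{q}\varepsilon_{i}$ which cancels as well, so that
\[
f^{\varepsilon}(\theta|\tau^{\ast})=\frac{\int_{B_{1}(0)}f_{T,\Theta}(\tau^{\ast}+\varepsilon\cdot u,\theta)\,du}{\int_{B_{1}(0)}f_{T}(\tau^{\ast}+\varepsilon\cdot u)\,du},\qquad \varepsilon\cdot u=(\varepsilon_{1}u_{1},\dots,\varepsilon_{q}u_{q}).
\]
Under Condition \ref{A:Regularity} the integrands are smooth in $\varepsilon$ and the denominator is bounded away from zero, so I may differentiate under the integral sign and Taylor expand in $\varepsilon$.

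Next I would expand each integrand to second order in $\varepsilon$ and integrate against the uniform measure on $B_{1}(0)$, exploiting its symmetry. The key moment identities are $\int_{B_{1}(0)}u_{i}\,du=0$, $\int_{B_{1}(0)}u_{i}u_{j}\,du=0$ for $i\neq j$, and $\int_{B_{1}(0)}u_{i}^{2}\,du=\tfrac{1}{q}\int_{B_{1}(0)}\|u\|^{2}\,du=|B_{1}(0)|/(q+2)$. They yield
\[
\int_{B_{1}(0)}f_{T,\Theta}(\tau^{\ast}+\varepsilon\cdot u,\theta)\,du=|B_{1}(0)|\left(f_{T,\Theta}(\tau^{\ast},\theta)+\frac{1}{2(q+2)}\sum_{i=1}^{q}\varepsilon_{i}^{2}\,\partial_{\tau_{i}}^{2}f_{T,\Theta}(\tau^{\ast},\theta)\right)+\mathcal{O}(|\varepsilon|^{4}),
\]
and the analogous expansion for $f_{T}$. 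The odd-order terms drop out by parity, and the off-diagonal quadratic terms drop out because $\int_{B_{1}(0)}u_{i}u_{j}\,du=0$, so only the diagonal $\varepsilon_{i}^{2}$ contributions survive at second order.

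Forming the quotient of the two expansions, both of the form $(\text{const})+\sum_{i}(\cdot)\varepsilon_{i}^{2}+\mathcal{O}(|\varepsilon|^{4})$, and expanding the reciprocal of the denominator, I obtain $f^{\varepsilon}(\theta|\tau^{\ast})=f(\theta|\tau^{\ast})+\sum_{i=1}^{q}c_{i}\varepsilon_{i}^{2}+\mathcal{O}(|\varepsilon|^{4})$. This form contains no linear term, so $\nabla_{\varepsilon}f^{\varepsilon}(\theta|\tau^{\ast})|_{\varepsilon=0}=0$, and no $\varepsilon_{i}\varepsilon_{j}$ term below fourth order, so $\partial^{2}_{\varepsilon_{i}\varepsilon_{j}}f^{\varepsilon}(\theta|\tau^{\ast})|_{\varepsilon=0}=0$ for $i\neq j$. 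The diagonal second derivative is $\partial^{2}_{\varepsilon_{i}}f^{\varepsilon}(\theta|\tau^{\ast})|_{\varepsilon=0}=2c_{i}$; reading off $c_{i}$ from the ratio expansion and using $f(\theta|\tau^{\ast})=f_{T,\Theta}(\tau^{\ast},\theta)/f_{T}(\tau^{\ast})$ reproduces exactly (\ref{Eq:HessianDiagonalDerivatives}).

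The one genuinely delicate point — and the main obstacle — is justifying differentiation under the integral and the validity of the quotient expansion uniformly as $\varepsilon\to 0$. This is precisely where Condition \ref{A:Regularity} is used: smoothness of $f_{T,\Theta}$ and $f_{T}$ controls the Taylor remainders, while boundedness away from zero of $f_{T}$ keeps $\int_{B_{1}(0)}f_{T}(\tau^{\ast}+\varepsilon\cdot u)\,du$ uniformly positive, so that its reciprocal is smooth in $\varepsilon$ near the origin. Once these regularity issues are settled, the remainder of the argument is routine bookkeeping of the moments of the uniform measure on $B_{1}(0)$.
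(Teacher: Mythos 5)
Your proposal is correct and follows essentially the same route as the paper: the change of variables $\tau=\tau^{\ast}+\varepsilon\cdot u$ reducing the ellipse to the fixed unit ball, followed by the parity and moment identities $\int_{B_{1}(0)}u_{i}\,du=0$, $\int_{B_{1}(0)}u_{i}u_{j}\,du=0$ ($i\neq j$), and $\int_{B_{1}(0)}u_{i}^{2}\,du=|B_{1}(0)|/(q+2)$. The only cosmetic difference is that the paper differentiates the quotient directly via the quotient rule and evaluates at $\varepsilon=0$, whereas you Taylor expand numerator and denominator separately and then divide; the two computations are equivalent and yield the same coefficients.
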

\begin{proof}[Proof of Lemma \ref{L:ZeroDerivative}]
Let us denote $\tau_{\varepsilon}=(\tau_{1}\varepsilon_{1},\cdots,\tau_{q}\varepsilon_{q})$. Notice that by changing variables we can write
\begin{align}
f^{\varepsilon}(\theta|\tau^{\ast})&
=\frac{\int_{D_{\varepsilon}(\tau^{\ast})}f(\tau,\theta)d\tau}{\int_{D_{\varepsilon}(\tau^{\ast})}f(\tau)d\tau}=
\frac{\int_{D_{\varepsilon}(0)}f(\tau+\tau^{\ast},\theta)d\tau}{\int_{D_{\varepsilon}(0)}f(\tau+\tau^{\ast})d\tau}\nonumber\\
&=\frac{\int_{B_{1}(0)}f(\tau_{\varepsilon}+\tau^{\ast},\theta)\prod_{i=1}^{q}\varepsilon_{i}d\tau}{\int_{B_{1}(0)}f(\tau_{\varepsilon}+\tau^{\ast})\prod_{i=1}^{q}\varepsilon_{i}d\tau}
=\frac{\int_{B_{1}(0)}f(\tau_{\varepsilon}+\tau^{\ast},\theta)d\tau}{\int_{B_{1}(0)}f(\tau_{\varepsilon}+\tau^{\ast})d\tau}.\nonumber
\end{align}

Let $i=1,\cdots,q$. Taking now derivative with respect to $\varepsilon_{i}$, we obtain
\begin{align}
&\partial_{\varepsilon_{i}}f^{\varepsilon}(\tau^{\ast},\theta)
=\partial_{\varepsilon_{i}}\frac{\int_{B_{1}(0)}f(\tau_{\varepsilon}+\tau^{\ast},\theta)d\tau}{\int_{B_{1}(0)}f(\tau_{\varepsilon}+\tau^{\ast})d\tau}\nonumber\\
&=\frac{\int_{B_{1}(0)} \partial_{\tau_{i}}f(\tau_{\varepsilon}+\tau^{\ast},\theta) \tau_{i} d\tau\int_{B_{1}(0)}f(\tau_{\varepsilon}+\tau^{\ast})d\tau-
\int_{B_{1}(0)}f(\tau_{\varepsilon}+\tau^{\ast},\theta)d\tau\int_{B_{1}(0)} \partial_{\tau_{i}} f(\tau_{\varepsilon}+\tau^{\ast}) \tau_{i}d\tau}{\left(\int_{B_{1}(0)}f(\tau_{\varepsilon}+\tau^{\ast})d\tau\right)^{2}}.\nonumber
\end{align}

Due to the symmetry of the domain of integration around zero and because $s\mapsto s$ is an odd function, we  obtain when we evaluate at $\varepsilon=0$ for all $i=1,\cdots,q$
\begin{align}
\partial_{\varepsilon_{i}}f^{\varepsilon}(\theta|\tau^{\ast})|_{\varepsilon=0}
&=\frac{\left[\partial_{\tau_{i}}f(\tau^{\ast},\theta) f(\tau^{\ast})|B_{1}(0)|-
|B_{1}(0)|f(\tau^{\ast},\theta)\partial_{\tau_{i}} f(\tau^{\ast})\right]\int_{B_{1}(0)} \tau_{i} d\tau}{\left(|B_{1}(0)|f(\tau^{\ast})\right)^{2}}.\nonumber\\
&=0.
\end{align}

Hence in the case  of the ellipse, i.e. (\ref{Eq:Domain2}), we actually obtain that the order $O(|\varepsilon|^{2})$ and $O(|\varepsilon|^{3})$ vanish. We subsequently obtain for the relative entropy
\begin{align}
H(P|P^{\varepsilon})(\tau^{\ast})&=\frac{1}{8}\mathbb{E}_{f(\cdot|\tau^{\ast})}\left[\frac{\left|\varepsilon^{T}\nabla^{2}_{\varepsilon}f^{\varepsilon}(\theta|\tau^{\ast})|_{\varepsilon=0}\varepsilon\right|^{2}}{f^{2}(\theta|\tau^{\ast})}\right]+ \mathcal{O}(|\varepsilon|^{6}).\label{Eq:REmainterm}
\end{align}

Let us next compute the second derivative of the approximate posterior with respect to $\varepsilon$. We have for the diagonal terms of the Hessian $\nabla^{2}_{\varepsilon}f^{\varepsilon}(\theta|\tau^{\ast})$
\begin{align}
&\partial^{2}_{\varepsilon_{i}}f^{\varepsilon}(\theta|\tau^{\ast})
=\frac{\int_{B_{1}(0)} \partial^{2}_{\tau_{i}}f(\tau_{\varepsilon}+\tau^{\ast},\theta) \tau_{i}^{2}d\tau \int_{B_{1}(0)}f(\tau_{\varepsilon}+\tau^{\ast})d\tau-
\int_{B_{1}(0)}f(\tau_{\varepsilon}+\tau^{\ast},\theta)d\tau\int_{B_{1}(0)}\partial_{\tau_{i}}^{2} f(\tau_{\varepsilon}+\tau^{\ast}) \tau_{i}^{2} d\tau}{\left(\int_{B_{1}(0)}f(\tau_{\varepsilon}+\tau^{\ast})d\tau\right)^{2}}\nonumber\\
&-2\frac{\left(\int_{B_{1}(0)}\partial_{\tau_{i}}f(\tau_{\varepsilon}+\tau^{\ast},\theta) \tau_{i}d\tau\int_{B_{1}(0)}f(\tau_{\varepsilon}+\tau^{\ast})d\tau-
\int_{B_{1}(0)}f(\tau_{\varepsilon}+\tau^{\ast},\theta)d\tau\int_{B_{1}(0)}\partial_{\tau_{i}} f(\tau_{\varepsilon}+\tau^{\ast})\tau_{i}d\tau\right)}{\left(\int_{B_{1}(0)}f(\tau_{\varepsilon}+\tau^{\ast})d\tau\right)^{3}}\times\nonumber\\
&\hspace{1cm}\times \int_{B_{1}(0)}\partial_{\tau_{i}} f(\tau_{\varepsilon}+\tau^{\ast}) \tau_{i} d\tau.\nonumber
\end{align}

Evaluated at $\varepsilon=0$ we have
\begin{align}
\partial^{2}_{\varepsilon_{i}}f^{\varepsilon}(\theta|\tau^{\ast})|_{\varepsilon=0}
&=\frac{\int_{B_{1}(0)} \partial^{2}_{\tau_{i}}f(\tau^{\ast},\theta) \tau_{i}^{2}d\tau|B_{1}(0)|f(\tau^{\ast})-
|B_{1}(0)| f(\tau^{\ast},\theta) \int_{B_{1}(0)}\partial^{2}_{\tau_{i}} f(\tau^{\ast})\tau_{i}^{2} d\tau}{\left(|B_{1}(0)|f(\tau^{\ast})\right)^{2}}\nonumber\\
&=\frac{1}{|B_{1}(0)|f(\tau^{\ast})}\left(\int_{B_{1}(0)}\partial^{2}_{\tau_{i}}f(\tau^{\ast},\theta)\tau_{i}^{2}d\tau-
\frac{f(\tau^{\ast},\theta)}{f(\tau^{\ast})} \int_{B_{1}(0)}\partial^{2}_{\tau_{i}} f(\tau^{\ast}) \tau_{i}^{2}d\tau\right).\nonumber
\end{align}

Next using the symmetry of the domain of integration we compute
\[
\int_{B_{1}(0)}\tau_{i}^{2}d\tau=\frac{1}{q} \int_{B_{1}(0)}\left\|\tau\right\|^{2}d\tau=\frac{|B_{1}(0)|}{q+2}.
\]

Thus, we then have
\begin{align}
\partial^{2}_{\varepsilon_{i}}f^{\varepsilon}(\theta|\tau^{\ast})|_{\varepsilon=0}
&=\frac{1}{(q+2)f(\tau^{\ast})}\left(\partial_{\tau_{i}}^{2}f(\tau^{\ast},\theta)-
\frac{f(\tau^{\ast},\theta)}{f(\tau^{\ast})} \partial_{\tau_{i}}^{2}f(\tau^{\ast})\right),\label{Eq:HessianDiagonalDerivatives}
\end{align}

To complete the computations we also need the mixed-derivatives of the Hessian $\nabla^{2}_{\varepsilon}f^{\varepsilon}(\theta|\tau^{\ast})$. Hence, for $i,j=1,\cdots,q$ with $i\neq j$ we have
\begin{align}
&\partial^{2}_{\varepsilon_{i}\varepsilon_{j}}f^{\varepsilon}(\theta|\tau^{\ast})=\nonumber\\
&=\frac{\int_{B_{1}(0)} \partial^{2}_{\tau_{i}\tau_{j}}f(\tau_{\varepsilon}+\tau^{\ast},\theta) \tau_{i}\tau_{j}d\tau \int_{B_{1}(0)}f(\tau_{\varepsilon}+\tau^{\ast})d\tau-
\int_{B_{1}(0)}f(\tau_{\varepsilon}+\tau^{\ast},\theta)d\tau\int_{B_{1}(0)}\partial_{\tau_{i}\tau_{j}}^{2} f(\tau_{\varepsilon}+\tau^{\ast}) \tau_{i}\tau_{j} d\tau}{\left(\int_{B_{1}(0)}f(\tau_{\varepsilon}+\tau^{\ast})d\tau\right)^{2}}\nonumber\\
&+\frac{\int_{B_{1}(0)} \partial_{\tau_{i}}f(\tau_{\varepsilon}+\tau^{\ast},\theta) \tau_{i}d\tau \int_{B_{1}(0)}\partial_{\tau_{j}}f(\tau_{\varepsilon}+\tau^{\ast})\tau_{j}d\tau-
\int_{B_{1}(0)}\partial_{\tau_{j}}f(\tau_{\varepsilon}+\tau^{\ast},\theta)\tau_{j}d\tau\int_{B_{1}(0)}\partial_{\tau_{i}} f(\tau_{\varepsilon}+\tau^{\ast}) \tau_{i} d\tau}{\left(\int_{B_{1}(0)}f(\tau_{\varepsilon}+\tau^{\ast})d\tau\right)^{2}}\nonumber\\
&-2\frac{\left(\int_{B_{1}(0)}\partial_{\tau_{i}}f(\tau_{\varepsilon}+\tau^{\ast},\theta) \tau_{i}d\tau\int_{B_{1}(0)}f(\tau_{\varepsilon}+\tau^{\ast})d\tau-
\int_{B_{1}(0)}f(\tau_{\varepsilon}+\tau^{\ast},\theta)d\tau\int_{B_{1}(0)}\partial_{\tau_{i}} f(\tau_{\varepsilon}+\tau^{\ast})\tau_{i}d\tau\right)}{\left(\int_{B_{1}(0)}f(\tau_{\varepsilon}+\tau^{\ast})d\tau\right)^{3}}\times\nonumber\\
&\hspace{1cm}\times\int_{B_{1}(0)}\partial_{\tau_{j}} f(\tau_{\varepsilon}+\tau^{\ast}) \tau_{j} d\tau.\nonumber
\end{align}

Evaluated at $\varepsilon=0$ and using the symmetry of the domain and the fact that we are integrating odd polynomials we have
\begin{align}
\partial^{2}_{\varepsilon_{i}\varepsilon_{j}}f^{\varepsilon}(\theta|\tau^{\ast})|_{\varepsilon=0}
&=\frac{\int_{B_{1}(0)} \partial^{2}_{\tau_{i}\tau_{j}}f(\tau^{\ast},\theta) \tau_{i}\tau_{j}d\tau|B_{1}(0)|f(\tau^{\ast})-
|B_{1}(0)| f(\tau^{\ast},\theta) \int_{B_{1}(0)}\partial^{2}_{\tau_{i}\tau_{j}} f(\tau^{\ast})\tau_{i}\tau_{j} d\tau}{\left(|B_{1}(0)|f(\tau^{\ast})\right)^{2}}\nonumber\\
&=\frac{\int_{B_{1}(0)}\tau_{i}\tau_{j}d\tau}{|B_{1}(0)|f(\tau^{\ast})}\left(\partial^{2}_{\tau_{i}\tau_{j}}f(\tau^{\ast},\theta) -
\frac{f(\tau^{\ast},\theta)}{f(\tau^{\ast})}\partial^{2}_{\tau_{i}\tau_{j}} f(\tau^{\ast}) \right)\nonumber\\
&=0.\label{Eq:HessianOffDiagonalDerivatives}
\end{align}

Then, by (\ref{Eq:REmainterm}) and (\ref{Eq:HessianDiagonalDerivatives})-(\ref{Eq:HessianOffDiagonalDerivatives}) we conclude the proof of the lemma.
\end{proof}

\begin{proof}[Proof of Corollary \ref{C:SpecialCalculationsNormalEx}]
  We compute
\begin{align}
&\text{Var}_{f(\theta|\tau^{\ast})}\left(\eta^{2}_{1}(\theta)\right)=\mathbb{E}_{f(\theta|\tau^{\ast})}
\eta^{4}_{1}(\theta)-\left(\mathbb{E}_{f(\theta|\tau^{\ast})} \eta^{2}_{1}(\theta)\right)^{2}\nonumber\\
&\quad=\int\mu^{4}\lambda^{4}f(\mu,\lambda|\tau)d\mu d\lambda-\left(\int\mu^{2}\lambda^{2}f(\mu,\lambda|\tau)d\mu d\lambda\right)^{2}\nonumber\\
&\quad=\int\lambda^{4}\left(\mu_{n}^{4}+6\mu_{n}^{2}\frac{1}{\kappa_{n}\lambda}+3\frac{1}{\kappa_{n}^{2}\lambda^{2}}\right)f_{G(\alpha_{n}, \beta_{n})} (\lambda) d\lambda-\left(\int\lambda^{2}\left(\mu_{n}^{2}+\frac{1}{\kappa_{n}\lambda}\right)f_{G(\alpha_{n}, \beta_{n})} (\lambda) d\lambda\right)^{2}\nonumber\\
&\quad=\int\left(\mu_{n}^{4}\lambda^{4}+\frac{6\mu_{n}^{2}}{\kappa_{n}}\lambda^{3}+\frac{3}{\kappa_{n}^{2}}\lambda^{2}\right)f_{G(\alpha_{n}, \beta_{n})} (\lambda) d\lambda-\left(\int\left( \mu_{n}^{2}\lambda^{2} +\frac{1}{\kappa_{n}}\lambda\right)f_{G(\alpha_{n}, \beta_{n})} (\lambda) d\lambda\right)^{2}\nonumber\\
&\quad=\mu_{n}^{4}\frac{\prod_{i=1}^{4}(\alpha_{n}+i-1)}{\beta_{n}^{4}}+\frac{6\mu_{n}^{2}}{\kappa_{n}}\frac{\prod_{i=1}^{3}(\alpha_{n}+i-1)}{\beta_{n}^{3}}+
\frac{3}{\kappa_{n}^{2}}\frac{(\alpha_{n}+1)\alpha_{n}}{\beta_{n}^{2}}
-\left( \mu_{n}^{2}\frac{(\alpha_{n}+1)\alpha_{n}}{\beta_{n}^{2}} +\frac{1}{\kappa_{n}}\frac{\alpha_{n}}{\beta_{n}}\right)^{2}.\nonumber%\label{Eq:NormalWeight1}
\end{align}

Similarly, we have
\begin{align}
\text{Var}_{f(\theta|\tau^{\ast})}\left(\eta^{2}_{2}(\theta)\right)&=\mathbb{E}_{f(\theta|\tau^{\ast})}
\eta^{4}_{2}(\theta)-\left(\mathbb{E}_{f(\theta|\tau^{\ast})} \eta^{2}_{2}(\theta)\right)^{2}\nonumber\\
&=\frac{1}{2^{4}}\left[\int\lambda^{4}f(\mu,\lambda|\tau)d\mu d\lambda-\left(\int\lambda^{2}f(\mu,\lambda|\tau)d\mu d\lambda\right)^{2}\right]\nonumber\\
&=\frac{1}{2^{4}}\left[\frac{\prod_{i=1}^{4}(\alpha_{n}+i-1)}{\beta_{n}^{4}}-\left(\frac{(\alpha_{n}+1)\alpha_{n}}{\beta_{n}^{2}}\right)^{2}\right].\nonumber%\label{Eq:NormalWeight2}
\end{align}

In addition,
\begin{align}
&\text{Cov}_{f(\theta|\tau^{\ast})}\left(\eta^{2}_{1}(\theta),\eta^{2}_{2}(\theta)\right)=\mathbb{E}_{f(\theta|\tau^{\ast})}
\eta^{2}_{1}(\theta)\eta^{2}_{2}(\theta)-\mathbb{E}_{f(\theta|\tau^{\ast})} \eta^{2}_{1}(\theta)\mathbb{E}_{f(\theta|\tau^{\ast})} \eta^{2}_{2}(\theta)\nonumber\\
&=\frac{1}{4}\left[\int\mu^{2}\lambda^{4}f(\mu,\lambda|\tau)d\mu d\lambda-\int\mu^{2}\lambda^{2}f(\mu,\lambda|\tau)d\mu d\lambda \int\lambda^{2}f(\mu,\lambda|\tau)d\mu d\lambda\right]\nonumber\\
&=\frac{1}{4}\left[\mu_{n}^{2}\frac{\prod_{i=1}^{4}(\alpha_{n}+i-1)}{\beta_{n}^{4}}+\frac{1}{\kappa_{n}}\frac{\prod_{i=1}^{3}(\alpha_{n}+i-1)}{\beta_{n}^{3}}-\mu_{n}^{2}\frac{\alpha_{n}^{2}(\alpha_{n}+1)^{2}}{\beta_{n}^{4}}-
\frac{1}{\kappa_{n}}\frac{\alpha_{n}^{2}(\alpha_{n}+1)}{\beta_{n}^{3}}\right]\nonumber\\
&=\frac{1}{4}\left[\mu_{n}^{2}\frac{\alpha_{n}(\alpha_{n}+1)(4\alpha_{n}+6)}{\beta_{n}^{4}}+\frac{2}{\kappa_{n}}\frac{\alpha_{n}(\alpha_{n}+1)}{\beta_{n}^{3}}\right],\nonumber
%\label{Eq:NormalWeight3}
\end{align}
concluding the proof of the Corollary.
\end{proof}

\begin{proof}[Proof of Corollary \ref{C:SpecialCalculationsRelEntropyNormalEx}]
Using Lemma \ref{L:RelEntropy_ExpFamilyGeneral} and Corollary \ref{C:SpecialCalculationsNormalEx} we have that
\begin{align}
&H(P|P^{\varepsilon})(\tau^{\ast})
= \frac{n^{4}}{128}\left[\sum_{i=1}^{2}\varepsilon_{i}^{4}\text{Var}_{f(\theta|\tau^{\ast})}\left(
\eta^{2}_{i}(\theta)\right)+ \varepsilon_{1}^{2}\varepsilon_{2}^{2}\text{Cov}_{f(\theta|\tau^{\ast})}\left(\eta^{2}_{1}(\theta),\eta^{2}_{2}(\theta)\right)\right]+\mathcal{O}(|\varepsilon|^{6}) \nonumber\\
&=\frac{n^{4}}{128}\varepsilon_{1}^{4}\left[\mu_{n}^{4}\frac{\prod_{i=1}^{4}(\alpha_{n}+i-1)}{\beta_{n}^{4}}+\frac{6\mu_{n}^{2}}{\kappa_{n}}\frac{\prod_{i=1}^{3}(\alpha_{n}+i-1)}{\beta_{n}^{3}}+
\frac{3}{\kappa_{n}^{2}}\frac{(\alpha_{n}+1)\alpha_{n}}{\beta_{n}^{2}}
-\left( \mu_{n}^{2}\frac{(\alpha_{n}+1)\alpha_{n}}{\beta_{n}^{2}} +\frac{1}{\kappa_{n}}\frac{\alpha_{n}}{\beta_{n}}\right)^{2}\right]\nonumber\\
&\qquad+\frac{n^{4}}{128}\varepsilon_{2}^{4}\frac{1}{2^{4}}\left[\frac{\prod_{i=1}^{4}(\alpha_{n}+i-1)}{\beta_{n}^{4}}-\left(\frac{(\alpha_{n}+1)\alpha_{n}}{\beta_{n}^{2}}\right)^{2}\right]\nonumber\\
&\qquad+
\frac{n^{4}}{128}\varepsilon_{1}^{2}\varepsilon_{2}^{2}\frac{1}{2}\left[\mu_{n}^{2}\frac{\alpha_{n}(\alpha_{n}+1)(4\alpha_{n}+6)}{\beta_{n}^{4}}+\frac{2}{\kappa_{n}}\frac{\alpha_{n}(\alpha_{n}+1)}{\beta_{n}^{3}}\right]+\mathcal{O}((n|\varepsilon|)^{6})\nonumber\\
&=\frac{n^{4}}{128}\varepsilon_{1}^{4}\left[\mu_{n}^{4}\frac{\alpha_{n}(\alpha_{n}+1)(4\alpha_{n}+6)}{\beta_{n}^{4}}+\frac{\mu_{n}^{2}}{\kappa_{n}}\frac{\alpha_{n}(\alpha_{n}+1)(5\alpha_{n}+12)}{\beta_{n}^{3}}+
\frac{1}{\kappa_{n}^{2}}\frac{(2\alpha_{n}+3)\alpha_{n}}{\beta_{n}^{2}}\right]\nonumber\\
&\qquad+\frac{n^{4}}{128}\varepsilon_{2}^{4}\frac{1}{2^{4}} \frac{\alpha_{n}(\alpha_{n}+1)(4\alpha_{n}+6)}{\beta_{n}^{4}}+ \frac{n^{4}}{128}\varepsilon_{1}^{2}\varepsilon_{2}^{2}\frac{1}{2}\left[\mu_{n}^{2}\frac{\alpha_{n}(\alpha_{n}+1)(4\alpha_{n}+6)}{\beta_{n}^{4}}+\frac{2}{\kappa_{n}}\frac{\alpha_{n}(\alpha_{n}+1)}{\beta_{n}^{3}}\right]+\mathcal{O}((n|\varepsilon|)^{6}),\nonumber%\label{Eq:RelEntropyNormal}
\end{align}
completing the proof of the Corollary.
\end{proof}

\begin{proof}[Proof of Corollary \ref{C:Bias_mean_normal}]
Recalling the form of $\eta_{1}(\theta)=\mu\lambda$ we have
\begin{align}
C_{1}(\tau^{\ast})&=\frac{n^{2}}{8}\mathbb{E}_{f(\theta|\tau^{\ast})}\left[ \mu \left(\mu^{2}\lambda^{2}-\mathbb{E}_{f(\theta|\tau^{\ast})} (\mu^{2}\lambda^{2})\right)\right]\nonumber\\
&=\frac{n^{2}}{8}\left[\mathbb{E}_{f(\theta|\tau^{\ast})}\left( \mu^{3}\lambda^{2}\right)-\mathbb{E}_{f(\theta|\tau^{\ast})}(\mu)\mathbb{E}_{f(\theta|\tau^{\ast})} (\mu^{2}\lambda^{2})\right]\nonumber\\
&=\frac{n^{2}}{8}\left[\int\lambda^{2}\left(\mu_{n}^{3}+3\frac{\mu_{n}}{\kappa_{n}\lambda}\right)f_{G(\alpha_{n}, \beta_{n})} (\lambda) d\lambda-\mu_{n} \left(\int\lambda^{2}\left(\mu_{n}^{2}+\frac{1}{\kappa_{n}\lambda}\right)f_{G(\alpha_{n}, \beta_{n})} (\lambda) d\lambda\right)\right]\nonumber\\
&=\frac{n^{2}}{8}\left[ 2\frac{\mu_{n}}{\kappa_{n}}\int\lambda f_{G(\alpha_{n}, \beta_{n})} (\lambda) d\lambda\right]\nonumber\\
&=\frac{n^{2}}{4}\frac{\mu_{n}}{\kappa_{n}}\frac{\alpha_{n}}{\beta_{n}}.\nonumber%\label{Eq:Bias2normal_1}
\end{align}

%Recalling the behavior of the $\mu_{n}, \kappa_{n}, \alpha_{n}, \beta_{n}$ as a function of $n$ for large $n$, we then obtain that for $n$ large
%\begin{align}
%C_{1}(\tau^{\ast})&\approx
%\frac{n^{2}}{4}\left(\frac{\mu_{0}}{n^{2}}+\frac{\bar{X}}{n}\right)\frac{1}{S^{2}}.\nonumber
%\end{align}

Similar computations and recalling that $\eta_{2}(\theta)=-\frac{\lambda}{2}$ give
\begin{align}
C_{2}(\tau^{\ast})&=\frac{n^{2}}{32}\left[\mathbb{E}_{f(\theta|\tau^{\ast})} \left(\mu \lambda^{2}\right)-\mathbb{E}_{f(\theta|\tau^{\ast})} (\mu)\mathbb{E}_{f(\theta|\tau^{\ast})} (\lambda^{2})\right]\nonumber\\
&=\frac{n^{2}}{32}\left[\int\left(\mu_{n}\lambda^{2}f_{G(\alpha_{n}, \beta_{n})} (\lambda)\right) d\lambda
-\mu_{n}\int\left(\lambda^{2}f_{G(\alpha_{n}, \beta_{n})} (\lambda) \right) d\lambda\right]\nonumber\\
&=0.\nonumber
\end{align}
\end{proof}

\begin{proof}[Proof of Corollary \ref{C:Bias_variance_normal}]
Recalling the form of $\eta_{1}(\theta)=\mu\lambda$ we have
\begin{align}
C_{1}(\tau^{\ast})&=\frac{n^{2}}{8}\mathbb{E}_{f(\theta|\tau^{\ast})}\left[ \frac{1}{\lambda} \left(\mu^{2}\lambda^{2}-\mathbb{E}_{f(\theta|\tau^{\ast})} (\mu^{2}\lambda^{2})\right)\right]\nonumber\\
&=\frac{n^{2}}{8}\left[\mathbb{E}_{f(\theta|\tau^{\ast})}\left( \mu^{2}\lambda\right)-\mathbb{E}_{f(\theta|\tau^{\ast})}\left(\frac{1}{\lambda}\right)\mathbb{E}_{f(\theta|\tau^{\ast})} (\mu^{2}\lambda^{2})\right]\nonumber\\
&=\frac{n^{2}}{8}\left[\int\lambda\left(\mu_{n}^{2}+\frac{1}{\kappa_{n}\lambda}\right)f_{G(\alpha_{n}, \beta_{n})} (\lambda) d\lambda-\frac{\beta_{n}}{\alpha_n-1} \left(\int\lambda^{2}\left(\mu_{n}^{2}+\frac{1}{\kappa_{n}\lambda}\right)f_{G(\alpha_{n}, \beta_{n})} (\lambda) d\lambda\right)\right]\nonumber\\
&=\frac{n^{2}}{8}\left[ \mu^{2}_{n}\frac{\alpha_n}{\beta_{n}}+\frac{1}{\kappa_{n}}-\mu_{n}^{2}\frac{\alpha_{n}(\alpha_{n}+1)}{\beta_{n}(\alpha_{n}-1)}-\frac{1}{\kappa_{n}}\frac{\alpha_{n}}{\alpha_{n}-1}\right]\nonumber\\
&=\frac{n^{2}}{8}\left[ -2 \mu^{2}_{n}\frac{\alpha_n}{\beta_{n}}\frac{1}{\alpha_{n}-1}-\frac{1}{\kappa_{n}}\frac{1}{\alpha_{n}-1}\right].\nonumber %\label{Eq:Bias2normal_1var}
\end{align}

%Recalling, the asymptotic behavior of $\mu_{n}, \kappa_{n}, \alpha_{n}, \beta_{n}$ as a function of $n$ for large $n$, we then obtain that for $n$ large
%\begin{align}
%C_{1}(\tau^{\ast})&\approx \frac{n^{2}}{8}\left[\left(\frac{\mu_{0}}{n}+\bar{X}\right)^{2}\frac{1}{S^{2}}\frac{-2}{n}-\frac{1}{n^{2}}\right]\approx\frac{n}{8}\left[\left(\frac{\mu_{0}}{n}+\bar{X}\right)^{2}\frac{-2}{S^{2}}-\frac{1}{n}\right].\nonumber
%\end{align}

Similarly, we have, for $n$ large
\begin{align}
C_{2}(\tau^{\ast})&=\frac{n^{2}}{32}\left[\mathbb{E}_{f(\theta|\tau^{\ast})} \left( \lambda\right)-\mathbb{E}_{f(\theta|\tau^{\ast})} \left(\frac{1}{\lambda}\right)\mathbb{E}_{f(\theta|\tau^{\ast})} (\lambda^{2})\right]\nonumber\\
&=\frac{n^{2}}{32}\frac{\alpha_{n}}{\beta_{n}}\left[1-\frac{\alpha_{n}+1}{\alpha_{n}-1}\right]\nonumber\\
&=\frac{n^{2}}{32}\frac{\alpha_{n}}{\beta_{n}}\frac{-2}{\alpha_{n}-1}.\nonumber%\\
%&\approx -\frac{1}{16}\frac{n}{S^{2}}.\nonumber
\end{align}

Therefore, we get that the bias of $\sigma^{2}=\frac{1}{\lambda}$ behaves for $n$ large enough and $\varepsilon$ small enough as follows to leading order
\begin{align}
\text{bias}(\hat{\sigma^{2}})&=\varepsilon_{1}^{2}C_{1}(\tau^{*})+\varepsilon_{2}^{2}C_{2}(\tau^{*})+\mathcal{O}((n|\varepsilon|)^{3})\nonumber\\
&\approx \varepsilon_{1}^{2}\frac{n}{8}\left[-2\left(\frac{\mu_{0}}{n}+\bar{X}\right)^{2}\frac{1}{S^{2}}-\frac{1}{n}\right]+\varepsilon_{2}^{2}\left[-\frac{1}{16}\frac{n}{S^{2}}\right]+\mathcal{O}((n|\varepsilon|)^{3}).\nonumber
\end{align}
\end{proof}

\end{document}